\newtheorem{theorem}{Theorem}
\newtheorem{lemma}{Lemma}
\newtheorem{corollary}{Corollary}
\newtheorem{remark}{Remark}
\begin{document}

\begin{frontmatter}

\title{Analysis on the Pricing model for a Discrete Coupon Bond with Early redemption provision by the Structural Approach}
\author{Hyong-Chol O ${}^{1}$, Tae-Song Kim${}^{2}$}
\ead{hc.o@ryongnamsan.edu.kp}
\address{Faculty of Mathematics, \textbf{Kim Il Sung} University, Pyongyang, D.P.R.Korea}

\begin{abstract}
In this paper, using the structural approach is derived a mathematical model of the discrete coupon bond with the provision that allows the holder to demand early redemption at any coupon dates prior to the maturity and based on this model is provided some analysis including min-max and gradient estimates of the bond price. Using these estimates the existence and uniqueness of the default boundaries and some relationships between the design parameters of the discrete coupon bond with early redemption provision are described. Then under some assumptions the existence and uniqueness of the early redemption boundaries is proved and the analytic formula of the bond price is provided using higher binary options. Finally for our bond is provided the analysis on the duration and credit spread, which are used widely in financial reality. Our works provide a design guide of the discrete coupon bond with the early redemption provision.
\end{abstract}

\begin{keyword}
corporate bond, structural approach, coupon, early redemption
\end{keyword}
\end{frontmatter}

\section{Introduction}
Issueing bond is a kind of financing methods of firms and among firm bonds there is a bond with the provision under which the bond holder can demand early redemption prior to the maturity. This is a kind of defaultable corporate bonds and the pricing problem of defaultable corporate bonds is one of the most promising areas in financial mathematics \cite{Agl}.

It is well known that there are two main approaches among methods to price defaultable corporate bonds: one is the structural approach and another one is the reduced form approach. In the structural approach, it is thought that the default events occur when the firm value is not sufficient to repay debt, that is, the firm value attains a certain lower threshold(\textit{default barrier }or\textit{ default boundary}) from the above \cite{{OKRJ},{LS}}. In the reduced form approach, they think that it is possible for the default event to  occur at any time and the default event is an unpredictable event without any relation to the firm value. In the reduced form approach, if the default probability in time interval $[t,\; t+\Delta t]$ is $\lambda \Delta t$, then $\lambda $ is called \textit{default intensity }or \textit{hazard rate} \cite{{DS},{Jar}}. The third approach is to unify the structural and reduced form approaches \cite{{BP},{CL},{MU1},{MU2}}. As for the history of the above approaches and their advantages and shortcomings, readers can refer to the introductions of \cite{{BP},{CL}}.

The related information such as default barrier and default intensity is related to the internal business information of companies and the structural and reduced form approaches can be used to design the corporate bonds.

In reality it is very hard for investors out of the company to get the information of the company in the whole life time interval of the bond. They might probably know only the discretely (for example, every year or every three months etc.) declared informations. Hence the modelling of corporate bonds using only the discrete default information was proposed with the purpose of making the study of credit risk close to the financial reality. In this direction, \cite{{OKRJ},{OKK}} gives some results of zero coupon bonds using higher binary options(\cite{OK}).

There have been many studies of theoretical modelling of the price of zero coupon bonds which are originated in \cite{Mer}, whereas studies of realistic payout structure providing fixed discrete coupons are relatively less\cite{Agl}.  Geske\cite{Ges} studied this problem at first, where the author models discrete interest payouts prior to maturity as determinants of default risk. Agliardi\cite{Agl} generalized the formula of \cite{Ges} for defaultable coupon bonds and studied a stochastic risk free term structure and the effects of bankruptcy cost and government taxes on bond interest and calculated the duration of defaultable bonds. Agliardi's approach in \cite{Agl} is a kind of structural approach.

\cite{{OKP},{OJKJ}} studied the problem of generalizing the structural model of \cite{Agl} into the comprehensive unified model of structural and reduced form approaches. \cite{OKP} obtained the pricing formula of the corporate bond with discounted discrete coupon in unified two-factor model of structural and reduced form approaches. \cite{OJKJ} obtained the pricing formula of the corporate bond with fixed discrete coupon in unified one-factor model of structural and reduced form approaches.

In \cite{{Agl},{OKP},{OJKJ}}, they studied the discrete coupon bonds without the early redemption provision. However, many firms issue and use discrete coupon bonds with the provision that allow the holder to demand early redemption prior to the maturity.

Generally speaking, discrete coupon bonds with the provision that allow the holder to demand early redemption prior to the maturity are included in the class of puttable bonds (or bond options)\cite{{BM},{Hull}} and widely used in many companies but it seems difficult to find studies on the their concrete pricing models and price estimates, there are only some works on general pricing bond option on zero coupon bonds\cite{{Jam},{ZJEJ},{ZS}}. 

In this paper we derive a PDE model for the price of a discrete coupon bond with the provision that allows the holder to demand early redemption at any coupon dates prior to the maturity, and based on the financial analysis on the relationships between the design parameters of the bond we prove the existence and uniqueness of the default and early redemption boundaries. Then we give the analytic pricing formula of the bond using higher binary options, and some applications including the analysis on duration and credit spread. Our works provide some design guide of the discrete coupon bond with the 
early redemption provision.

The remainder of the article is organized as follows. In Section 2 we consider the pricing model of a discrete coupon bond with the early redemption provision. In Section 3 we prove the existence and uniqueness of the default boundaries and describe some relationships between the design parameters of the discrete coupon bond with early redemption provision. In Section 4 we prove the existence and uniqueness of the early redemption boundaries and Section 5 gives the analytic pricing formula of the bond. Section 6 provides some applications including the analysis on duration and credit spread.

\section{The mathematical model of the bond price}
\textbf{Assumptions}

1) The short rate $r$ is a constant. Then the price of default free zero coupon bond with maturity $T$ and face value 1 is
\[Z(t;\;T)=e^{-r(T-t)} .\]

2) The firm value process $V(t)$ follows a geometric Brownian motion
\[dV(t)=(r-b)V(t)dt+s_{V} V(t)dW(t)\]
under the risk neutral martingale measure. Here $b\ge 0$ is a constant.

3) Let $0=T_{0} <T_{1} <\cdots <T_{N-1} <T_{N} =T$ and let $T$ be the maturity of our corporate bond with face value $F$(unit of currency). At time $T_{i} \; (i=1,\cdots , N-1)$, the bond holder receives the coupon of quantity $C_{i} $(unit of currency) from the firm and at time $T_{N} =T$, the bond holder receives the face value $F$ and the last coupon $C_{N} $(unit of currency). 

4) The bond holder has the right to demand early redemption at any coupon dates $T_{i} ,(i=1, \cdots , N-1)$ prior to the maturity. If the bond holder demand early redemption, the firm does not pay the coupon of the day and bond holder receives the face value deducted the coupons he had already received. That is, if the bond holder demand early redemption at $T_{i} $, the bond holder receives $F-\sum _{j=1}^{i-1}C_{j}  $(unit of currency)

5) The default occurs only at time $T_{i} $ when the firm value is not sufficient to pay the debt and the coupon or the early redemptive money. If the default occurs, the bond holder receives $\delta \cdot V$ as default recovery. Here ${\rm 0}\le \delta < {\rm 1}$ is called a fractional \textit{default recovery}.

6) In the subinterval $(T_{i},T_{i+1} ]$, the prices of our corporate bond are given by sufficiently smooth functions $B_{i} (V,\; t),\; (i=0, \cdots , N-1)$, respectively.

\textbf{Mathematical model for bond price}

We will use the following notations for simplicity:
\[\bar{c}_{N} =F+C_{N} ;\; \; \bar{c}_{i} =C_{i} ,\; \; i=1,\cdots , N-1.\] 
First we analyse the default event. If the default event occurs or the bond holder demands early redemption before $T_{i} $ or at $T_{i} \; (i=1, \cdots , N-1)$, the bond contract doesn't exist on the interval $T_{i} <t\le T_{i+1} $. Hence $B_{i} (V,\; t){\rm ,}\; \; T_{i} <t\le T_{i+1} $ is the bond(debt) price on the interval $(T_{i} ,T_{i+1} ]$ under the \textit{condition} that the bond holder doesn't demand early redemption and the default event doesn't occur at $T_{i} $ or before $T_{i} $. Therefore, the fact that the default event doesn't occur at $T_{i} \; (i=1, \cdots , N-1)$ means that the firm value is not smaller than $B_{i} (V,\; T_{i} )$ after paying the coupon $\bar{c}_{i} $, that is,
$V\ge B_{i} (V,\; T_{i} )+\bar{c}_{i}$. 
And the fact that the default event doesn't occur at $T_{i} \; (i=1, \cdots , N-1)$ means that the firm value is enough to pay the early redemptive money to the bond holder at $T_{i} $. From the assumption 4), if the bond holder demands early redemtion at $T_{i} $, the firm gives the bond holder the face value deducted the coupons $\bar{c}_{j} \; (j=1, \cdots , i-1)$ that he had already received, namely, $F-\sum _{j=1}^{i-1}\bar{c}_{j}  $(unit of currency).
\noindent Thus the fact that the default event doesn't occur means that
$V\ge F-\sum _{j=1}^{i-1}\bar{c}_{j}$. 
On the whole, the fact that the default event doesn't occur at $T_{i} $ means that
\begin{equation} \label{eq1}
V\ge \max \left\{F-\sum_{j=1}^{i-1}\bar{c}_{j},\; \; B_{i} (V,\; T_{i} )+\bar{c}_{i} \right\}.
\end{equation}
On the other hand, the fact that the default event occurs at $T_{i} $ means that
\begin{equation} \label{eq2}
V<\max \left\{F-\sum _{j=1}^{i-1}\bar{c}_{j}  ,\; \; B_{i} (V,\; T_{i} )+\bar{c}_{i} \right\}.
\end{equation} 

Next, we analyse whether it is advantageous for the bond holder to keep the bond contract or to demand early redemption at $T_{i} $. If the bond holder demands early redemption at $T_{i} $, then the holder receives $F-\sum _{j=1}^{i-1}\bar{c}_{j}  $(unit of currency), whereas if the bond holder keeps the contract, then the holder receives coupon $\bar{c}_{i} $(unit of currency) and also possesses the bond with the value of $B_{i} (V,\; t)$ after $T_{i} $. Thus, at time $T_{i} $, the bond holder compares $B_{i} (V,\; T_{i} )+\bar{c}_{i} $ with $F-\sum _{j=1}^{i-1}\bar{c}_{j} $ and if $F-\sum _{j=1}^{i-1}\bar{c}_{j} $ is larger, then the holder will demand early redemption immediately but if $B_{i} (V,\; T_{i} )+\bar{c}_{i} $ is larger, then the holder will keep the contract. As a result, it is reasonable to think that the bond holder compares the proposal of keeping the contract with the proposal of demanding early redemption and then choose the better proposal. Therefore under the assumption that the default event didn't occur and the holder didn't demand early redemption at the coupon dates prior to $T_{i} $, if the default event doesn't occur at $T_{i} $, then the bond price at $T_{i} $ is
\[\max \left\{F-\sum _{j=1}^{i-1}\bar{c}_{j}  ,\; \; B_{i} (V,\; T_{i} )+\bar{c}_{i} \right\},\] 
and if the default event occurs at $T_{i} $, then the bond holder receives $\delta V$ as default recovery by the assumption 5). Hence the bond price at $T_{i}\;\left(i=1, \cdots ,N-1\right)$ is as follows:
\[\text{if}\;V\ge \max \left\{F-\sum _{j=1}^{i-1}\bar{c}_{j}  ,\; \; B_{i} (V,\; T_{i} )+\bar{c}_{i} \right\}, \text{then}\; \max \left\{F-\sum _{j=1}^{i-1}\bar{c}_{j}  ,\; \; B_{i} (V,\; T_{i} )+\bar{c}_{i} \right\},\]
\begin{equation}\label{eq3}
\text{if} \; V<\max \left\{F-\sum _{j=1}^{i-1}\bar{c}_{j}  ,\; \; B_{i} (V,\; T_{i} )+\bar{c}_{i} \right\}, \text{then}\; \delta V.
\end{equation}
\noindent In particular the bond price at the maturity $T_{N} =T$ is
\begin{equation} \label{eq4} 
B_{N-1} (V,\; T_{N} )=\bar{c}_{N} \cdot 1\{ V\ge \bar{c}_{N} \} +\delta V\cdot 1\{ V<\bar{c}_{N} \}.
\end{equation} 

From the assumptions 1), 2), 6),  it follows that the bond price $B_{i} $ on the subinterval $(T_{i} ,T_{i+1} )\;  (i={\rm 0},\cdots , N-1)$ satisfies the following PDE (this is derived in the standard way)\cite{Wil}:
\begin{equation} \label{eq5} 
\frac{\partial B_{i} }{\partial t} +\frac{s_{V}^{2} V^{2} }{2} \frac{\partial ^{2} B_{i} }{\partial V^{2} } +(r-b)V\frac{\partial B_{i} }{\partial V} -rB_{i} =0,\; \; T_{i} <t<T_{i+1} ,\; \; V>0. 
\end{equation}

From \eqref{eq1}-\eqref{eq4}, we get terminal conditions of the bond price.

\begin{align}
B_{N-1} (V,\; T_{N} )&=\bar{c}_{N} \cdot 1\{ V\ge \bar{c}_{N} \} +\delta V\cdot 1\{ V<\bar{c}_{N} \} ,\; \; \; V>0; \label{eq6} \\ 
B_{i} (V,\; T_{i+1} )&=\max \left\{B_{i+1} (V,\; T_{i+1} )+\bar{c}_{i+1},\; \; F-\sum _{j=1}^{i}\bar{c}_{j}  \right\}\cdot 1\left\{V\ge \max \left\{B_{i+1} (V,\; T_{i+1} )+\bar{c}_{i+1},\; \; F-\sum _{j=1}^{i}\bar{c}_{j}  \right\}\right\} \nonumber \\
&+\delta V\cdot 1\left\{V<\max \left\{B_{i+1} (V,\; T_{i+1} )+\bar{c}_{i+1} ,\; \; F-\sum _{j=1}^{i}\bar{c}_{j}  \right\}\right\}\; ,\; \; \; \; \; \; \; \; \; V>0,\; \; \; i=0,\cdots ,N-2. \label{eq7}
\end{align}

So our model of the bond price is \eqref{eq5}, \eqref{eq6}, \eqref{eq7}, that is, we must find $B_{i} $ satisfying \eqref{eq5}${}_{i=N-1} $ and \eqref{eq6} on the interval $T_{N-1} <t\le T_{N} $ and \eqref{eq5} and \eqref{eq7} on the interval $T_{i} <t\le T_{i+1} $ ($i=0,\cdots ,N-2$), respectively.

This problem on the interval $T_{N-1} <t\le T_{N} $ is just the same one as in \cite{OJKJ}. But on the interval $T_{i} <t\le T_{i+1} $ \; ($i=0,\cdots ,N-2$), the bond holder compares the proposal of keeping the bond with the proposal of demanding early redemption at $T_{i+1} $ and makes a decision, thus we must first find the early redemption boundary and the problem becomes American option-like pricing problem, or more precisely, Bermudan option-like pricing problem (at pages 193 of \cite{Jia} and 253$\mathrm{\sim}$255 of \cite{OK}).

\section{The existence and uniqueness of the default boundaries}
\noindent The following notations are used:
\begin{align}
f_{N-1}(V)&=\bar{c}_{N} \cdot 1\{ V\ge \bar{c}_{N} \} +\delta V\cdot 1\{ V<\bar{c}_{N} \},\label{eq8}\\
f_{i}(V)&=\max \left\{B_{i+1} (V,\; T_{i+1} )+\bar{c}_{i+1},\; \; F-\sum _{j=1}^{i}\bar{c}_{j}  \right\}\cdot 1\left\{V\ge \max \left\{B_{i+1} (V,\; T_{i+1} )+\bar{c}_{i+1},\; \; F-\sum _{j=1}^{i}\bar{c}_{j}  \right\}\right\} \nonumber \\
&+\delta V\cdot 1\left\{V<\max \left\{B_{i+1} (V,\; T_{i+1} )+\bar{c}_{i+1},\; \; F-\sum _{j=1}^{i}\bar{c}_{j}  \right\}\right\}\; ,\; \; \; \; V>0,\; \; \; i=0,1,\cdots ,N-2.\label{eq9}
\end{align}
The supremum and infimum of  the function $f$ defined on interval $[0,\; +\infty )$ are denoted by $M(f),\; m(f)$, repectively.

First consider the case when $i=N-1$. The bond price $B_{N-1} (V,t)$ on the interval $T_{N-1} <t\le T_{N} $ is the solution of the following problem:
\[\frac{\partial B_{N-1} }{\partial t} +\frac{s_{V}^{2} V^{2} }{2} \frac{\partial ^{2} B_{N-1} }{\partial V^{2} } +(r-b)V\frac{\partial B_{N-1} }{\partial V} -rB_{N-1} =0,\; \; T_{N-1} <t<T_{N} ,\; \; V>0,\] 
\[B_{N-1} (V,T_{N} )=\bar{c}_{N} \cdot 1\{ V\ge \bar{c}_{N} \} +\delta V\cdot 1\{ V<\bar{c}_{N} \} ,\; \; \; V>0.\] 
This is a terminal value problem for Black-Scholes equation with interest rate $r$, dividend rate $b$ and volatility $s_{V} $. By the terminal condition, $D_{N} =\bar{c}_{N} $ is the default boundary at $T_{N} $.

By the pricing formula of the first order binary option \cite{{Buc},{OK}} we have
\begin{equation} \label{eq10} 
B_{N-1} (V,\; t)=\bar{c}_{N} B_{\bar{c}_{N} }^{+} (V,\; t;\; T_{N} ;\; r,\; b,\; s_{V} )+\delta A_{\bar{c}_{N} }^{-} (V,\; t;\; T_{N} ;\; r,\; b,\; s_{V} ).         
\end{equation} 
Here $B_{K}^{+} (x, t; T; r, b, s_{V} ),\; \; A_{K}^{-} (x, t; T; r, b, s_{V} )$ is the price at $t$ of the bond and asset binary options with maturity $T$, exercise price $K$, interest rate $r$, dividend rate $b$ and volatility $s_{V} $, respectively \cite{OK}.

\begin{theorem}[The gradient estimates and the existence and uniqueness of the default boundaries] \label{theorem:1}
Assume that the volatility $s_{V} $ is enough large, that is, there exists a sequence $\delta =d_{N} <d_{N-1} <\; \cdots <d_{1} <1$ such that
\[s_{V} \ge \frac{(1-\delta )}{\sqrt{2\pi \cdot {\rm (}T_{i+1} -T_{i} {\rm )}} (d_{i} -d_{i+1} )}\;\;  \text{if} \;\;b=0;\]
\[s_{V} \ge \frac{(1-\delta )e^{-b(T_{i+1} -T_{i} )} }{\sqrt{2\pi \cdot {\rm (}T_{i+1} -T_{i} {\rm )}} (d_{i} -d_{i+1} e^{-b(T_{i+1} -T_{i} )} )}\;\; \text{if} \;\;b>0,i=1,\cdots ,N-1.\]
Then for the solution $B_{i} (V,t),\; i=1,\cdots ,N-1$ of \eqref{eq5}, \eqref{eq6}, \eqref{eq7}, we have
\begin{equation} \label{eq11} 
0<\partial _{V} B_{i} (V,\; T_{i} )<d_{i} <1 
\end{equation} 
and the equation
\[V=\max \left\{F-\sum _{j=1}^{i-1}\bar{c}_{j}  ,\; \; B_{i} (V,T_{i} )+\bar{c}_{i} \right\}\] 
has unique root $D_{i} $ and we have
\begin{equation} \label{eq12}
V\ge \max \left\{B_{i} (V,T_{i} )+\bar{c}_{i},\; \; F-\sum _{j=1}^{i-1}\bar{c}_{j}  \right\}\Leftrightarrow V\ge D_{i} .
\end{equation}
\end{theorem}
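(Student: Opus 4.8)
The plan is to prove both assertions simultaneously by backward induction on $i$, running from $i=N-1$ down to $i=1$. The anchor is $i=N-1$, where the closed form \eqref{eq10} is available and $D_N=\bar c_N$ is already known from the terminal condition. At each step I carry the pair of statements: (a) the gradient estimate \eqref{eq11}, and (b) the existence, uniqueness and screening property \eqref{eq12} of $D_i$. The key structural point is that (b) at level $i+1$ is exactly what is needed to analyse the terminal datum $f_i$ at level $i$. Throughout I use the risk--neutral representation $B_i(V,T_i)=e^{-r\tau_i}\,\mathbb{E}[f_i(V_{T_{i+1}})\mid V_{T_i}=V]$ with $\tau_i=T_{i+1}-T_i$ and $V_{T_{i+1}}=V\xi$, where $\xi$ is lognormal and independent of $V$, so that $\partial_V B_i(V,T_i)=e^{-r\tau_i}\mathbb{E}[f_i'(V_{T_{i+1}})V_{T_{i+1}}/V]$ is legitimate for $\tau_i>0$.

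The core is the gradient bound (a). Writing $A=F-\sum_{j=1}^{i}\bar c_j$ and $g_{i+1}(V)=\max\{B_{i+1}(V,T_{i+1})+\bar c_{i+1},\,A\}$, the inductive hypothesis \eqref{eq12} at level $i+1$ gives $\{V\ge g_{i+1}(V)\}=\{V\ge D_{i+1}\}$, so \eqref{eq9} collapses to $f_i(V)=\delta V$ for $V<D_{i+1}$ and $f_i(V)=g_{i+1}(V)$ for $V\ge D_{i+1}$, with a single upward jump of size $(1-\delta)D_{i+1}$ at $D_{i+1}$ and $g_{i+1}$ continuous (the smoothness of $B_{i+1}(\cdot,T_{i+1})$ forbids any further jump). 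I then split $f_i=\tilde f_i+(1-\delta)D_{i+1}\cdot 1\{V\ge D_{i+1}\}$, so that $B_i=\tilde B_i+(1-\delta)D_{i+1}\,B^+_{D_{i+1}}$. The continuous part $\tilde f_i$ is nondecreasing with pointwise slope equal to $\delta$ or to $g_{i+1}'$, both strictly below $d_{i+1}$ by the inductive bound $0<\partial_V B_{i+1}(\cdot,T_{i+1})<d_{i+1}$; hence $0\le\partial_V\tilde B_i(V,T_i)<d_{i+1}e^{-b\tau_i}$. For the jump piece I differentiate the cash--or--nothing price and use the identity $K e^{-r\tau}N'(d_2)=V e^{-b\tau}N'(d_1)$ (with $N$ the standard normal c.d.f., $d_{1,2}$ the usual Black--Scholes arguments and strike $K=D_{i+1}$) to obtain $(1-\delta)D_{i+1}\partial_V B^+_{D_{i+1}}=(1-\delta)e^{-b\tau_i}N'(d_1)/(s_V\sqrt{\tau_i})\le(1-\delta)e^{-b\tau_i}/(s_V\sqrt{2\pi\tau_i})$. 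Adding the two estimates and inserting the hypothesis on $s_V$ gives $\partial_V B_i(V,T_i)<d_i$ in both cases $b=0$ and $b>0$, exactly matching the two displayed inequalities; positivity of the jump term together with monotonicity of $f_i$ gives $\partial_V B_i>0$. For $i=N-1$ this is precisely the differentiation of \eqref{eq10}, with $\tilde f_{N-1}$ of slope $\le\delta=d_N$ and jump at $D_N=\bar c_N$, which furnishes the base case.

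Given (a) at level $i$, assertion (b) is elementary. Set $h_i(V)=V-\max\{A',\,B_i(V,T_i)+\bar c_i\}$ with $A'=F-\sum_{j=1}^{i-1}\bar c_j$. Since $\max\{A',\,B_i(\cdot,T_i)+\bar c_i\}$ is nondecreasing with slope at most $\partial_V B_i<d_i<1$, the function $h_i$ is continuous and strictly increasing with $h_i'\ge 1-d_i>0$; moreover $B_i(0^+,T_i)=0$ gives $h_i(0^+)=-\max\{A',\bar c_i\}<0$ (as $\bar c_i>0$), while $h_i'\ge 1-d_i>0$ forces $h_i(V)\to+\infty$. Hence $h_i$ has a unique zero $D_i$ and $h_i(V)\ge0\Leftrightarrow V\ge D_i$, which is \eqref{eq12}; this also closes the induction by supplying $D_i$ for the next step.

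The main obstacle is the gradient step (a). Two points require care: first, justifying the differentiation of the nonsmooth, discontinuous payoff $f_i$ — handled by the decomposition $B_i=\tilde B_i+(1-\delta)D_{i+1}B^+_{D_{i+1}}$ into a Lipschitz part and a binary, each differentiable for $\tau_i>0$; and second, and more essentially, establishing that $f_i$ has \emph{exactly one} jump. The latter is not apparent from \eqref{eq9} and is precisely where the inductive screening property \eqref{eq12} at level $i+1$ (locating the single threshold $D_{i+1}$) and the smoothness of $B_{i+1}(\cdot,T_{i+1})$ are indispensable. The remaining delicate estimate is making the Greek bound $D_{i+1}\partial_V B^+_{D_{i+1}}\le e^{-b\tau_i}/(s_V\sqrt{2\pi\tau_i})$ uniform in $V$ via $N'(d_1)\le 1/\sqrt{2\pi}$, which is exactly what dictates the precise lower bounds on $s_V$ in the statement.
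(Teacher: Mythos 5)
Your proof is correct, and its skeleton is the same as the paper's: backward induction from $i=N-1$, carrying the gradient bound \eqref{eq11} and the screening property \eqref{eq12} together, with \eqref{eq12} at level $i+1$ used to show the terminal datum $f_i$ has exactly one jump, of size $(1-\delta)D_{i+1}$ at $D_{i+1}$; your argument for part (b) (continuity, slope of the max strictly below $1$, hence a unique crossing with the diagonal) is essentially verbatim the paper's. The genuine difference is in how the gradient bound is obtained. The paper imports it as a black box: it cites Theorem 4 of the reference [OJK] (a general two-sided estimate for $\partial_V$ of Black--Scholes solutions with discontinuous payoffs) and merely computes $M(f_i')=d_{i+1}$, $m(f_i')=0$ and the jump size. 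You instead reprove that estimate from scratch, via the decomposition $f_i=\tilde f_i+(1-\delta)D_{i+1}\cdot 1\{V\ge D_{i+1}\}$, i.e.\ $B_i=\tilde B_i+(1-\delta)D_{i+1}B^{+}_{D_{i+1}}$, bounding the Lipschitz part by $d_{i+1}e^{-b\tau_i}$ through differentiation under the expectation, and the digital delta by $e^{-b\tau_i}/(s_V\sqrt{2\pi\tau_i})$ through the identity $K e^{-r\tau}N'(d_2)=Ve^{-b\tau}N'(d_1)$ and $N'\le 1/\sqrt{2\pi}$. This buys self-containedness and makes completely transparent why the stated lower bounds on $s_V$ are exactly what is needed (they are the Gaussian-density sup); the paper's route is shorter and leans on a result that also covers more general piecewise-$C^1$ payoffs. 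Your version also makes explicit a point the paper passes over silently: that $g_{i+1}(D_{i+1})=D_{i+1}$, so the jump of $f_i$ is precisely $(1-\delta)D_{i+1}$ and the residual $\tilde f_i$ is continuous. One cosmetic caution: your opening formula $\partial_V B_i=e^{-r\tau_i}\,\mathbb{E}\bigl[f_i'(V_{T_{i+1}})V_{T_{i+1}}/V\bigr]$ is not literally valid for the discontinuous $f_i$ (it would drop the jump contribution); since you only ever apply it to the Lipschitz part $\tilde f_i$ and treat the digital in closed form, no harm is done, but the sentence should be restricted to $\tilde f_i$.
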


\begin{proof}
We use induction. First when $i=N-1$, we consider properties of $B_{N-1} (V,t)$. From \eqref{eq8}, the terminal payoff $f_{N-1} (V)$ is an discontinuous function with jump $\Delta f_{N-1} (\bar{c}_{N} )=(1-\delta )\bar{c}_{N} $ at $V=\bar{c}_{N} $. Now estimate $\partial _{V} B_{N-1} $. From Theorem 4 of \cite{OJK}, we have
\[m(f'_{N-1} )e^{-b(T_{N} -T_{N-1} )} +\frac{[\Delta f_{N-1} (\bar{c}_{N} )]^{-} }{\bar{c}_{N} } \cdot \frac{e^{-b(T_{N} -T_{N-1} )} }{\sqrt{2\pi \cdot s_{V}^{2} {\rm (}T_{N} -T_{N-1} {\rm )}} } <\partial _{V} B_{N-1} (V,T_{N-1} )<\]
\[<M(f'_{N-1} )e^{-b(T_{N} -T_{N-1} )} +\frac{[\Delta f_{N-1} (\bar{c}_{N} )]^{+} }{\bar{c}_{N} } \cdot \frac{e^{-b(T_{N} -T_{N-1} )} }{\sqrt{2\pi \cdot s_{V}^{2} {\rm (}T_{N} -T_{N-1} {\rm )}}}.\]
Here $[x]^{+} =\max \{ x,\; 0\} ,\; \; [x]^{-} =\min \{ x,\; 0\} $. From \eqref{eq8} $M(f'_{N-1} )=\delta =d_{N} ,\; \; m(f'_{N-1} )=0$ and thus we have
\[0<\partial _{V} B_{N-1} (V,T_{N-1} )<d_{N} e^{-b(T_{N} -T_{N-1} )} +\frac{(1-\delta )e^{-b(T_{N} -T_{N-1} )} }{\sqrt{2\pi \cdot s_{V}^{2} {\rm (}T_{N} -T_{N-1} {\rm )}} } .\] 
From the assumption of our theorem we have \eqref{eq11} for $i=N-1$.

Now consider roots of the non-linear equation
\[V=\max \left\{B_{N-1} (V,T_{N-1} )+\bar{c}_{N-1},\; \; F-\sum _{j=1}^{N-2}\bar{c}_{j}  \right\}.\] 
From \eqref{eq11} for $i=N-1$, the function
\[\max \left\{B_{N-1} (V,T_{N-1} )+\bar{c}_{N-1},\; \; F-\sum _{j=1}^{N-2}\bar{c}_{j}  \right\}\]
is monotone increasing on $V$ and its derivative is strictly less than 1 at all the potins except for the only indifferentiable point (the intersecting point of graphs of $B_{N-1} (V,T_{N-1} )+\bar{c}_{N-1}$ and $F-\sum _{j=1}^{N-2}\bar{c}_{j}  $). Thus the equation
\[V=\max \left\{B_{N-1} (V,T_{N-1} )+\bar{c}_{N-1},\; \; F-\sum _{j=1}^{N-2}\bar{c}_{j}  \right\}\] 
has unique root $D_{N-1} $. And from \eqref{eq11} for $i=N-1$ we have \eqref{eq12}.

Next we assume that we have \eqref{eq11} for $i=k+1$ and the equation
\[V=\max \left\{B_{k+1} (V,T_{k+1} )+\bar{c}_{k+1},\; \; F-\sum _{j=1}^{k}\bar{c}_{j}  \right\}\] 
has unique root $D_{k+1} $ and we have \eqref{eq12} for $i=k+1$. Then consider  properties of $B_{k} (V,\; t)$. From \eqref{eq9}, the terminal payoff $f_{k} (V)$ is an discontinuous function with jump $\Delta f_{k} (D_{k+1} )=(1-\delta )D_{k+1} $ at $V=D_{k+1} $. Now estimate $\partial _{V} B_{k} $.  From Theorem 4 of \cite{OJK}, we have
\[m(f'_{k} )e^{-b(T_{k+1} -T_{k} )} +\frac{[\Delta f_{k} (D_{k+1} )]^{-} }{D_{k+1} } \cdot \frac{e^{-b(T_{k+1} -T_{k} )} }{\sqrt{2\pi \cdot s_{V}^{2} {\rm (}T_{k+1} -T_{k} {\rm )}} } <\partial _{V} B_{k} (V,T_{k} )< \]
\[<M(f'_{k} )e^{-b(T_{k+1} -T_{k} )} +\frac{[\Delta f_{k} (D_{k+1} )]^{+} }{D_{k+1} } \cdot \frac{e^{-b(T_{k+1} -T_{k} )} }{\sqrt{2\pi \cdot s_{V}^{2} {\rm (}T_{k+1} -T_{k} {\rm )}} }.\] 
From \eqref{eq9} and \eqref{eq11} for $i=k+1$, we have $M(f'_{k} )=d_{k+1} ,\; \; m(f'_{k} )=0$ and thus we have
\[0<\partial _{V} B_{k} (V,T_{k} )<d_{k+1} e^{-b(T_{k+1} -T_{k} )} +\frac{(1-\delta )e^{-b(T_{k+1} -T_{k} )} }{\sqrt{2\pi \cdot s_{V}^{2} {\rm (}T_{k+1} -T_{k} {\rm )}} } .\] 
From the assumption of our theorem we have \eqref{eq11} for $i=k$. Now consider roots of the non-linear equation
\[V=\max \left\{B_{k} (V,T_{k} )+\bar{c}_{k},\; \; F-\sum _{j=1}^{k-1}\bar{c}_{j}  \right\}.\] 
From \eqref{eq11} for $i=k$, the function
\[\max \left\{B_{k} (V,T_{k} )+\bar{c}_{k},\; \; F-\sum _{j=1}^{k-1}\bar{c}_{j}  \right\}\] 
is monotone increasing on $V$ and its derivative is strictly less than 1 at all the potins except for the only indifferentiable point (the intersecting point of graphs of $B_{k} (V,T_{k} )+\bar{c}_{k}$ and $F-\sum _{j=1}^{k-1}\bar{c}_{j}  $). Thus the equation
\[V=\max \left\{B_{k} (V,T_{k} )+\bar{c}_{k},\; \; F-\sum _{j=1}^{k-1}\bar{c}_{j}  \right\}\] 
has unique root $D_{k} $. And from (11) for $i=k$ we have \eqref{eq12}.
\end{proof}

\begin{remark} \label{remark:1}
From \eqref{eq12}, $D_{i} $ is called \textit{the default boundary} at $T_{i} $.
\end{remark}

\begin{lemma}[The minimum estimate] \label{lemma:1}
Under the assumption of Theorem \ref{theorem:1}, for the solution $B_{i} (V,t),\; i=1,\cdots , N-1$ of \eqref{eq5}, \eqref{eq6}, \eqref{eq7}, we have the estimate:
\[\min _{V} B_{i} (V,T_{i} )=B_{i} (0,T_{i} )=0.\] 
\end{lemma}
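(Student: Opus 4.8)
The plan is to prove the minimum estimate $\min_V B_i(V,T_i) = B_i(0,T_i) = 0$ by induction on $i$ descending from $N-1$, mirroring the induction already used in Theorem~\ref{theorem:1}. The two assertions I must establish at each level are that $B_i(0,T_i)=0$ and that this value is actually the minimum over all $V>0$. The second assertion is essentially free once I have the first: by the gradient estimate \eqref{eq11} of Theorem~\ref{theorem:1}, $\partial_V B_i(V,T_i)>0$ for all $V$, so $B_i(\cdot,T_i)$ is strictly increasing, and hence its infimum is attained as $V\to 0^+$; combined with continuity of $B_i$ up to $V=0$ this gives $\min_V B_i(V,T_i)=B_i(0,T_i)$. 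So the real content is to show $B_i(0,T_i)=0$.

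For the base case $i=N-1$, I would evaluate the explicit formula \eqref{eq10} at $V=0$. The bond price is $B_{N-1}(V,t)=\bar{c}_N B^+_{\bar c_N}(V,t;T_N;r,b,s_V)+\delta A^-_{\bar c_N}(V,t;T_N;r,b,s_V)$. As $V\to 0$, the geometric Brownian motion started at $0$ stays at $0$, so the probability of finishing at or above the strike $\bar c_N$ tends to zero, forcing the bond binary $B^+_{\bar c_N}\to 0$; and the asset binary $A^-_{\bar c_N}$, which pays $V$ on $\{V<\bar c_N\}$, also tends to $0$ since its value is bounded by the current asset value $e^{-b(T_N-t)}V\to 0$. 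Hence $B_{N-1}(0,T_{N-1})=0$. This is the cleanest route since we have a closed form here; alternatively one can note that the terminal payoff \eqref{eq8} satisfies $f_{N-1}(0)=0$ and invoke the maximum/boundary behaviour of the Black--Scholes solution at $V=0$.

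For the inductive step, assume $B_{k+1}(0,T_{k+1})=0$. I would first examine the terminal payoff $f_k(V)$ of \eqref{eq9} at $V=0$. By the induction hypothesis $B_{k+1}(0,T_{k+1})=0$, so $B_{k+1}(0,T_{k+1})+\bar c_{k+1}=\bar c_{k+1}>0$ and $F-\sum_{j=1}^{k}\bar c_j$ is a fixed constant; the indicator $1\{V\ge \max\{\cdots\}\}$ evaluated at the small-$V$ regime picks out the default branch $\delta V$, because near $V=0$ the firm value cannot cover the positive redemption/continuation value. Thus for $V$ in a right-neighbourhood of $0$ we have $f_k(V)=\delta V$, and in particular $f_k(0)=0$. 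The bond price $B_k(V,t)$ solves the Black--Scholes PDE \eqref{eq5} with this terminal data, so I would again use the probabilistic (Feynman--Kac) representation: $B_k(0,t)=0$ follows because the process started at $V=0$ is absorbed at $0$, where the discounted expected payoff is $\delta\cdot 0=0$. This closes the induction and yields $B_k(0,T_k)=0$, and then monotonicity gives the minimum.

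The main obstacle I anticipate is the rigorous justification of the boundary behaviour at $V=0$, i.e. that the solution of \eqref{eq5} inherits the value of its terminal data at the degenerate boundary $V=0$ where the diffusion and drift coefficients both vanish. One must argue carefully that $V=0$ is an absorbing state for the firm-value SDE and that the Feynman--Kac representation extends continuously to this boundary, so that $B_k(0,t)$ equals the discounted terminal value at $V=0$ rather than requiring a separately imposed boundary condition. Once this degenerate-boundary continuity is in hand, every other step is routine, and in fact the closed form \eqref{eq10} in the base case sidesteps the difficulty entirely, so the inductive step is where the analytic care is concentrated.
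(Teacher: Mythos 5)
Your proposal is correct and takes essentially the same route as the paper's proof: a descending induction in which the induction hypothesis forces $f_{i}(0)=0$ (since $B_{i+1}(0,T_{i+1})+\bar{c}_{i+1}=\bar{c}_{i+1}>0$ puts $V=0$ in the default branch) and the gradient estimate \eqref{eq11} gives monotonicity, hence the minimum at $V=0$. The degenerate-boundary fact you flag as the main obstacle --- that the solution of \eqref{eq5} at $V=0$ equals the discounted terminal payoff, $B_{i}(0,T_{i})=f_{i}(0)e^{-r(T_{i+1}-T_{i})}$, and that this is the minimum --- is precisely what the paper imports as Corollary 2 of \cite{OJK}, so your Feynman--Kac/absorbing-state argument is just an inline justification of that cited result.
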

The proof of Lemma \ref{lemma:1} is provided in appendix. In what follows, the proofs are provided in appendix if their mathematical proofs are not directly related to the expansion of the paper.

The following lemma shows that for our bond, if at one intermediate coupon date early redemption is always advantegeous \textit{regardless of firm value}, then early redemption is always advantegeous at all coupon dates prior to that.

\begin{lemma} \label{lemma:2}
Under the assumption of Theorem \ref{theorem:1}, if for some $i=2,\cdots , N-1$
\begin{equation} \label{eq13} 
\sup _{V} [B_{i} (V,\;T_{i} )+\bar{c}_{i} ]\le F-\sum _{j=1}^{i-1}\bar{c}_{j}  ,
\end{equation} 
then we have
\[\sup _{V} [B_{i-1} (V,\;T_{i-1} )+\bar{c}_{i-1} ]\le  F-\sum _{j=1}^{i-2}\bar{c}_{j}  .\] 
\end{lemma}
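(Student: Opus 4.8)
The plan is to reduce the two-sided statement to a single uniform bound on $B_{i-1}(\cdot,T_{i-1})$ and then obtain that bound from the maximum principle for the Black--Scholes operator in \eqref{eq5}. First I would introduce the abbreviation $K_{i-1}=F-\sum_{j=1}^{i-1}\bar c_j$ for the early redemption money payable at $T_i$ and record the elementary identity $K_{i-2}:=F-\sum_{j=1}^{i-2}\bar c_j=K_{i-1}+\bar c_{i-1}$. Subtracting $\bar c_{i-1}$ from the desired conclusion, the assertion $\sup_V[B_{i-1}(V,T_{i-1})+\bar c_{i-1}]\le K_{i-2}$ is equivalent to
\[
\sup_V B_{i-1}(V,T_{i-1})\le K_{i-1},
\]
so it suffices to dominate $B_{i-1}(\cdot,T_{i-1})$ by the constant $K_{i-1}$.

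Next I would use hypothesis \eqref{eq13} to flatten the terminal datum of $B_{i-1}$. Condition \eqref{eq13} states precisely that $B_i(V,T_i)+\bar c_i\le K_{i-1}$ for every $V$, so the inner maximum in the terminal condition \eqref{eq7} with index $i-1$, i.e.\ in $f_{i-1}$ of \eqref{eq9}, collapses to the \emph{constant} $K_{i-1}$. Hence the terminal payoff of $B_{i-1}$ reduces to
\[
f_{i-1}(V)=K_{i-1}\cdot 1\{V\ge K_{i-1}\}+\delta V\cdot 1\{V<K_{i-1}\},
\]
which has exactly the shape of \eqref{eq8} with $\bar c_N$ replaced by $K_{i-1}$. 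Since $0\le\delta<1$, one reads off that $0\le f_{i-1}(V)\le K_{i-1}$ for all $V>0$, the upper bound being attained on $\{V\ge K_{i-1}\}$.

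Then I would invoke the comparison principle. The constant-in-$V$ function $\Phi(V,t)=K_{i-1}e^{-r(T_i-t)}$ is an explicit solution of \eqref{eq5} on $T_{i-1}<t<T_i$ (one checks $\partial_t\Phi=r\Phi$ and $\partial_V\Phi=\partial_{VV}\Phi=0$, so $\mathcal L\Phi=0$), and at $t=T_i$ it satisfies $\Phi(V,T_i)=K_{i-1}\ge f_{i-1}(V)$. By the maximum principle for the Black--Scholes operator applied to the bounded solution $B_{i-1}$, this gives $B_{i-1}(V,t)\le\Phi(V,t)$ throughout $(T_{i-1},T_i)$; evaluating at $t=T_{i-1}$ and using $r\ge 0$ yields $B_{i-1}(V,T_{i-1})\le K_{i-1}e^{-r(T_i-T_{i-1})}\le K_{i-1}$ for all $V$. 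Taking the supremum over $V$ and adding $\bar c_{i-1}$ produces $\sup_V[B_{i-1}(V,T_{i-1})+\bar c_{i-1}]\le K_{i-1}+\bar c_{i-1}=K_{i-2}$, which is the claim. Equivalently one may argue probabilistically, writing $B_{i-1}(V,t)=e^{-r(T_i-t)}E[f_{i-1}(V(T_i))\mid V(t)=V]$ and bounding the integrand by $K_{i-1}$.

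The only delicate point I anticipate is the sign of the discount factor: the last step $e^{-r(T_i-T_{i-1})}\le 1$ requires $r\ge 0$, which is consistent with Assumption~1 and with $Z(t;T)=e^{-r(T-t)}$ being a genuine bond price. Everything else is routine once the obstacle is recognised to be flat above $K_{i-1}$; in particular, note that this lemma uses only hypothesis \eqref{eq13} (which removes the $\max$) together with the structure of the payoff, and does not require the gradient estimate \eqref{eq11}.
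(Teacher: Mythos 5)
Your proof is correct, but it takes a genuinely different route from the paper's. The paper does not argue via the maximum principle: it invokes the gradient estimate of Theorem~\ref{theorem:1} together with Corollary 2 of \cite{OJK} to assert that the supremum of $B_{i-1}(\cdot,T_{i-1})$ is attained in the limit $V\to+\infty$ and equals the discounted limit of the payoff, $\sup_V B_{i-1}(V,T_{i-1})=f_{i-1}(+\infty)\,e^{-r(T_i-T_{i-1})}$; hypothesis \eqref{eq13} then gives $f_{i-1}(+\infty)=F-\sum_{j=1}^{i-1}\bar c_j$, and the conclusion follows from $e^{-r(T_i-T_{i-1})}\le 1$ and $F-\sum_{j=1}^{i-1}\bar c_j\ge 0$. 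You instead observe that \eqref{eq13} flattens the entire payoff $f_{i-1}$ below the constant $K_{i-1}=F-\sum_{j=1}^{i-1}\bar c_j$ and dominate $B_{i-1}$ by the explicit solution $K_{i-1}e^{-r(T_i-t)}$ via comparison (equivalently, by the risk-neutral expectation representation). What your route buys: it is more elementary and self-contained, needing neither the external Corollary 2 of \cite{OJK} nor the gradient estimate \eqref{eq11}, so, as you note, the volatility assumption of Theorem~\ref{theorem:1} plays no real role in this lemma. What the paper's route buys: it identifies the supremum exactly, namely $\bigl(F-\sum_{j=1}^{i-1}\bar c_j\bigr)e^{-r(T_i-T_{i-1})}$, rather than merely bounding it, and this exact evaluation of suprema at $V=+\infty$ is the same mechanism reused in Lemmas~\ref{lemma:3} and \ref{lemma:5}, so the paper's choice keeps the whole chain of lemmas uniform. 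One small point you should make explicit: your last inequality $K_{i-1}e^{-r(T_i-T_{i-1})}\le K_{i-1}$ requires $K_{i-1}\ge 0$ in addition to $r\ge 0$; this is precisely what the paper covers by citing Lemma~\ref{lemma:1} together with \eqref{eq13} (nonnegativity of $B_i$ forces $K_{i-1}\ge\bar c_i>0$), and the same one-line justification closes the gap in your argument.
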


\begin{corollary} \label{corollary:1}
Under the assumption of Theorem \ref{theorem:1}, if
\begin{equation} \label{eq14} 
\sup _{V} [B_{1} (V,\; T_{1} )+\bar{c}_{1} ]>F,
\end{equation}
then we have
\[\sup _{V} [B_{i} (V,\; T_{i} )+\bar{c}_{i} ]>F-\sum _{j=1}^{i-1}\bar{c}_{j}  {\rm ,}\; \; i=1,2,\cdots ,N-1.\] 
\end{corollary}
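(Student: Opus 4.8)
The plan is to obtain the Corollary directly from Lemma~\ref{lemma:2} by contraposition combined with a short induction on $i$, so that the hypothesis \eqref{eq14} at the first coupon date propagates forward through every later coupon date. The first thing I would note is a bookkeeping point: the conclusion claimed for the index $i=1$ is exactly the hypothesis, because the empty sum vanishes, $F-\sum_{j=1}^{0}\bar{c}_j=F$, and hence \eqref{eq14} reads
\[
\sup_V\bigl[B_1(V,T_1)+\bar{c}_1\bigr]>F-\sum_{j=1}^{0}\bar{c}_j .
\]
This supplies the base case of the induction with no extra work.

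Next I would restate Lemma~\ref{lemma:2} in its contrapositive form. For each $i=2,\cdots,N-1$, Lemma~\ref{lemma:2} asserts the implication
\[
\sup_V\bigl[B_i(V,T_i)+\bar{c}_i\bigr]\le F-\sum_{j=1}^{i-1}\bar{c}_j
\;\Longrightarrow\;
\sup_V\bigl[B_{i-1}(V,T_{i-1})+\bar{c}_{i-1}\bigr]\le F-\sum_{j=1}^{i-2}\bar{c}_j .
\]
Negating hypothesis and conclusion gives the logically equivalent statement
\[
\sup_V\bigl[B_{i-1}(V,T_{i-1})+\bar{c}_{i-1}\bigr]> F-\sum_{j=1}^{i-2}\bar{c}_j
\;\Longrightarrow\;
\sup_V\bigl[B_i(V,T_i)+\bar{c}_i\bigr]> F-\sum_{j=1}^{i-1}\bar{c}_j ,
\]
which says precisely that the desired strict inequality at $T_{i-1}$ forces the same strict inequality at $T_i$.

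I would then run the induction. The base case $i=1$ is \eqref{eq14}, as noted above. Assuming the strict inequality holds at index $i-1$ for some $2\le i\le N-1$, the contrapositive implication above immediately yields it at index $i$. Iterating from $i=1$ up to $i=N-1$ establishes the conclusion
\[
\sup_V\bigl[B_i(V,T_i)+\bar{c}_i\bigr]> F-\sum_{j=1}^{i-1}\bar{c}_j ,\qquad i=1,2,\cdots,N-1 .
\]

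There is essentially no analytic obstacle in this argument, since all the substantive work — the monotonicity and sign comparison between the continuation value $B_i+\bar{c}_i$ and the early-redemption value $F-\sum_{j=1}^{i-1}\bar{c}_j$ — is already contained in Lemma~\ref{lemma:2}. The only place needing a moment of care is the verification that the index range $i=2,\cdots,N-1$ over which Lemma~\ref{lemma:2} is valid exactly matches the chain of implications $P(1)\Rightarrow P(2)\Rightarrow\cdots\Rightarrow P(N-1)$ required to reach the last coupon date, together with the correct interpretation of the empty sum in the base case.
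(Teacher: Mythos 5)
Your proof is correct and matches the paper's intended argument: the paper gives no separate proof of Corollary \ref{corollary:1} precisely because it is the contrapositive of Lemma \ref{lemma:2} iterated along the chain $i=2,\cdots,N-1$, which is exactly your induction (with the correct observation that the $i=1$ case is \eqref{eq14} itself via the empty sum $F-\sum_{j=1}^{0}\bar{c}_j=F$).
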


From Lemma \ref{lemma:2}, if \eqref{eq13} holds for some $i\in \{ 2,\cdots ,N-1\} $, then \eqref{eq13} is also true for $i=1$, that is, $B_{1} (V,\; T_{1} )+C_{1} <F$ for all $V\in [0,+\infty )$. The financial meaning of this expression is that it is always advantageous for the bond holder to demand early redemption regardless of the firm value at the first coupon date ($T_{1} $). In the viewpoint of the bond issuing company, the significance of issuing bond is reduced in full width. Indeed, in this case the bond exists only on the interval $[0,T_{1} ]$ and does not exist after the time $T_{1} $. And the bond price on the interval $[0,T_{1} ]$ satisfies 
\[\frac{\partial B_{0} }{\partial t} +\frac{s_{V}^{2} V^{2} }{2} \frac{\partial ^{2} B_{0} }{\partial V^{2} } +(r-b)V\frac{\partial B_{0} }{\partial V} -rB_{0} =0,\; \; 0\le t<T_{1} ,\; \; V>0,\] 
\[B_{0} (V,\; T_{1} )=F\cdot 1\{ V\ge F\} +\delta V\cdot 1\{ V<F\} {\rm .}\] 
This problem is just the same as the problem (2.8) and (2.9) of \cite{OJKJ} and thus the bond price at $t\in [0,T_{1} ]$ is provided as
\[B_{0} (V,\; t)=FB_{F}^{+} (V,\; t;\; T_{1} ;\; r,\; b,\; s_{V} )+\delta A_{F}^{-} (V,\; t;\; T_{1} ;\; r,\; b,\; s_{V} ),\; \; 0\le t\le T_{1} .\] 
Therefore in this case the bond is a zero coupon bond with the maturity $T_{1} $ and the face value $F$.

It is reasonable for the coupon bond with early redemption to assume that \eqref{eq14} is satisfied.

\begin{lemma}[The maximum estimate] \label{lemma:3}
Under the assumption of Theorem 1 and the assumption \eqref{eq14}, the solution $B_{i} (V,\; t),\; i=1,\cdots ,N-1$ of \eqref{eq5},\eqref{eq6}  and \eqref{eq7} satisfies
\[\sup _{V} B_{i} (V,\; T_{i} )=B_{i} (+\infty ,\; T_{i} )=\sum _{j=i+1}^{N}\left[\bar{c}_{j} e^{-r(T_{j} -T_{i} )} \right] .\] 
\end{lemma}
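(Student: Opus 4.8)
The plan is to prove both assertions by downward induction on $i$, in the same pattern as the proof of Theorem~\ref{theorem:1}. The first equality, $\sup_V B_i(V,T_i)=B_i(+\infty,T_i)$, comes for free from the gradient estimate \eqref{eq11}: since $0<\partial_V B_i(V,T_i)<d_i<1$, the map $V\mapsto B_i(V,T_i)$ is strictly increasing, so its supremum equals its limit $B_i(+\infty,T_i):=\lim_{V\to\infty}B_i(V,T_i)$. The whole content of the lemma is therefore to evaluate this limit and show it equals $\sum_{j=i+1}^{N}\bar c_j e^{-r(T_j-T_i)}$, which is exactly the value at $T_i$ of a default-free stream paying all remaining coupons and the face value discounted at rate $r$.

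For the base case $i=N-1$ I would read the limit directly off the closed form \eqref{eq10}. As $V\to+\infty$ the bond binary $B^{+}_{\bar c_N}(V,t;T_N)\to e^{-r(T_N-t)}$ (the event $\{V\ge\bar c_N\}$ becomes certain), while the asset binary $A^{-}_{\bar c_N}(V,t;T_N)\to 0$ (the event $\{V(T_N)<\bar c_N\}$ has vanishing probability). This gives $B_{N-1}(+\infty,T_{N-1})=\bar c_N e^{-r(T_N-T_{N-1})}$, which is the asserted sum for $i=N-1$.

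For the inductive step, assume the formula for $i=k+1$, and write $g(V):=B_k(V,T_{k+1})$ for the terminal datum \eqref{eq7}. First I would identify $\lim_{V\to\infty}g(V)$. By Lemma~\ref{lemma:1} and the inductive hypothesis, $B_{k+1}(\cdot,T_{k+1})$ is increasing and bounded, $0\le B_{k+1}\le B_{k+1}(+\infty,T_{k+1})$; and Corollary~\ref{corollary:1} (which is where assumption \eqref{eq14} enters) gives $B_{k+1}(+\infty,T_{k+1})+\bar c_{k+1}=\sup_V[B_{k+1}+\bar c_{k+1}]>F-\sum_{j=1}^{k}\bar c_j$. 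Hence for all large $V$ the maximum in \eqref{eq7} is attained by $B_{k+1}(V,T_{k+1})+\bar c_{k+1}$ and the default indicator equals $1$, so $g$ is bounded with $\lim_{V\to\infty}g(V)=M:=B_{k+1}(+\infty,T_{k+1})+\bar c_{k+1}$. Then I would propagate this limit through the PDE \eqref{eq5} on $(T_k,T_{k+1})$ by Feynman--Kac: writing $B_k(V,T_k)=e^{-r(T_{k+1}-T_k)}E^{Q}[\,g(V(T_{k+1}))\mid V(T_k)=V\,]$ and noting that $V(T_{k+1})=V\xi$ for a fixed positive lognormal $\xi$, we have $V(T_{k+1})\to\infty$ almost surely as $V\to\infty$, so $g(V(T_{k+1}))\to M$ almost surely; boundedness of $g$ and dominated convergence yield $B_k(+\infty,T_k)=Me^{-r(T_{k+1}-T_k)}$. (Equivalently one brackets $B_k$ between the constant-in-$V$ supersolution $Me^{-r(T_{k+1}-t)}$ and the subsolution $(M-\varepsilon)B^{+}_{V_\varepsilon}(V,t;T_{k+1})$ and invokes the maximum principle.) Substituting $B_{k+1}(+\infty,T_{k+1})=\sum_{j=k+2}^{N}\bar c_j e^{-r(T_j-T_{k+1})}$ and collecting discount factors gives $B_k(+\infty,T_k)=\sum_{j=k+1}^{N}\bar c_j e^{-r(T_j-T_k)}$, closing the induction.

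The main obstacle is precisely this last step: interchanging the limit $V\to\infty$ with the pricing operator across the whole interval $(T_k,T_{k+1})$, since $g$ is only bounded and has a jump at the default boundary $D_{k+1}$, so one cannot simply substitute $V=\infty$ into a closed form as in the base case. The dominated-convergence (or super/subsolution comparison) argument is what makes the passage rigorous; everything else is bookkeeping of the discount factors and an application of Corollary~\ref{corollary:1} to pin down which branch of the maximum governs the large-$V$ regime.
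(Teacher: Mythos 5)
Your proposal is correct, and its skeleton coincides with the paper's: downward induction on $i$, monotonicity from the gradient estimate \eqref{eq11} to identify $\sup_V B_i(V,T_i)$ with the limit $B_i(+\infty,T_i)$, Corollary~\ref{corollary:1} (hence assumption \eqref{eq14}) to decide that for large $V$ the maximum in the terminal datum is attained on the branch $B_{k+1}(V,T_{k+1})+\bar c_{k+1}$, and then bookkeeping of discount factors. The one place where you genuinely diverge is the step you correctly single out as the main obstacle: passing the limit at $V=+\infty$ through the pricing operator over $(T_k,T_{k+1})$. The paper disposes of this (both in the base case and in the inductive step) by a single citation to Corollary 2 of \cite{OJK}, a general result on Black--Scholes equations with discontinuous terminal payoffs which states exactly that $B(+\infty,t)=f(+\infty)\,e^{-r(T-t)}$ when the payoff $f$ is bounded with a limit at infinity; you instead re-prove this fact inline, via the risk-neutral representation $B_k(V,T_k)=e^{-r(T_{k+1}-T_k)}E^{Q}[f_k(V(T_{k+1}))\mid V(T_k)=V]$, the lognormal factorization $V(T_{k+1})=V\xi$, and dominated convergence (and, for the base case $i=N-1$, you read the limit off the closed form \eqref{eq10} rather than invoking the same corollary). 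Your route buys self-containedness --- it makes rigorous, inside the paper, a step the authors outsource, and it is robust to the jump of $f_k$ at $D_{k+1}$ since only boundedness and the existence of the limit are used; the paper's route buys brevity and uniformity, since the same external corollary is the engine behind Lemmas~\ref{lemma:1}, \ref{lemma:2}, \ref{lemma:3} and \ref{lemma:5} alike.
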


Now using Lemma \ref{lemma:3} we analyse what relations between the parameters of the bond the assumption \eqref{eq14} requires. From Lemma \ref{lemma:3}, \eqref{eq14} becomes
\[\bar{c}_{1} +\sum _{j=2}^{N}\left[\bar{c}_{j} e^{-r(T_{j} -T_{1} )} \right] >F.\] 
Multiplying $e^{r(T_{N} -T_{1} )} $ to the both sides of this inequality, we have
\begin{equation} \label{eq15} 
\sum _{j=1}^{N}\left[C_{j} e^{r(T_{N} -T_{j} )} \right] >F\cdot e^{r(T_{N} -T_{1} )} -F.                     
\end{equation} 
This relation gives us the \textit{lower bound} for the bond coupons. In particular, if the intervals between the adjoining coupon dates and the coupons are always the same, i.e, $\Delta T=T_{i+1} -T_{i} ,\;  i=0,\cdots ,N-1$ and $C_{i} =C_{j} =C{\rm ,}\; 1\le i,\; j\le N$, then \eqref{eq15} becomes
\[C\sum _{j=1}^{N}\left[e^{r(T_{N} -T_{j} )} \right] >F\cdot [e^{r(T_{N} -T_{1} )} -1].\] 
This yields
\begin{equation} \label{eq16} 
\frac{C}{F} >(e^{r\Delta T} -1)\cdot \frac{e^{(N-1)r\Delta T} -1}{e^{Nr\Delta T} -1} \approx r\Delta T\cdot \frac{(N-1)r\Delta T}{Nr\Delta T} =\frac{N-1}{N} \cdot (r\Delta T).
\end{equation} 
This relation gives us the lower bound to the ratio of the coupon to the face value in the bond with early redemption provision.

\begin{remark} \label{remark:2}
From the process of deriving \eqref{eq15}, under the assumption of Theorem \ref{theorem:1}, the assumption \eqref{eq15} becomes a necessary condition for the assumption \eqref{eq14} to hold.
\end{remark}

Now in order to show that under the assumption of Theorem \ref{theorem:1}, the assumption \eqref{eq15} is a sufficient condition for the assumption \eqref{eq14} to hold, we prove the following lemmas.

\begin{lemma} \label{lemma:4}
If for some $1\le m\le N-1$,
\[\sum _{j=m}^{N}\left[\bar{c}_{j} e^{r(T_{N} -T_{j} )} \right] >\left(F-\sum _{j=1}^{m-1}\bar{c}_{j}  \right)\cdot e^{r(T_{N} -T_{m} )} \] 
is satisfied then we have
\[\sum _{j=m+1}^{N}\left[\bar{c}_{j} e^{r(T_{N} -T_{j} )} \right] >\left(F-\sum _{j=1}^{m}\bar{c}_{j}  \right)\cdot e^{r(T_{N} -T_{m+1} )} .\] 
\end{lemma}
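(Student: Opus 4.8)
The plan is to prove this by direct algebraic manipulation of the hypothesis, since no properties of the bond prices $B_i$ enter here---only the ordering $T_m < T_{m+1}$, the nonnegativity of the short rate $r$, and the positivity of the coupons $\bar{c}_j$ (which holds because $\bar{c}_N = F + C_N > 0$ and $\bar{c}_j = C_j > 0$ for $j = 1,\dots,N-1$).

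First I would isolate the $j=m$ summand in the hypothesis, writing its left-hand side as
\[\bar{c}_m e^{r(T_N - T_m)} + \sum_{j=m+1}^{N}\bar{c}_j e^{r(T_N - T_j)}.\]
Subtracting $\bar{c}_m e^{r(T_N - T_m)}$ from both sides of the hypothesis and using the identity $\left(F - \sum_{j=1}^{m-1}\bar{c}_j\right) - \bar{c}_m = F - \sum_{j=1}^{m}\bar{c}_j$ on the right-hand side, I obtain the intermediate inequality
\[\sum_{j=m+1}^{N}\bar{c}_j e^{r(T_N - T_j)} > \left(F - \sum_{j=1}^{m}\bar{c}_j\right) e^{r(T_N - T_m)}.\]
This already carries the correct left-hand side, so it remains only to replace the exponent $T_N - T_m$ by $T_N - T_{m+1}$ on the right.

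The final step, and the only point requiring any care, is the sign of the factor $F - \sum_{j=1}^{m}\bar{c}_j$. Since $T_m < T_{m+1}$ and $r \ge 0$ we have $e^{r(T_N - T_m)} \ge e^{r(T_N - T_{m+1})} > 0$. If $F - \sum_{j=1}^{m}\bar{c}_j \ge 0$, then multiplying this ordering by that nonnegative factor preserves it, giving
\[\left(F - \sum_{j=1}^{m}\bar{c}_j\right) e^{r(T_N - T_m)} \ge \left(F - \sum_{j=1}^{m}\bar{c}_j\right) e^{r(T_N - T_{m+1})},\]
and chaining with the intermediate inequality yields the claim. If instead $F - \sum_{j=1}^{m}\bar{c}_j < 0$, then the right-hand side of the desired conclusion is negative while its left-hand side $\sum_{j=m+1}^{N}\bar{c}_j e^{r(T_N - T_j)}$ is strictly positive, so the conclusion holds trivially. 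In either case the inequality follows. I do not anticipate a genuine obstacle: the case split on the sign of the remaining principal is the subtlest element, and it is resolved immediately by the positivity of the coupon sum.
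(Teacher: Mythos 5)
Your proof is correct, and it follows the same core decomposition as the paper's own proof: isolate the $j=m$ summand, subtract $\bar{c}_m e^{r(T_N-T_m)}$ from both sides, and absorb it into the principal to get the intermediate inequality
\[
\sum_{j=m+1}^{N}\bar{c}_j e^{r(T_N-T_j)} > \Bigl(F-\sum_{j=1}^{m}\bar{c}_j\Bigr)e^{r(T_N-T_m)}.
\]
Where you differ is in the final step, and there your version is actually the more careful one. The paper simply writes
$\bigl(F-\sum_{j=1}^{m}\bar{c}_j\bigr)e^{r(T_N-T_m)} > \bigl(F-\sum_{j=1}^{m}\bar{c}_j\bigr)e^{r(T_N-T_{m+1})}$
(in fact with a typo, $T_{m-1}$ in place of $T_{m+1}$), which silently presumes the factor $F-\sum_{j=1}^{m}\bar{c}_j$ is positive; multiplying the exponential ordering by this factor reverses the inequality when it is negative, and the paper itself allows $\sum_{j=1}^{k}\bar{c}_j$ to exceed $F$ (that is exactly what the index $M$ in \eqref{eq21} is about). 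Your case split resolves this cleanly: when the factor is nonnegative the ordering of exponentials carries through, and when it is negative the conclusion is trivial because the left-hand side is a sum of strictly positive terms. Your use of the non-strict comparison $e^{r(T_N-T_m)}\ge e^{r(T_N-T_{m+1})}$ also handles $r=0$ correctly, whereas the paper's strict inequality would fail there; in both respects your write-up closes small gaps in the published argument rather than introducing any.
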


The financial meaning of Lemma \ref{lemma:4} is that if a risk free discrete coupon bond has early redemption provision and keeping the risk free bond at some coupon date $T_{m} $ is more advantageous than early redemption, then keeping the risk free bond is  advantageous at the date $T_{m+1} $, too.

\begin{corollary} \label{corollary:2}
If \eqref{eq15} holds, then for $1\le m\le N-1$ we have
\[\sum _{j=m}^{N}\left[\bar{c}_{j} e^{r(T_{N} -T_{j} )} \right] >\left(F-\sum _{j=1}^{m-1}\bar{c}_{j}  \right)\cdot e^{r(T_{N} -T_{m} )} ,\] 
or equivalently,
\begin{equation} \label{eq17} 
\sum _{j=m}^{N}\left[\bar{c}_{j} e^{-r(T_{j} -T_{m-1} )} \right] >\left(F-\sum _{j=1}^{m-1}\bar{c}_{j}  \right)\cdot e^{-r(T_{m} -T_{m-1} )} .                
\end{equation} 
\end{corollary}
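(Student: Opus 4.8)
The plan is to prove the first inequality by induction on $m$, using \eqref{eq15} as the base case and Lemma \ref{lemma:4} as the induction step, and then to obtain the equivalent form \eqref{eq17} by a single positive rescaling.

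First I would establish the base case $m=1$. Here the empty sum $\sum_{j=1}^{0}\bar{c}_{j}$ vanishes, so the claimed inequality reads $\sum_{j=1}^{N}\bar{c}_{j}e^{r(T_{N}-T_{j})}>F\cdot e^{r(T_{N}-T_{1})}$. Using $\bar{c}_{N}=F+C_{N}$, $\bar{c}_{j}=C_{j}$ for $j<N$, and $e^{r(T_{N}-T_{N})}=1$, the left-hand side equals $\sum_{j=1}^{N}C_{j}e^{r(T_{N}-T_{j})}+F$. Hence, after moving $F$ to the right, the base case is exactly the hypothesis \eqref{eq15}, so it holds by assumption.

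Next I would run the induction step, which is precisely the content of Lemma \ref{lemma:4}: assuming
\[\sum_{j=m}^{N}\left[\bar{c}_{j}e^{r(T_{N}-T_{j})}\right]>\left(F-\sum_{j=1}^{m-1}\bar{c}_{j}\right)e^{r(T_{N}-T_{m})}\]
holds for some $m$ with $1\le m\le N-1$, Lemma \ref{lemma:4} yields the same inequality with $m$ replaced by $m+1$. Iterating from the base case $m=1$ therefore delivers the first displayed inequality of the corollary for every $m$ in the range $1\le m\le N-1$.

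Finally, to pass to the equivalent form, I would multiply both sides of the established inequality by the strictly positive factor $e^{-r(T_{N}-T_{m-1})}$, which preserves the direction of the inequality. On the left each exponent collapses via $e^{r(T_{N}-T_{j})}e^{-r(T_{N}-T_{m-1})}=e^{-r(T_{j}-T_{m-1})}$, and on the right $e^{r(T_{N}-T_{m})}e^{-r(T_{N}-T_{m-1})}=e^{-r(T_{m}-T_{m-1})}$, producing exactly \eqref{eq17}; since multiplication by a positive constant is reversible, the two forms are equivalent. There is no real obstacle here—the only point requiring care is the bookkeeping of the $\bar{c}_{N}=F+C_{N}$ substitution in the base case, which is what converts the corollary's inequality at $m=1$ into the stated form of \eqref{eq15}.
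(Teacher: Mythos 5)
Your proposal is correct and follows exactly the route the paper intends (the paper gives no separate proof of Corollary \ref{corollary:2} precisely because it is the immediate induction: base case $m=1$ is \eqref{eq15} after the $\bar{c}_{N}=F+C_{N}$ bookkeeping, the step is Lemma \ref{lemma:4}, and \eqref{eq17} follows by multiplying through by $e^{-r(T_{N}-T_{m-1})}>0$). Nothing is missing.
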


\begin{lemma} \label{lemma:5}
Under the assumption of Theorem \ref{theorem:1} and assumption \eqref{eq15}, we have
\[\sup _{V} B_{i} (V,\; T_{i} )=B_{i} (+\infty ,\; T_{i} )=\sum _{j=i+1}^{N}\left[\bar{c}_{j} e^{-r(T_{j} -T_{i} )} \right] .\] 
\end{lemma}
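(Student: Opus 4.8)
The plan is to prove the statement by backward induction on $i$, running from $i=N-1$ down to $i=1$, paralleling the argument of Lemma \ref{lemma:3} but with the design condition \eqref{eq14} replaced, at each inductive step, by the risk-free inequality \eqref{eq17} supplied by Corollary \ref{corollary:2}. The first observation is that the gradient estimate \eqref{eq11} of Theorem \ref{theorem:1} gives $0<\partial_V B_i(V,T_i)<1$, so each $B_i(\cdot,T_i)$ is strictly increasing and bounded; hence $\sup_V B_i(V,T_i)=\lim_{V\to\infty}B_i(V,T_i)=B_i(+\infty,T_i)$, and it suffices to compute this limit and show it equals $\sum_{j=i+1}^{N}\bar c_j e^{-r(T_j-T_i)}$, the value at $T_i$ of the corresponding risk-free coupon stream.

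For the base case $i=N-1$ I would read the limit off directly from the closed form \eqref{eq10}: as $V\to\infty$ the bond binary $B_{\bar c_N}^{+}$ tends to the discount factor $e^{-r(T_N-T_{N-1})}$ while the asset binary $A_{\bar c_N}^{-}$, which pays only on $\{V<\bar c_N\}$, tends to $0$, giving $B_{N-1}(+\infty,T_{N-1})=\bar c_N e^{-r(T_N-T_{N-1})}$, which is the $i=N-1$ instance of the formula.

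For the inductive step, assume the formula for $i+1$, so that $B_{i+1}(V,T_{i+1})\to\sum_{j=i+2}^{N}\bar c_j e^{-r(T_j-T_{i+1})}$ as $V\to\infty$, whence the keep-the-bond branch of the payoff $f_i$ in \eqref{eq9} satisfies $B_{i+1}(V,T_{i+1})+\bar c_{i+1}\to\sum_{j=i+1}^{N}\bar c_j e^{-r(T_j-T_{i+1})}$. The decisive point is branch selection: multiplying \eqref{eq17} with $m=i+1$ by $e^{r(T_{i+1}-T_i)}$ yields exactly $\sum_{j=i+1}^{N}\bar c_j e^{-r(T_j-T_{i+1})}>F-\sum_{j=1}^{i}\bar c_j$, so for all large $V$ the maximum in \eqref{eq9} is attained by the keep-the-bond branch and the indicator $1\{V\ge\max\{\cdots\}\}$ equals $1$; hence $f_i(V)\to\sum_{j=i+1}^{N}\bar c_j e^{-r(T_j-T_{i+1})}$. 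Representing $B_i$ by the risk-neutral expectation $B_i(V,T_i)=e^{-r(T_{i+1}-T_i)}E[f_i(V(T_{i+1}))\mid V(T_i)=V]$ and letting $V\to\infty$, so that $V(T_{i+1})\to\infty$ almost surely, gives $B_i(+\infty,T_i)=e^{-r(T_{i+1}-T_i)}\sum_{j=i+1}^{N}\bar c_j e^{-r(T_j-T_{i+1})}=\sum_{j=i+1}^{N}\bar c_j e^{-r(T_j-T_i)}$, closing the induction.

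The step I expect to require the most care is the passage to the limit in this expectation, i.e. justifying the interchange of $V\to\infty$ with $E[\cdot]$. This is where boundedness of the payoff enters: by Lemma \ref{lemma:1} one has $B_{i+1}\ge 0$ and by the inductive hypothesis $B_{i+1}$ is bounded above, so $f_i$ is bounded by the constant $M_i=\max\{B_{i+1}(+\infty,T_{i+1})+\bar c_{i+1},\,F-\sum_{j=1}^{i}\bar c_j\}$ (on the default branch $\delta V<\max\{\cdots\}\le M_i$, on the keep branch $f_i=\max\{\cdots\}\le M_i$), and dominated convergence applies. The branch-selection inequality is the genuine analytic content, but Corollary \ref{corollary:2} hands it to us directly; recognising that \eqref{eq17} is precisely the inequality needed to make the risk-free coupon branch dominate the early-redemption branch at large firm value is the conceptual crux that lets the whole induction run with \eqref{eq15} in place of \eqref{eq14}.
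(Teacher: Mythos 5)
Your proof is correct, and its skeleton coincides with the paper's: backward induction on $i$, with the branch selection at each step settled by exactly the inequality you isolate, namely \eqref{eq17} of Corollary \ref{corollary:2} with $m=i+1$, rescaled by $e^{r(T_{i+1}-T_i)}$. The difference is in how the limit identity $B_i(+\infty,T_i)=f_i(+\infty)\,e^{-r(T_{i+1}-T_i)}$ is justified. The paper simply quotes it as a black box (``Corollary 2 of \cite{OJK}'', a general fact about Black--Scholes solutions with bounded monotone terminal payoffs), computes $f_{k-1}(+\infty)$ as the maximum of the two branch limits, and only then invokes Corollary \ref{corollary:2} to decide which branch attains the maximum; you instead decide the branch at large finite $V$ first, and then prove the limit identity from scratch via the risk-neutral (Feynman--Kac) representation $B_i(V,T_i)=e^{-r(T_{i+1}-T_i)}E[f_i(V(T_{i+1}))\mid V(T_i)=V]$, dominated convergence with the uniform bound $M_i$, and, for the base case, the explicit first-order binary formula \eqref{eq10} (using $B^{+}_{\bar c_N}\to e^{-r(T_N-T_{N-1})}$ and $V N(-d_1)\to 0$). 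The two orderings of ``take limit'' and ``resolve max'' are equivalent here because both branch limits exist and are distinct. What your route buys is self-containedness: the reader needs nothing from \cite{OJK}, and the domination argument makes explicit why the payoff's boundedness (which itself rests on the inductive hypothesis and Lemma \ref{lemma:1}) is the real hypothesis behind the limit interchange. What the paper's route buys is brevity and uniformity, since the same cited corollary also powers Lemmas \ref{lemma:1}, \ref{lemma:2} and \ref{lemma:3}, so the whole family of min--max estimates is dispatched by one external tool.
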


\begin{corollary} \label{corollary:3}
Under the assumption of Theorem \ref{theorem:1}, \eqref{eq15} is a necessary and sufficient condition for the assumption \eqref{eq14} to hold.
\end{corollary}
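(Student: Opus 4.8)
The plan is to obtain Corollary~\ref{corollary:3} by combining Remark~\ref{remark:2} with Lemma~\ref{lemma:5}, each supplying one of the two implications, the common mechanism being the closed form for $\sup_V B_1(V,T_1)$. First I would isolate the purely algebraic identity linking $\sup_V[B_1(V,T_1)+\bar{c}_1]$ to \eqref{eq15}. Whenever the maximum estimate $\sup_V B_1(V,T_1)=\sum_{j=2}^N[\bar{c}_j e^{-r(T_j-T_1)}]$ is available, adding the constant $\bar{c}_1$ gives
\[
\sup_V[B_1(V,T_1)+\bar{c}_1]=\sum_{j=1}^N[\bar{c}_j e^{-r(T_j-T_1)}],
\]
since $\bar{c}_1 e^{-r(T_1-T_1)}=\bar{c}_1$. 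Multiplying the strict inequality $\sup_V[B_1(V,T_1)+\bar{c}_1]>F$ by $e^{r(T_N-T_1)}$ and substituting $\bar{c}_N=F+C_N$, $\bar{c}_i=C_i\ (i<N)$ turns it into precisely \eqref{eq15}. This is exactly the chain of equalities already carried out just before Remark~\ref{remark:2}, so the entire difficulty reduces to securing this supremum formula under each of the two hypotheses separately.

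For necessity, $\eqref{eq14}\Rightarrow\eqref{eq15}$, I would simply cite Remark~\ref{remark:2}: assuming \eqref{eq14}, Lemma~\ref{lemma:3} applies and furnishes the supremum formula at $i=1$, after which the identity above yields \eqref{eq15}.

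For sufficiency, $\eqref{eq15}\Rightarrow\eqref{eq14}$, the decisive point is that Lemma~\ref{lemma:5} re-establishes the identical closed form $\sup_V B_1(V,T_1)=\sum_{j=2}^N[\bar{c}_j e^{-r(T_j-T_1)}]$ under assumption \eqref{eq15} alone, without presupposing \eqref{eq14}. Granting this, I would run the equivalence in reverse: \eqref{eq15} is the same as $\sum_{j=1}^N[\bar{c}_j e^{-r(T_j-T_1)}]>F$, which by the displayed identity is exactly $\sup_V[B_1(V,T_1)+\bar{c}_1]>F$, that is, \eqref{eq14}.

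The main obstacle is therefore not Corollary~\ref{corollary:3} itself — once both maximum estimates are in hand it is a one-line equivalence — but the fact that the supremum formula must be proved under two \emph{a priori} different hypotheses. Lemma~\ref{lemma:3} proves it assuming \eqref{eq14}, while the genuine work sits in Lemma~\ref{lemma:5}, which reaches the same formula assuming only \eqref{eq15}; there the risk-free coupon dominance \eqref{eq17}, propagated backward through the coupon dates by Lemma~\ref{lemma:4} and Corollary~\ref{corollary:2}, is what decouples the estimate from \eqref{eq14} and lets sufficiency close. Corollary~\ref{corollary:3} then merely records that the two lemmas meet at the identical expression for $\sup_V B_1(V,T_1)$.
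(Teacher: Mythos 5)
Your proposal is correct and follows essentially the same route as the paper: necessity via Remark~\ref{remark:2} (i.e.\ Lemma~\ref{lemma:3} plus the algebraic rewriting that produces \eqref{eq15}), and sufficiency via Lemma~\ref{lemma:5}, which re-derives the supremum formula $\sup_V B_1(V,T_1)=\sum_{j=2}^{N}\bigl[\bar{c}_j e^{-r(T_j-T_1)}\bigr]$ under \eqref{eq15} alone, after which the equivalence with \eqref{eq14} is the same one-line computation the paper records as \eqref{A.2}. Your closing observation correctly locates the real content in Lemma~\ref{lemma:5} (propagated backward through Lemma~\ref{lemma:4} and Corollary~\ref{corollary:2}), exactly as in the paper.
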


This shows that it is valid to set up the parameters of the discrete coupon bond with early redemption provision such that the assumption \eqref{eq15} is satisfied.

\begin{remark} \label{remark:3}
In the financial reality, there are such bonds that \eqref{eq15} is not satisfied. For example, see zero coupon bonds. In the case of such bonds that the assumption \eqref{eq15} is not satisfied (that is, the bonds with too small coupons), in the viewpoint of the bond issuing firm, the early redemption provision should be canceled. If the early redemption provision is canceled, then the bond becomes discrete coupon bond without early redemption provision (studied in \cite{OJKJ}, already).
\end{remark}
From now, we consider the bonds with early redemption provision satisfying \eqref{eq15}.

\section{The early redemption boundaries}

So far, it is not clear whether the problem \eqref{eq5} and \eqref{eq7} can be solved by using higher order binary options or not. To make it clear, we study the structure of the terminal functions $f_{i} (V)$ for $B_{i} (V{\rm ,}\; t),\; i=0,\cdots ,N-2$. 

First, we consider the positions of the graph of $y=B_{i+1} (V,\; T_{i+1} )+\bar{c}_{i+1} $ and the line of $y=F-\sum _{j=1}^{i}\bar{c}_{j}  $. See  Figure 1. There are 3 cases: the case that the inequality

\begin{equation} \label{eq18} 
\sup _{V} \{ B_{i+1} (V,\; T_{i+1} )+\bar{c}_{i+1} \} =\bar{c}_{i+1} +\sum _{j=i+2}^{N}\left[\bar{c}_{j} e^{-r(T_{j} -T_{i+1} )} \right] \le F-\sum _{j=1}^{i}\bar{c}_{j}
\end{equation}
holds(see line 1 in Figure 1), the case that the inequality

\begin{equation} \label{eq19} 
\min _{V} \{ B_{i+1} (V,\; T_{i+1} )+\bar{c}_{i+1} \} =\bar{c}_{i+1} >F-\sum _{j=1}^{i}\bar{c}_{j}   
\end{equation} 
holds(see line 2 in Figure 1) and the case that the graph of $y=B_{i+1} (V,\; T_{i+1} )+\bar{c}_{i+1} $ and the line of $y=F-\sum _{j=1}^{i}\bar{c}_{j}  $ intersect at only one point(see line 3 in Figure 1).

\begin{center}
\includegraphics*[width=14cm, height=10cm, keepaspectratio=false]{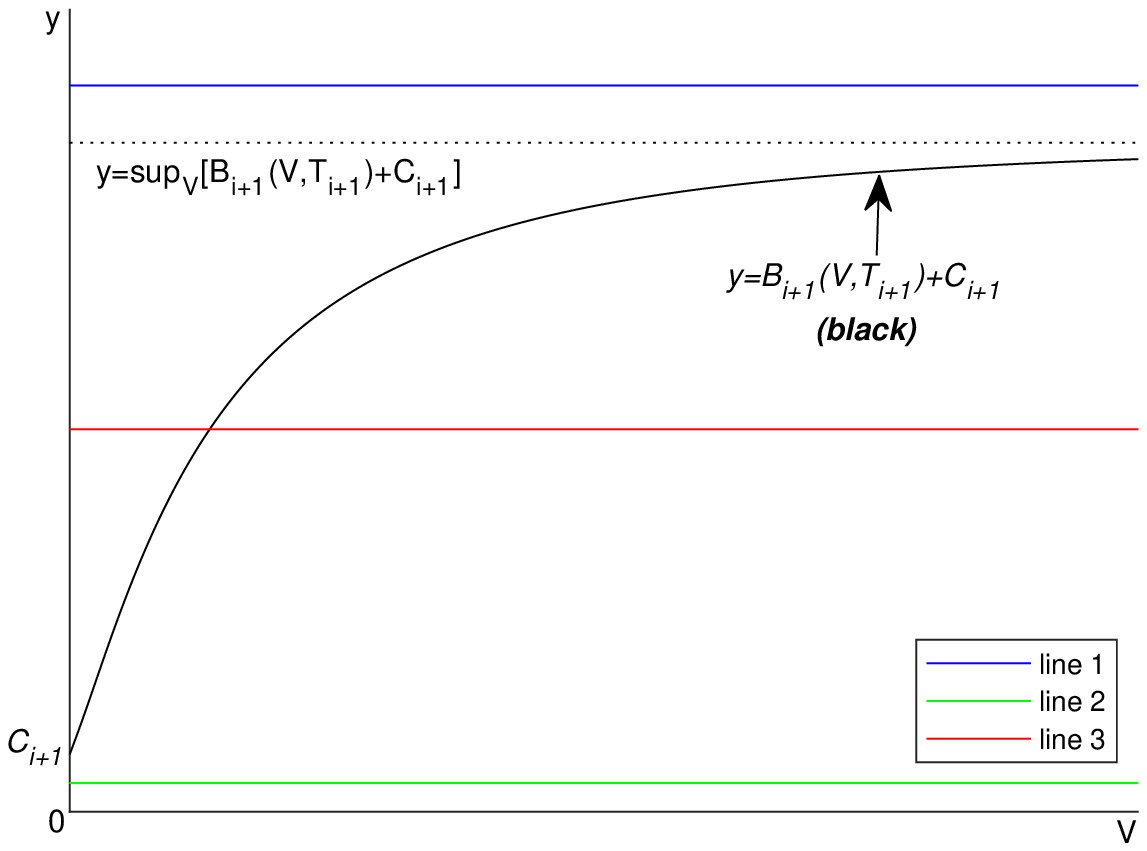}

\noindent Figure 1. The positions of the graph of $y=B_{i+1} (V,\; T_{i+1} )+\bar{c}_{i+1} $ and the line of $y=F-\sum _{j=1}^{i}\bar{c}_{j}  $. Here the lines 1, 2, 3 are the graphs of $y=F-\sum _{j=1}^{i}\bar{c}_{j}  $ in different cases.
\end{center}

Consider the first case (with \eqref{eq18}). Since we assumed \eqref{eq15}, Corollary \ref{corollary:3} implies that \eqref{eq14} holds and Corollary \ref{corollary:1} implies that
\[\sup _{V} [B_{i} (V,\; T_{i} )+\bar{c}_{i} ]>\; F-\sum _{j=1}^{i-1}\bar{c}_{j}  {\rm ,}\;\; i=1,2,\cdots ,N-1.\] 
This means that \eqref{eq18} can not hold for our bond. Therefore, we exclude the first case. Next, consider the second case (with \eqref{eq19}). \eqref{eq19} is equivalent to
\begin{equation} \label{eq20} 
\sum _{j=1}^{i+1}\bar{c}_{j}  >F 
\end{equation} 
and if for some $i=m$ \eqref{eq19} holds, then for all $m<i\le N-2$ \eqref{eq19} holds since $\bar{c}_{i} \ge 0$. That is, for all $m<i\le N-2$ and all $V\in [0,\; +\infty )$, we have
\[B_{i+1} (V,\; T_{i+1} )+\bar{c}_{i+1} >F-\sum _{j=1}^{i}\bar{c}_{j}  .\] 
Now let

\begin{equation} \label{eq21}
M={\mathop{\min }\limits_{0\le k\le N-1}} \left\{\sum _{j=1}^{k+1}\bar{c}_{j}  >F\right\}.                     
\end{equation} 
Then for $0\le i<M$, neither \eqref{eq18} nor \eqref{eq19} holds and we have only the third case (see line 3 in Figure 1).

\begin{remark} \label{remark:4}
The bond holder should keep always the contract at $T_{M+1} $ and the later coupon dates. Thus at $T_{M+1} $ and the later coupon dates, our discrete coupon bond becomes the discrete coupon bond without early redemption \cite{OJKJ}. Therefore, on the interval $T_{M+1} \le t\le T_{N} $, our bond price $B_{i} (V,\; t)\; \; (T_{i} \le t<T_{i+1} ,\; M\le i\le N-1)$ is given by the formula (2.10) in the case of $\lambda _{i} =0$ in \cite{OJKJ}. The result is as follows:
\begin{align}
B_{i} (V,\; t)&=\sum _{k=i}^{N-1}\Big[\bar{c}_{k+1} B_{D_{i+1} \cdots D_{k} D_{k+1} }^{\; +\; \; \; \cdots \; +\; \; +} (V,\; t;\; T_{i+1} ,\; \cdots ,\; T_{k} ,\; T_{k+1} ;\; r,\; b,\; s_{V} ).  \nonumber \\
&+\delta A_{D_{i+1} \cdots D_{k} D_{k+1} }^{\; +\; \; \; \cdots \; +\;  \; -} (V,\; t;\; T_{i+1} ,\; \cdots ,\; T_{k} ,\; T_{k+1} ;\; r,\; b,\; s_{V} )\Big]\; ,\; \; (T_{i} \le t<T_{i+1} ,\; \; M\le i\le N-1).\label{eq22}
\end{align}
\end{remark}
\noindent Thus, we have evaluated the bond price on the interval $T_{M} <t\le T_{N} $.

Now we only need to evaluate the bond price on the interval $T_{0} \le t\le T_{M} $.

\begin{theorem} [Existence and uniqueness of early redemption boundaries] \label{theorem:2}
Suppose that \eqref{eq15} is satisfied. Then for $i={\rm 1},\cdots ,M$ the nonlinear equation
\[B_{i} (V,\; T_{i} )+\bar{c}_{i} =F-\sum _{j=1}^{i-1}\bar{c}_{j}  \] 
has unique root $E_{i} $ and we have
\[B_{i} (V,\; T_{i} )+\bar{c}_{i} \ge F-\sum _{j=1}^{i-1}\bar{c}_{j}  \Leftrightarrow V\ge E_{i} .\] 
\end{theorem}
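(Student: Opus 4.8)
The plan is to fix an index $i \in \{1, \ldots, M\}$ and study the single-variable function $g_i(V) := B_i(V, T_i) + \bar{c}_i$ against the constant level $L_i := F - \sum_{j=1}^{i-1}\bar{c}_j$, showing that $g_i$ meets $L_i$ exactly once. First I would observe that $g_i$ is continuous and, by the gradient estimate \eqref{eq11} of Theorem \ref{theorem:1}, strictly increasing, since $\partial_V g_i(V) = \partial_V B_i(V, T_i) > 0$. Consequently the equation $g_i(V) = L_i$ has at most one root, and the desired equivalence $g_i(V) \geq L_i \Leftrightarrow V \geq E_i$ will follow automatically from monotonicity the moment a root $E_i$ is exhibited. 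So the whole matter reduces to locating $L_i$ relative to the two endpoint values of $g_i$.

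Next I would pin down those endpoints. By the minimum estimate (Lemma \ref{lemma:1}), $B_i(0, T_i) = 0$, so $g_i(0) = \bar{c}_i$; by the maximum estimate (Lemma \ref{lemma:5}), $\lim_{V\to+\infty} g_i(V) = \bar{c}_i + \sum_{j=i+1}^{N} \bar{c}_j e^{-r(T_j - T_i)} = \sum_{j=i}^{N} \bar{c}_j e^{-r(T_j - T_i)}$. It then remains to sandwich $L_i$ weakly above $g_i(0)$ and strictly below this limit. For the lower end I would use the definition \eqref{eq21} of $M$: since $i \leq M$, the index $i-1 < M$ fails the defining strict inequality, so $\sum_{j=1}^{i}\bar{c}_j \leq F$, which is precisely $g_i(0) = \bar{c}_i \leq F - \sum_{j=1}^{i-1}\bar{c}_j = L_i$ (equivalently, the negation of \eqref{eq19}--\eqref{eq20} for the relevant index). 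For the upper end I would invoke Corollary \ref{corollary:2} at $m = i$, which under assumption \eqref{eq15} yields $\sum_{j=i}^{N} \bar{c}_j e^{-r(T_j - T_i)} > F - \sum_{j=1}^{i-1}\bar{c}_j = L_i$, i.e. $\lim_{V\to+\infty} g_i(V) > L_i$.

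With $g_i$ continuous, strictly increasing, $g_i(0) \leq L_i$, and $\sup_V g_i(V) > L_i$, the intermediate value theorem produces $E_i \in [0,+\infty)$ with $g_i(E_i) = L_i$; strict monotonicity makes it unique and delivers $g_i(V) \geq L_i \Leftrightarrow V \geq E_i$. I would also dispose of the harmless boundary case $\sum_{j=1}^{i}\bar{c}_j = F$ by noting that it simply gives $E_i = 0$, for which the equivalence still holds on $[0,+\infty)$, and remark that for $M = 0$ the asserted range of $i$ is empty and the statement is vacuous.

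The only genuinely substantive inputs are the maximum estimate of Lemma \ref{lemma:5} and the purely algebraic Corollary \ref{corollary:2} (both resting on \eqref{eq15}), which together force $\lim_{V\to+\infty} g_i(V) > L_i$; I expect this upper bound to be the main point, while everything else is a routine consequence of monotonicity and the intermediate value theorem. Notably, no separate induction is needed here, since the gradient, minimum, and maximum estimates have already been established for every $i$ in Theorem \ref{theorem:1}, Lemma \ref{lemma:1}, and Lemma \ref{lemma:5}.
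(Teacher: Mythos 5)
Your proposal is correct and follows essentially the same route as the paper: sandwich the level $F-\sum_{j=1}^{i-1}\bar{c}_{j}$ strictly below $\sup_{V}[B_{i}(V,T_{i})+\bar{c}_{i}]$ (you via Lemma \ref{lemma:5} plus Corollary \ref{corollary:2}, the paper via Corollary \ref{corollary:3} plus Lemma \ref{lemma:2} --- equivalent paths through the same lemmas, since Corollary \ref{corollary:3} itself rests on Lemma \ref{lemma:5}) and at or above the minimum $\bar{c}_{i}$ coming from Lemma \ref{lemma:1} and the definition \eqref{eq21} of $M$, then conclude by continuity, the gradient estimate \eqref{eq11}, and the intermediate value theorem. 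Your treatment of the boundary case $\sum_{j=1}^{i}\bar{c}_{j}=F$ (giving $E_{i}=0$) is in fact slightly more careful than the paper, which asserts the strict inequality $\bar{c}_{i}<F-\sum_{j=1}^{i-1}\bar{c}_{j}$ there.
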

\begin{proof}
From \eqref{eq21}, for all $i={\rm 1},\cdots ,M$ we have
\[\bar{c}_{i} <F-\sum _{j=1}^{i-1}\bar{c}_{j}  .\] 
On the other hand, since \eqref{eq15} is satisfied, Corollay \ref{corollary:3} and Lemma \ref{lemma:2} implies
\[F-\sum _{j=1}^{i-1}\bar{c}_{j}  <\sup _{V} \left\{B_{i} (V,\; T_{i} )+\bar{c}_{i} \right\}.\] 
\noindent Now note that $\bar{c}_{i} =\min_{V} \left\{B_{i} (V,\; T_{i} )+\bar{c}_{i} \right\}$ we have
\[\min_{V} \left\{B_{i} (V,\; T_{i} )+\bar{c}_{i} \right\}<F-\sum _{j=1}^{i-1}\bar{c}_{j}  <\sup _{V} \left\{B_{i} (V,\; T_{i} )+\bar{c}_{i} \right\} \;\;\text{(see line 3 in Figure 1)}.\] 
The function $B_{i} (V,\; T_{i} )+\bar{c}_{i} $ is continuous and strictly increasing (see \eqref{eq11}), the nonlinear equation
\[B_{i} (V,\; T_{i} )+\bar{c}_{i} =F-\sum _{j=1}^{i-1}\bar{c}_{j}   ,\;\; \text{i.e.,}\;\;\; B_{i} (V,\; T_{i} )=F-\sum _{j=1}^{i}\bar{c}_{j}\]
\noindent has unique root $E_{i} $ and from \eqref{eq11} we have
\[B_{i} (V,\; T_{i} )+\bar{c}_{i} \ge F-\sum _{j=1}^{i-1}\bar{c}_{j}  \Leftrightarrow V\ge E_{i}.\]
\end{proof}

\begin{remark} \label{remark:5}
For all $i=1,\cdots ,M$ we have
\[B_{i} (V,\; T_{i} )+\bar{c}_{i} \ge F-\sum _{j=1}^{i-1}\bar{c}_{j}  \Leftrightarrow V\ge E_{i} .\] 
Thus $E_{i} $ is called the \textit{early redemption boundary} at $T_{i} $.
\end{remark}

\begin{remark} \label{remark:6}
If between the face value and coupons there is a relation
\[\sum _{j=1}^{N-1}C_{j}  <F,\]
then $M=N-1$ by \eqref{eq21}. Thus the early redemption boundary $E_{i} $ uniquely exists for any $i={\rm 1},\cdots ,N-1$ and our bond becomes the bond with early redemption in the whole interval $0\le t\le T_{N} $. On the other hand, if
\[\sum _{j=1}^{N-1}C_{j}  \ge F\] 
then from \eqref{eq21} we have $M<N-1$ and $E_{i} $ uniquely exists only for $i={\rm 1},\cdots ,M$. And our bond becomes the bond with early redemption on the interval $0\le t\le T_{M} $ but on the interval $T_{M} <t\le T_{N} $, it becomes the bond without early redemption and we calculate the bond price by \eqref{eq22}.
\end{remark}

\section{The pricing formula of our bond}

Now in order to get the formula of the bond price in the interval $0\le t\le T_{M} $, we calculate the bond price $B_{i} (V,\; T_{i+1} )$ at the coupon dates $T_{i} \; \; (i=0,\cdots , M-1)$.

In Figure 2 and Figure 3, the red curve is the graph of $y=\max \{ B_{i+1} (V,\; T_{i+1} )+$ $\bar{c}_{i+1} ,\; \; F-\sum _{j=1}^{i}\bar{c}_{j}  \} $; the blue curve is the graph of $y=B_{i+1} (V,\; T_{i+1} )+C_{i+1} $; the black line is the graph of $y=F-\sum _{j=1}^{i}\bar{c}_{j}  $; the pink line is the graph of $y=V$. As you can see in Figure 2 and Figure 3, there are two cases of the positions of $D_{i+1}$ and $E_{i+1}$: In first case the intersection point of the graph of  $y=V$ and the graph of $y=\max \big\{[B_{i+1} (V,\; T_{i+1})+\bar{c}_{i+1} ],\; F-\sum _{j=1}^{i}\bar{c}_{j}  \big\}$ is on the branch of $y=B_{i+1} (V,\; T_{i+1} )+\bar{c}_{i+1} $ so $D_{i+1}<E_{i+1}$. In second case the intersection point of the graph of $y=V$ and the graph of $y=\max \big\{[B_{i+1} (V,\; T_{i+1} )+\bar{c}_{i+1} ],\; \; F-\sum _{j=1}^{i}\bar{c}_{j}  \big\}$ is on the branch of $y=F-\sum _{j=1}^{i}\bar{c}_{j}  $ so $D_{i+1}>E_{i+1}$.

First we consider the case when the intersection point of the graph of $y=V$ and the graph of $y=\max \big\{[B_{i+1} (V,\; T_{i+1})+\bar{c}_{i+1} ],\; F-\sum _{j=1}^{i}\bar{c}_{j}  \big\}$ is on the branch of $y=B_{i+1} (V,\; T_{i+1} )+\bar{c}_{i+1} $. (Figure 2). Then $D_{i+1} $ is the solution of the equation $V=B_{i+1} (V,\; T_{i+1} )$ $+\bar{c}_{i+1} $ and $E_{i+1} \le D_{i+1} $. As you can see in Figure 2, in this case, the default boundary $D_{i+1} $ corresponds to the default event that occurs because of that the firm value is less than debt when the bond holder keeps the contract and we can rewrite the terminal payoff function at $T_{i+1} $ as follows:
\begin{equation} \label{eq23}
B_{i} (V,\; T_{i+1} )=[B_{i+1} (V,\; T_{i+1} )+\bar{c}_{i+1} ]\cdot 1\left\{V\ge D_{i+1} \right\}+\delta V\cdot 1\left\{V<D_{i+1} \right\}\; .
\end{equation}

\begin{center}
\includegraphics*[width=14cm, height=10cm, keepaspectratio=false]{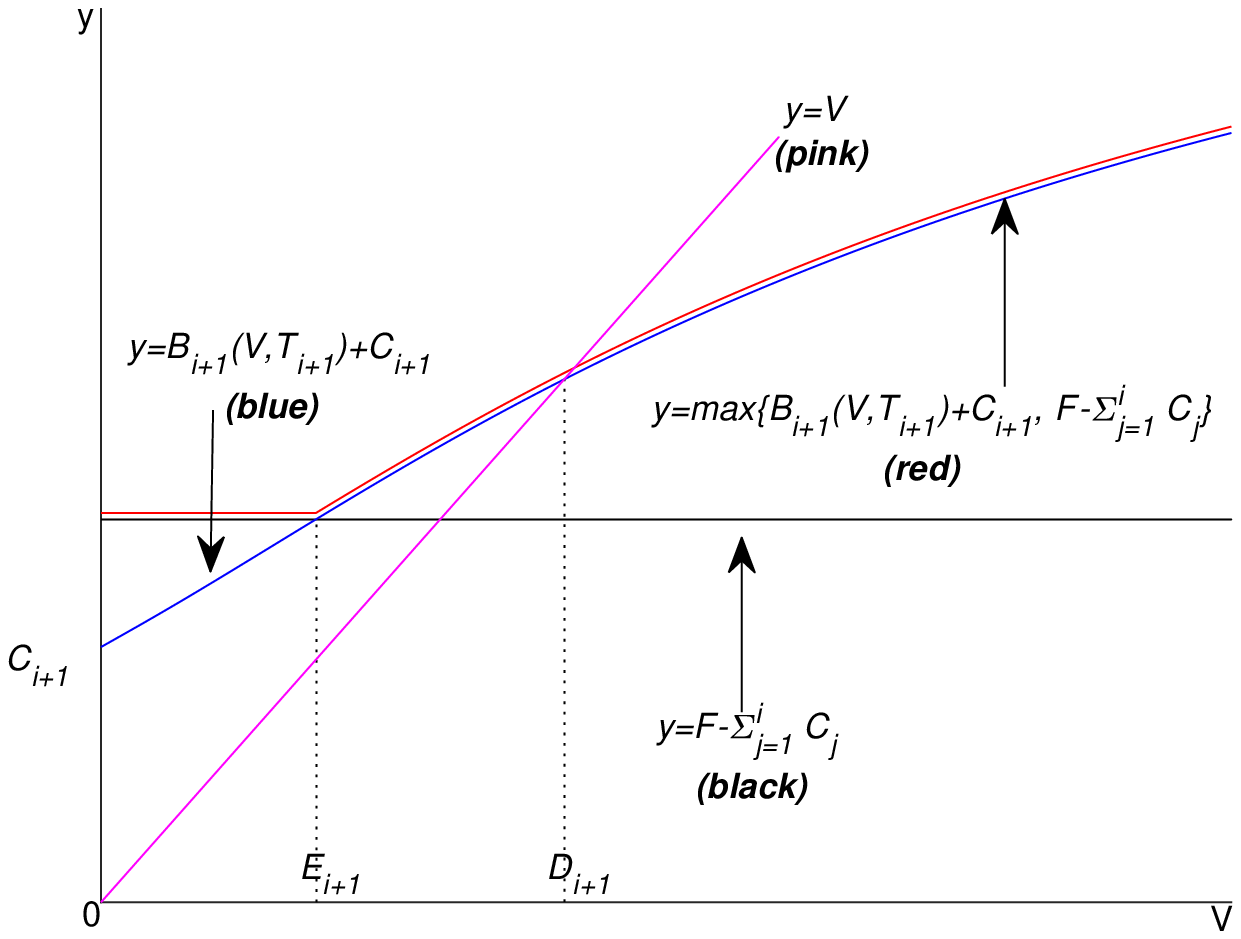}

Figure 2. The intersection point is on the branch of $y=B_{i+1} (V,\; T_{i+1} )+\bar{c}_{i+1} $.
\end{center}

\begin{center}
\includegraphics*[width=14cm, height=10cm,  keepaspectratio=false]{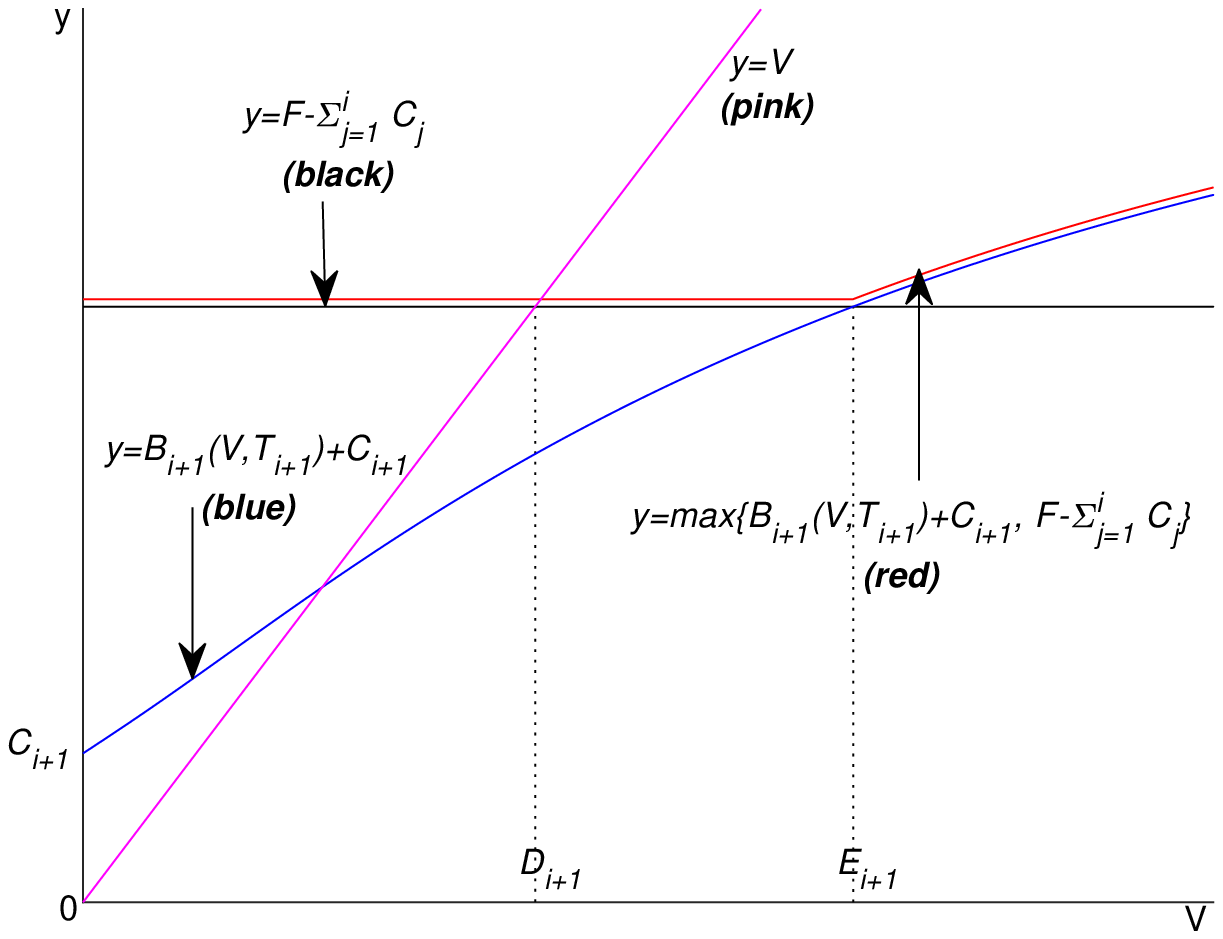}

\noindent Figure 3. The intersection point is on the branch of $y=F-\sum _{j=1}^{i}\bar{c}_{j}  $.
\end{center}

Next we consider the case when the intersection point of the graph of $y=V$ and the graph of $y=\max \big\{[B_{i+1} (V,\; T_{i+1} )+$\\
$\bar{c}_{i+1} ],\; \; F-\sum _{j=1}^{i}\bar{c}_{j}  \big\}$ is on the branch of $y=F-\sum _{j=1}^{i}\bar{c}_{j}  $ (Figure 3). Then $D_{i+1} =F-\sum _{j=1}^{i}\bar{c}_{j}  $ and $D_{i+1} <E_{i+1} $. As you can see in Figure 3, in this case, the default boundary $D_{i+1} $ corresponds to the default event that occurs because of that the firm value is less than the early redemption money when the bond holder demands early redemption and we can rewrite the terminal payoff function at $T_{i+1} $ as follows:

\begin{align}
B_{i} (V,\; T_{i+1} )&=[B_{i+1} (V,\; T_{i+1} )+\bar{c}_{i+1} ]\cdot 1\left\{V\ge E_{i+1} \right\}+(F-\Sigma_{j=1}^{i}\bar{c}_{j})\cdot 1\left\{D_{i+1} \le V<E_{i+1} \right\}+\delta V\cdot 1\left\{V<D_{i+1} \right\} \nonumber\\
&=[B_{i+1} (V,\; T_{i+1} )+\bar{c}_{i+1} ]\cdot 1\left\{V\ge E_{i+1} \right\}+(F-\Sigma _{j=1}^{i}\bar{c}_{j})\cdot \left(1\left\{V\ge D_{i+1} \right\}-1\left\{V\ge E_{i+1} \right\}\right)+\delta V\cdot 1\left\{V<D_{i+1} \right\}. \label{eq24}
\end{align}

\noindent Putting \eqref{eq23} and \eqref{eq24} together, then we have
\begin{align}
B_{i} (V,\; T_{i+1})&=[B_{i+1} (V,\; T_{i+1} )+\bar{c}_{i+1} ]\cdot 1\left\{V\ge \max\{ E_{i+1} ,\; D_{i+1} \} \right\}+ (F-\Sigma_{j=1}^{i}\bar{c}_{j})\cdot 1\{ D_{i+1} <E_{i+1}\}\nonumber\\
& \cdot \left[1\left\{V\ge \min \{ E_{i+1} ,\; D_{i+1} \} \right\}-1\left\{V\ge \max \{ E_{i+1} ,\; D_{i+1} \} \right\}\right]+\delta V\cdot 1\left\{V<D_{i+1} \right\}.\nonumber
\end{align}
Now we use the following notations:
\[U_{i+1} =\max \{ D_{i+1} ,\; E_{i+1} \} ,\;\; L_{i+1} =\min \{ D_{i+1} ,\; E_{i+1} \} , \;i=0,\cdots ,M-1.\]
Then we have
\begin{align}
B_{i} (V,\; T_{i+1} )&=[B_{i+1} (V,\; T_{i+1} )+\bar{c}_{i+1} ]\cdot 1\{ V\ge U_{i+1} \} +(F-\Sigma_{j=1}^{i}\bar{c}_{j})\cdot 1\{ D_{i+1} <E_{i+1} \} \cdot 1\{ V\ge L_{i+1} \}\nonumber\\
&-(F-\Sigma_{j=1}^{i}\bar{c}_{j})\cdot 1\{ D_{i+1} <E_{i+1} \} \cdot 1\{ V\ge U_{i+1} \} +\delta V\cdot 1\{ V<D_{i+1}\}.\label{eq25}
\end{align}

Thus for $i=M-1$ we have
\begin{align}
B_{M-1} (V,\; T_{M})&=[B_{M} (V,\; T_{M} )+\bar{c}_{M} ]\cdot 1\{ V\ge U_{M} \} +(F-\Sigma_{j=1}^{M-1}\bar{c}_{j})\cdot 1\{ D_{M} <E_{M} \} \cdot 1\{ V\ge L_{M} \} \nonumber \\
&-(F-\Sigma_{j=1}^{M-1}\bar{c}_{j})\cdot 1\{ D_{M} <E_{M} \} \cdot 1\{ V\ge U_{M} \} +\delta V\cdot 1\{ V<D_{M} \}. \nonumber
\end{align}
Substituting $B_{M} (V,\; T_{M} )$(\eqref{eq22} for $i=M-1$) to the above, we get
\begin{align}
B_{M-1} (V,\; T_{M})&=\sum _{k=M}^{N-1}\big[\bar{c}_{k+1} B_{D_{M+1} \cdots D_{k} D_{k+1} }^{\; +\; \; \cdots \;\;\;+\; \; +} (V,\; t;\; T_{M} ,\; \cdots ,\; T_{k+1} ;\; r,\; b,\; s_{V} ) \nonumber \\
&+\delta A_{D_{M+1} \; \cdots D_{k} D_{k+1} }^{ \; +\; \; \cdots \;\;\;\; +\; \; -} (V,\; t;\; T_{M} ,\; \cdots ,\; T_{k} ,\; T_{k+1} ;\; r,\; b,\; s_{V})\big]\cdot 1\{ V\ge U_{M} \} + \nonumber\\ 
&+\bar{c}_{M} \cdot 1\{ V\ge U_{M} \} +(F-\Sigma_{j=1}^{M-1}\bar{c}_{j})\cdot 1\{ D_{M} <E_{M} \} \cdot 1\{ V\ge L_{M} \} - \nonumber \\
&-(F-\Sigma_{j=1}^{M-1}\bar{c}_{j})\cdot 1\{ D_{M} <E_{M} \} \cdot 1\{ V\ge U_{M} \} +\delta V\cdot 1\{ V<D_{M}\} \nonumber
\end{align}
By using the pricing formula of higher order binary options\cite{OK}, we can get
\begin{align}
B_{M-1} (V,\; t)&=\sum _{k=M}^{N-1}\big[\bar{c}_{k+1} B_{U_{M} \; D_{M+1} \cdots D_{k} D_{k+1} }^{+\; \; \; \; +\; \; \; \; \cdots\; \; +\; \; +} (V,\; t;\; T_{M} ,\; \cdots ,\; T_{k+1} ;\; r,\; b,\; s_{V}) \nonumber \\
 &+\delta A_{U_{M} \; D_{M+1} \; \cdots D_{k} D_{k+1} }^{+\; \; \; \; +\; \; \; \; \; \cdots \;\; +\; \; -} (V,\; t;\; T_{M} ,\; T_{M+1} ,\; \cdots ,\; T_{k+1} ;\; r,\; b,\; s_{V} )\big] \nonumber\\
&+(F-\Sigma_{j=1}^{M-1}\bar{c}_{j})\cdot 1\{ D_{M} <E_{M} \} \cdot B_{L_{M} }^{+} (V,\; t;\; T_{M} ;\; r,\; b,\; s_{V} )- \label{eq26} \\
&-(F-\Sigma_{j=1}^{M-1}\bar{c}_{j})\cdot 1\{ D_{M} <E_{M} \} \cdot B_{U_{M} }^{+} \cdot (V,\; t;\; T_{M} ;\; r,\; b,\; s_{V} )+\nonumber \\
&+\bar{c}_{M} B_{U_{M} }^{+} (V,\; t;\; T_{M} ;\; r,\; b,\; s_{V} )+\delta A_{D_{M} }^{-} \cdot (V,\; t;\; T_{M} ;\; r,\; b,\; s_{V} ).\nonumber 
\end{align}
We can rewrite this as follows:
\begin{align}
B_{M-1} (V,\; t)&=\sum_{k=M-1}^{N-1}\big[\bar{c}_{k+1} B_{U_{M} \; D_{M+1} \cdots D_{k} D_{k+1} }^{+\; \; \; \; +\; \; \; \; \cdots \;\; +\; \; +} (V,\; t;\; T_{M} ,\; \cdots ,\; T_{k+1} ;\; r,\; b,\; s_{V} ) \nonumber \\
&+\delta A_{U_{M} \; D_{M+1} \; \cdots D_{k} D_{k+1} }^{+\; \; \; \; +\; \; \; \; \; \cdots \;\; +\; \; -} (V,\; t;\; T_{M} ,\; T_{M+1} ,\; \cdots ,\; T_{k+1} ;\; r,\; b,\; s_{V} )\big] \nonumber \\
&+(F-\Sigma_{j=1}^{M-1}\bar{c}_{j})\cdot 1\{ D_{M} <E_{M} \} \cdot B_{L_{M} }^{+} (V,\; t;\; T_{M} ;\; r,\; b,\; s_{V} )- \nonumber \\
&-(F-\Sigma_{j=1}^{M-1}\bar{c}_{j})\cdot 1\{ D_{M} <E_{M} \} \cdot B_{U_{M} }^{+} \cdot (V,\; t;\; T_{M} ;\; r,\; b,\; s_{V} ). \nonumber
\end{align} 
By induction, we assert that the following theorem holds.

For convenience, we use the following notations:
\begin{equation} \label{eq27} 
D_{N} =E_{N} =\bar{c}_{N} ;\;\; U_{i} =D_{i} ,\; \; M<i\le N.
\end{equation} 
\begin{theorem}[The pricing formula] \label{theorem:3}
Under the assumptions of Theorem \ref{theorem:1} and Theorem \ref{theorem:2}, the solution of \eqref{eq5} and \eqref{eq6} and \eqref{eq7} is given as follows: For $i=0,\; \cdots ,\; N-1$,
\begin{align}
B_{i} (V,\; t)&=\sum_{k=i}^{N-1}\big[\bar{c}_{k+1} B_{U_{i+1} \cdots U_{k} U_{k+1} }^{\; +\; \cdots \;\; +\; \; \; +} (V,\; t;\; T_{i+1} ,\; \cdots ,\; T_{k+1} ;\; r,\; b,\; s_{V} ) \nonumber\\
&+\delta A_{U_{i+1} \; \cdots U_{k} D_{k+1} }^{\; +\; \; \cdots \;\; +\; \; -} (V,\; t;\; T_{i+1} ,\; \cdots ,\; T_{k} ,\; T_{k+1} ;\; r,\; b,\; s_{V} )\big]+ \nonumber\\
&+\sum_{k=i+1}^{M}(F-\Sigma_{j=1}^{k-1}\bar{c}_{j})\cdot  1\{D_{k} <E_{k}\} \cdot \big[B_{U_{i+1} \cdots U_{k-1} L_{k} }^{\; +\; \; \cdots \; \; +\; \; \; \; +} (V,\; t;\; T_{i+1} ,\; \cdots ,\; T_{k} ;\; r,\; b,\; s_{V} ) \label{eq28} \\
&-B_{U_{i+1} \cdots U_{k-1} U_{k} }^{\; +\; \; \cdots \; \; +\; \; \; \; +} (V,\; t;\; T_{i+1} ,\; \cdots ,\; T_{k} ;\; r,\; b,\; s_{V} )\big]\; ,\; \; \; T_{i} <t<T_{i+1}. \nonumber
\end{align}
Here $B_{K_{1} \cdots K_{m} }^{+\; \cdots \; +} (x,\; t;\; T_{1} ,\; \cdots ,\; T_{m} ;\; r,\; q,\; \sigma )$ and $A_{K_{1} \cdots K_{m} }^{+\; \cdots \; +} (x,\; t;\; T_{1} ,\; \cdots ,\; T_{m} ;\; r,\; q,\; \sigma )$ is the price of $m$-order bond and asset binary options \cite{OK} with the free risk rate $r$, the dividend rate $q$ and the volatility $\sigma $, respectively.
\end{theorem}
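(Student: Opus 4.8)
The plan is to establish \eqref{eq28} by backward induction on $i$, running from $i=N-1$ down to $i=0$, exploiting the linearity of the Black--Scholes operator in \eqref{eq5} together with the pricing formula for successive-maturity binaries of \cite{OK}. The base of the induction is already available: for $M\le i\le N-1$ early redemption never occurs (Remark \ref{remark:4}), so $B_i$ is given by \eqref{eq22}, and this coincides with \eqref{eq28} once one reads off the convention $U_j=D_j$ for $M<j\le N$ from \eqref{eq27} and observes that the second sum $\sum_{k=i+1}^{M}$ is empty for $i\ge M$. The single step $i=M\mapsto i=M-1$ has likewise been carried out explicitly in \eqref{eq26}. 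Thus it remains only to verify that a generic step $i+1\mapsto i$ reproduces \eqref{eq28}.

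For the inductive step I would first substitute the induction hypothesis for $B_{i+1}(V,T_{i+1})$ into the terminal identity \eqref{eq25}. The decisive structural fact is that every higher-order binary occurring in $B_{i+1}$ carries maturities $T_{i+2}<\cdots<T_{k+1}$, so evaluating it at $t=T_{i+1}<T_{i+2}$ returns a genuine (smooth) option price; multiplying such an $m$-order binary by the indicator $1\{V\ge U_{i+1}\}$ and propagating the product backward across $(T_i,T_{i+1})$ under \eqref{eq5} prepends one new condition \emph{at} $T_{i+1}$, yielding an $(m+1)$-order binary with strike $U_{i+1}$ and a $+$ sign in the leading (earliest-maturity) slot. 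This is exactly the concatenation rule for nested binaries in \cite{OK} (equivalently, the Markov/tower property of the geometric Brownian motion), and it is the engine that turns $U_{i+2}\cdots U_{k+1}$ into $U_{i+1}U_{i+2}\cdots U_{k+1}$ and $U_{i+2}\cdots D_{k+1}$ into $U_{i+1}U_{i+2}\cdots D_{k+1}$.

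Next I would match the resulting terms against \eqref{eq28} slot by slot. The bond-binary terms $\bar c_{k+1}B^{+\cdots+}_{U_{i+2}\cdots U_{k+1}}$ of $B_{i+1}$ give the $k=i+1,\dots,N-1$ entries of the first sum, while the fresh payoff piece $\bar c_{i+1}\cdot 1\{V\ge U_{i+1}\}$ supplies the missing $k=i$ entry $\bar c_{i+1}B^{+}_{U_{i+1}}$; symmetrically, the recovery piece $\delta V\cdot 1\{V<D_{i+1}\}$ of \eqref{eq25} solves to the new first-order term $\delta A^{-}_{D_{i+1}}$, completing the asset-binary sum down to $k=i$. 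The two middle terms of \eqref{eq25}, namely $(F-\Sigma_{j=1}^{i}\bar c_j)\,1\{D_{i+1}<E_{i+1}\}\big(1\{V\ge L_{i+1}\}-1\{V\ge U_{i+1}\}\big)$, solve to $(F-\Sigma_{j=1}^{i}\bar c_j)\,1\{D_{i+1}<E_{i+1}\}(B^{+}_{L_{i+1}}-B^{+}_{U_{i+1}})$, which is precisely the $k=i+1$ term of the second sum; and the second-sum terms already present in $B_{i+1}$ each acquire the prepended $U_{i+1}^{+}$ condition and become the $k=i+2,\dots,M$ terms. Collecting everything reproduces \eqref{eq28} for index $i$.

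The hard part will not be any single computation but the bookkeeping. One must check at each slot that the up/down character of the newly prepended event ($V\ge U_{i+1}$, hence a $+$ superscript) is consistent with the sign pattern inherited from $B_{i+1}$ (survival thresholds $U$ versus the terminal default threshold $D_{k+1}$), that the maturity list stays strictly increasing so that the \cite{OK} formula applies, and that the two geometric configurations of Figures 2--3 are both subsumed by \eqref{eq25} through the notations $U_{i+1}=\max\{D_{i+1},E_{i+1}\}$, $L_{i+1}=\min\{D_{i+1},E_{i+1}\}$ and the factor $1\{D_{i+1}<E_{i+1}\}$. Once these consistency checks are in place the induction closes and \eqref{eq28} holds for all $i=0,\dots,N-1$.
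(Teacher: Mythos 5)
Your proposal is correct and follows essentially the same route as the paper's own proof: the cases $M\le i\le N-1$ are settled directly from \eqref{eq22} with the convention \eqref{eq27}, the step to $i=M-1$ comes from \eqref{eq26}, and the generic inductive step substitutes the hypothesis into the terminal identity \eqref{eq25} and invokes the higher-order binary pricing formula of \cite{OK} to prepend the $U$-condition at $T_{i+1}$, with the leftover first-order terms absorbed into the two sums exactly as you describe. The term-by-term bookkeeping you flag as "the hard part" is precisely what the paper's appendix proof carries out when it folds the last four stand-alone terms into the corresponding sums.
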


\begin{corollary} \label{corollary:4}
Under the assumptions of Theorem 1 and Theorem 2, the solution of \eqref{eq5},\eqref{eq6}, \eqref{eq7} is given as follows: For $i=0,\cdots ,N-1$,
\begin{align}
B_{i} (V,\; t)&=\sum _{k=i}^{N-1}\big[e^{-r(T_{k+1} -t)} \bar{c}_{k+1} N_{k-i+1} (d_{i+1}^{-} (t),\; \cdots ,\; d_{k+1}^{-} (t);\; A_{k-i+1} ) \nonumber \\
&+e^{-b(T_{k+1} -t)} \delta VN_{k-i+1} (d_{i+1}^{+} (t),\; \cdots ,d_{k}^{+} (t),\; -\tilde{d}_{k+1}^{+} (t);\; A_{k-i+1}^{-} )\big]\; + \label{eq29} \\
&+\sum _{k=i+1}^{M}(F-\Sigma_{j=1}^{k-1}\bar{c}_{j})\cdot  1\{ D_{k} <E_{k} \} \cdot e^{-r(T_{k} -t)} \big[N_{k-i} (d_{i+1}^{-} (t),\; \cdots ,\; d_{k-1}^{-} (t),\; \tilde{\tilde{d}}_{k}^{-} (t);\; A_{k-i} ) \nonumber \\
&-N_{k-i} (d_{i+1}^{-} (t),\; \cdots ,\; d_{k-1}^{-} (t),\; d_{k}^{-} (t);\; A_{k-i} )\big]\; ,\; \; \; \; T_{i} <t<T_{i+1} . \nonumber
\end{align} 
Here
\[d_{j}^{\pm } (t)=\frac{1}{s_{V} \sqrt{T_{j} -t} } \left[\ln \frac{V}{U_{j} } +\left(r-b\pm \frac{1}{2} s_{V} {}^{2} \right)(T_{j} -t)\right],\] 
\[\tilde{d}_{j}^{\pm } (t)=\frac{1}{s_{V} \sqrt{T_{j} -t} } \left[\ln \frac{V}{D_{j} } +\left(r-b\pm \frac{1}{2} s_{V} {}^{2} \right)(T_{j} -t)\right],\] 
\[\tilde{\tilde{d}}_{j}^{\pm } (t)=\frac{1}{s_{V} \sqrt{T_{j} -t} } \left[\ln \frac{V}{L_{j} } +\left(r-b\pm \frac{1}{2} s_{V} {}^{2} \right)(T_{j} -t)\right].\]
And the matrix $A_{n} ,\; \; A_{n}^{-} $ are the inverse $A_{n} =\left(R_{n} \right)^{-1} ,\; \; A_{n}^{-} =\left(R_{n}^{-} \right)^{-1} $of the $n\times n$ dimensional matrix $R_{n} ,\; \; R_{n}^{-} $, which are given as follows: if we use the notation $\tilde{T}_{j} =T_{j+i+1} , \; 0\le j<n$ then the elements $r_{lm} (t)$ of $R_{n} $ are given as
\[r_{ll} (t)=1,\; \; r_{lm} (t)=r_{ml} (t)=\sqrt{\frac{\tilde{T}_{l} -t}{\tilde{T}_{m} -t} } ,\; \; l<m,\; \; (l,\; m=0,\cdots ,n-1).\] 
And $R_{n}^{-} $ is the matrix, in which,
\[r_{m,\; n-1}^{-} (t)=-r_{m,\; n-1} (t), r_{n-1,\; m}^{-} (t)=-r_{n-1,\; m} (t), (m=0,\cdots ,n-2)\] 
and the other elements are as in $R_{n} $. And $N_{n} (d_{1} ,\; \cdots ,\; d_{n} ;\; A_{n} )$ is $n$-dimensional normal distribution function with correlation matrix $R_{n} =\left(A_{n} \right)^{-1} $, i.e.,
\[N_{n} (d_{1} ,\; \cdots ,\, d_{n} ;\; A_{n} )=\int _{-\infty }^{d_{1} }\cdots \int _{-\infty }^{d_{n} }\frac{1}{(\sqrt{2\pi } )^{n} } \sqrt{\det A_{n} } \exp \left(-\frac{1}{2} y^{\bot } A_{n} y\right)dy  .\]
\end{corollary}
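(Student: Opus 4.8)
The plan is to read off \eqref{eq29} from the higher-binary representation \eqref{eq28} established in Theorem \ref{theorem:3}, by substituting the explicit analytic pricing formulas for higher-order bond and asset binary options from \cite{OK}. Since Theorem \ref{theorem:3} already writes $B_i(V,t)$ as a finite linear combination of the $m$-order options $B^{+\cdots+}$, $A^{+\cdots+-}$ and differences of $B^{+\cdots+}$, the corollary is in essence a substitution; the real content is the accurate bookkeeping of strikes, signs and correlation matrices.

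First I would recall from \cite{OK} that an $n$-order bond binary option with ordered exercise times $\tilde{T}_1<\cdots<\tilde{T}_n$ and strikes $K_1,\dots,K_n$ has price $e^{-r(\tilde{T}_n-t)}N_n(g_1^-,\dots,g_n^-;A_n)$, where each argument $g_j^-$ is the normalized log-moneyness $\frac{1}{s_V\sqrt{\tilde{T}_j-t}}[\ln(V/K_j)+(r-b-\tfrac12 s_V^2)(\tilde{T}_j-t)]$ built from the strike $K_j$, and the correlation matrix $A_n=(R_n)^{-1}$ has the stated entries $r_{lm}=\sqrt{(\tilde{T}_l-t)/(\tilde{T}_m-t)}$. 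The corresponding asset binary option is obtained by replacing $e^{-r(\tilde{T}_n-t)}$ with $e^{-b(\tilde{T}_n-t)}V$ and each $g_j^-$ with $g_j^+$ (the same expression with $+\tfrac12 s_V^2$). A single ``$-$'' sign in the last slot of the superscript, as in $A^{+\cdots+-}$, amounts to flipping the sign of the last argument and replacing $A_n$ by the matrix $A_n^-$ in which the off-diagonal entries of the last row and column are sign-reversed.

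Then I would substitute term by term, using the correspondences $U_j\mapsto d_j^\pm$, $D_j\mapsto\tilde{d}_j^\pm$, $L_j\mapsto\tilde{\tilde{d}}_j^\pm$ dictated by the three strikes together with the identification $\tilde{T}_j=T_{j+i+1}$. In the first sum of \eqref{eq28}, the all-plus bond option $B^{+\cdots+}_{U_{i+1}\cdots U_{k+1}}$ becomes $e^{-r(T_{k+1}-t)}N_{k-i+1}(d_{i+1}^-,\dots,d_{k+1}^-;A_{k-i+1})$, while the asset option $A^{+\cdots+-}_{U_{i+1}\cdots U_k D_{k+1}}$, whose last strike is $D_{k+1}$ and whose last sign is negative, becomes $e^{-b(T_{k+1}-t)}V\,N_{k-i+1}(d_{i+1}^+,\dots,d_k^+,-\tilde{d}_{k+1}^+;A_{k-i+1}^-)$. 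In the second sum, the difference $B^{+\cdots+}_{U_{i+1}\cdots U_{k-1}L_k}-B^{+\cdots+}_{U_{i+1}\cdots U_{k-1}U_k}$ turns into $e^{-r(T_k-t)}[N_{k-i}(d_{i+1}^-,\dots,d_{k-1}^-,\tilde{\tilde{d}}_k^-;A_{k-i})-N_{k-i}(d_{i+1}^-,\dots,d_{k-1}^-,d_k^-;A_{k-i})]$, since the last strikes $L_k$ and $U_k$ produce the arguments $\tilde{\tilde{d}}_k^-$ and $d_k^-$ respectively. Carrying the coefficients $\bar{c}_{k+1}$, $\delta$ and $(F-\Sigma_{j=1}^{k-1}\bar{c}_{j})\cdot 1\{D_k<E_k\}$ through the substitution yields exactly \eqref{eq29}.

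The main obstacle I expect is precisely this sign-and-matrix bookkeeping: verifying that the unique negative superscript in $A^{+\cdots+-}$ produces both the flipped last argument $-\tilde{d}_{k+1}^+$ and the switch from $A_{k-i+1}$ to $A_{k-i+1}^-$ (and nothing more), and confirming that all the remaining options, carrying only plus signs, retain the unmodified matrices $A_{k-i+1}$ and $A_{k-i}$. The correlation structure itself requires only checking that the entries $r_{lm}=\sqrt{(\tilde{T}_l-t)/(\tilde{T}_m-t)}$ are the correlations of the driving log-price at the ordered exercise dates $\tilde{T}_j=T_{j+i+1}$, which is routine once the time-shift indexing is fixed.
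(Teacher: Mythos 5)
Your proposal is correct and is exactly the argument the paper intends: the paper states Corollary 4 without a separate proof, treating it as an immediate substitution of the explicit formulas for higher-order bond and asset binary options from \cite{OK} into the representation \eqref{eq28} of Theorem \ref{theorem:3}, which is precisely what you carry out. Your bookkeeping of strikes ($U_j\mapsto d_j^\pm$, $D_j\mapsto\tilde d_j^\pm$, $L_j\mapsto\tilde{\tilde d}_j^\pm$), discount factors, and the sign/matrix convention for the single trailing ``$-$'' superscript (flipped last argument and $A_n^-$ in place of $A_n$) matches the paper's stated conventions.
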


\begin{corollary}[The initial bond price] \label{corollary:5}
Under the assumptions of Theorem \ref{theorem:1} and Theorem \ref{theorem:2}, the initial bond price is denoted as follows:
\begin{align}
B_{0}=B_{0} (V_{0} ,\; 0)&=\sum _{k=0}^{N-1}\big[e^{-rT_{k+1} } \bar{c}_{k+1} N_{k+1} (d_{1}^{-} (0),\; \cdots ,\; d_{k+1}^{-} (0);\; A_{k+1} ) \nonumber \\
&+e^{-bT_{k+1} } \delta V_{0} N_{k+1} (d_{1}^{+} (0),\; \cdots ,d_{k}^{+} (0),\; -\tilde{d}_{k+1}^{+} (0);\; A_{k+1}^{-} )\big]+ \label{eq30}\\
&+\sum _{k=1}^{M}(F-\Sigma_{j=1}^{k-1}\bar{c}_{j})\cdot  1\{ D_{k} <E_{k} \} \cdot e^{-rT_{k} } \big[N_{k} (d_{1}^{-} (0),\; \cdots ,\; d_{k-1}^{-} (0),\; \tilde{\tilde{d}}_{k}^{-} (0);\; A_{k})- \nonumber \\
&-N_{k} (d_{1}^{-} (0),\; \cdots ,\; d_{k-1}^{-} (0),\; d_{k}^{-} (0);\; A_{k} )\big] \nonumber
\end{align}
Now we denote the initial leverage ratio $F/V_{0} $ by $L$ and the $k$th coupon ratio $C_{k} /F$ by $c_{k} $. Then the initial price of the bond is as follows:
\begin{align}
&B_{0} (V_{0} ,\; 0)=B_{0} (L,F,c_{1} ,\cdots ,c_{N} ;\delta ;r,b)=F\Big\{e^{-rT_{N}} N_{N} (d_{1}^{-} (0),\; \cdots ,\; d_{N}^{-} (0);\; A_{N} )+ \nonumber \\
&+\sum _{k=0}^{N-1}\Big[e^{-rT_{k+1} {}^{{}^{} } } c_{k+1} N_{k+1} (d_{1}^{-} (0),\; \cdots ,\; d_{k+1}^{-} (0);\; A_{k+1} )+e^{-bT_{k+1} } \frac{\delta }{L} N_{k+1} (d_{1}^{+} (0),\; \cdots ,d_{k}^{+} (0),\; -\tilde{d}_{k+1}^{+} (0);\; A_{k+1}^{-} )\Big]+\label{eq31}\\
&+\sum_{k=1}^{M}(1-\Sigma_{j=1}^{k-1}c_{j}  )\cdot  1\{ D_{k} <E_{k} \} \cdot e^{-rT_{k} } \big[N_{k} (d_{1}^{-} (0),\; \cdots ,\; d_{k-1}^{-} (0),\; \tilde{\tilde{d}}_{k}^{-} (0);\; A_{k} )-N_{k} (d_{1}^{-} (0),\; \cdots ,\; d_{k-1}^{-} (0),\; d_{k}^{-} (0);\; A_{k})\big]\Big\} \nonumber
\end{align}
\end{corollary}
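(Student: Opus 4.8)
The plan is to derive \eqref{eq30} by a direct specialization of the time-$t$ formula \eqref{eq29} of Corollary~\ref{corollary:4}, and then to obtain \eqref{eq31} from \eqref{eq30} by purely algebraic substitutions encoding the definitions $L=F/V_{0}$ and $c_{k}=C_{k}/F$. First I would set $i=0$ and $t=0$ in \eqref{eq29}. The outer index $k$ then runs from $0$ to $N-1$ in the first double sum and from $1$ to $M$ in the second; the arguments $d_{i+1}^{\pm}(t),\dots$ collapse to $d_{1}^{\pm}(0),\dots$, the correlation data $A_{k-i+1}$ become $A_{k+1}$ with the convention $\tilde{T}_{j}=T_{j+1}$ inherited from Corollary~\ref{corollary:4} at $i=0$, and $V$ is evaluated at the initial firm value $V_{0}$. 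Since $B_{0}(V_{0},0)$ is exactly the $i=0$ branch of the solution on $T_{0}\le t<T_{1}$, this mechanical substitution yields \eqref{eq30} with no further work.

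Next I would rewrite \eqref{eq30} in terms of $L$ and the $c_{k}$. The substitutions needed are $\bar{c}_{k+1}=C_{k+1}=Fc_{k+1}$ for the generic coupons $k=0,\dots,N-2$; the special value $\bar{c}_{N}=F+C_{N}=F(1+c_{N})$ for the top coupon; the identity $\delta V_{0}=\delta F/L$ coming from $V_{0}=F/L$; and $F-\sum_{j=1}^{k-1}\bar{c}_{j}=F\big(1-\sum_{j=1}^{k-1}c_{j}\big)$ in the early-redemption sum. Factoring $F$ out of every term then produces \eqref{eq31}: the asset-binary coefficient $\delta V_{0}$ becomes $F\cdot(\delta/L)$, the coupon coefficients become $Fc_{k+1}$, and the early-redemption weights become $F\big(1-\sum_{j=1}^{k-1}c_{j}\big)$.

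The only point requiring care, and the step I expect to be the main bookkeeping obstacle, is the treatment of the top coupon $\bar{c}_{N}$. Because $\bar{c}_{N}=F+C_{N}$ carries an extra summand $F$ that the generic coupons $\bar{c}_{k+1}=C_{k+1}$ do not, the $k=N-1$ bond-binary term $e^{-rT_{N}}\bar{c}_{N}N_{N}(\cdots)$ must be split as $F\,e^{-rT_{N}}N_{N}(d_{1}^{-}(0),\dots,d_{N}^{-}(0);A_{N})+Fc_{N}\,e^{-rT_{N}}N_{N}(\cdots)$. The first piece is precisely the distinguished leading face-value term displayed in \eqref{eq31}, while the second piece merges into the coupon-ratio sum under the convention $c_{N}=C_{N}/F$. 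I would also observe that in the early-redemption sum the index satisfies $j\le k-1\le M-1\le N-2$, so only the generic relation $\bar{c}_{j}=C_{j}=Fc_{j}$ is invoked there and the anomalous value $\bar{c}_{N}$ never intervenes; consequently the factor $F$ can indeed be pulled out uniformly across all three summation groups, completing the derivation of \eqref{eq31}.
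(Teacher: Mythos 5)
Your proposal is correct and matches the paper's (implicit) derivation: Corollary \ref{corollary:5} is exactly Corollary \ref{corollary:4} evaluated at $i=0$, $t=0$, $V=V_{0}$, followed by the substitutions $\bar{c}_{k}=Fc_{k}$ ($k<N$), $\bar{c}_{N}=F(1+c_{N})$, $\delta V_{0}=F\delta/L$, and $F-\sum_{j=1}^{k-1}\bar{c}_{j}=F\bigl(1-\sum_{j=1}^{k-1}c_{j}\bigr)$, with $F$ factored out. You also correctly isolated the one nontrivial bookkeeping point, namely that splitting $\bar{c}_{N}=F+C_{N}$ in the $k=N-1$ term produces the distinguished leading summand $e^{-rT_{N}}N_{N}(d_{1}^{-}(0),\cdots,d_{N}^{-}(0);A_{N})$ in \eqref{eq31}, while the early-redemption sum only ever involves $\bar{c}_{j}$ with $j\le M-1\le N-2$.
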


In what follows, we give numerical examples for the bond price calculated using the pricing formula \eqref{eq29} and Matlab. We use the function \textit{mvncdf} of Matlab in order to calculate multi-dimensional normal distribution function in the pricing formula \eqref{eq29}. The basic data are given as follows:
\[N=3,\; \; T_{1} =1,\; \; T_{2} =2,\; \; T_{3} =3(annum),\]
\[ r=0.03,\; \; b=0,\; \; s_{V} =1.0,\; \; \delta =0.5,\; \; F=1000,\; \; C_{1} =C_{2} =C_{3} =40.\]

\vspace{5mm}
Figure 4 shows the default boundary $D_{1} $ and the early redemption boundary $E_{1} $ at $T_{1}$ when the firm value $V$ varies from 0 to 16000. Here the red line represents the graph-$(V,\; \max \{ B_{1} (V,\; T_{1} )+C_{1} ,\; F\} )$, the blue curve represents the $(V,\; B_{1} (V,\; T_{1} )+C_{1} )$-graph and the black line represents $(V,\; F)$- graph and the pink line represents $(V,\; V)$-graph, respectively.

Figure 5 shows the default boundary $D_{2} $ and the early redemption boundary $E_{2} $ at $T_{2}$ when the firm value $V$ varies from 0 to 16000. Here the red line represents the graph-$(V,\; \max \{ B_{2} (V,\; T_{2} )+C_{2} ,\; F-C_{1} \} )$, the blue curve represents the $(V,\; B_{2} (V,\; T_{2} )+C_{2} )$-graph and the black line represents the $(V,\; F-C_{1} )$-graph and the pink line represents the $(V,\; V)$-graph, respectively.

\vspace{1mm}
As you can see in Figure 4 and Figure 5, for $i=1,2$ we have:

If $0\le V<D_{i} $, then the firm value is in the default region. That is, the default event occurs at $T_{i} $;

If $D_{i} \le V<E_{i} $, then the firm value is in the early redemption region. That is, the bond holder should demand early redemption at $T_{i} $;

If $E_{i} \le V$, then the firm value is in the continuous region. That is, the bond holder should keep the contract at $T_{i} $.

\begin{center}
\includegraphics*[width=14cm, height=10cm, keepaspectratio=false]{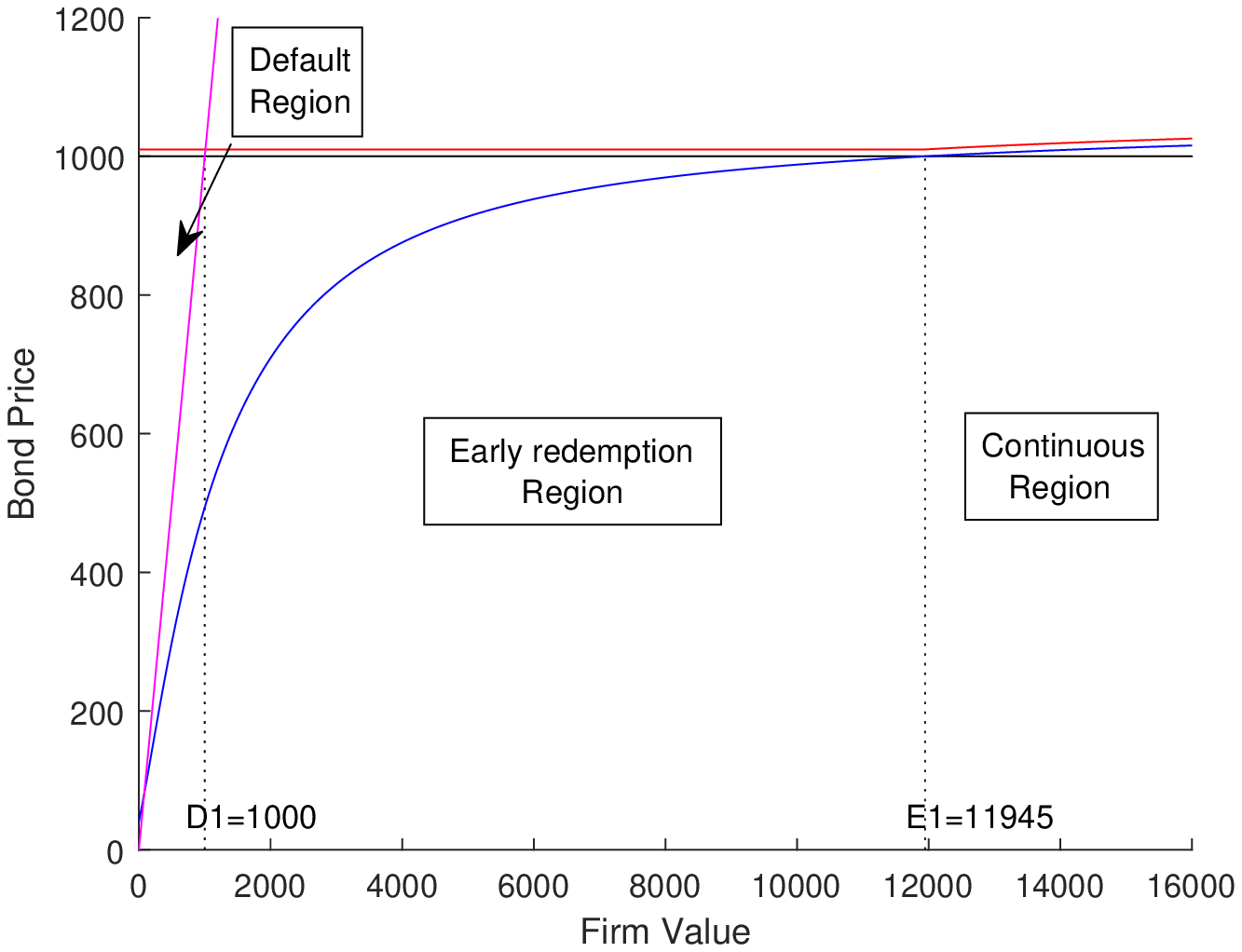}

Figure 4. The default boundary $D_{1} $ and the early redemption boundary $E_{1} $ at $T_{1} $.
\end{center}

\begin{center}
\includegraphics*[width=14cm, height=10cm, keepaspectratio=false]{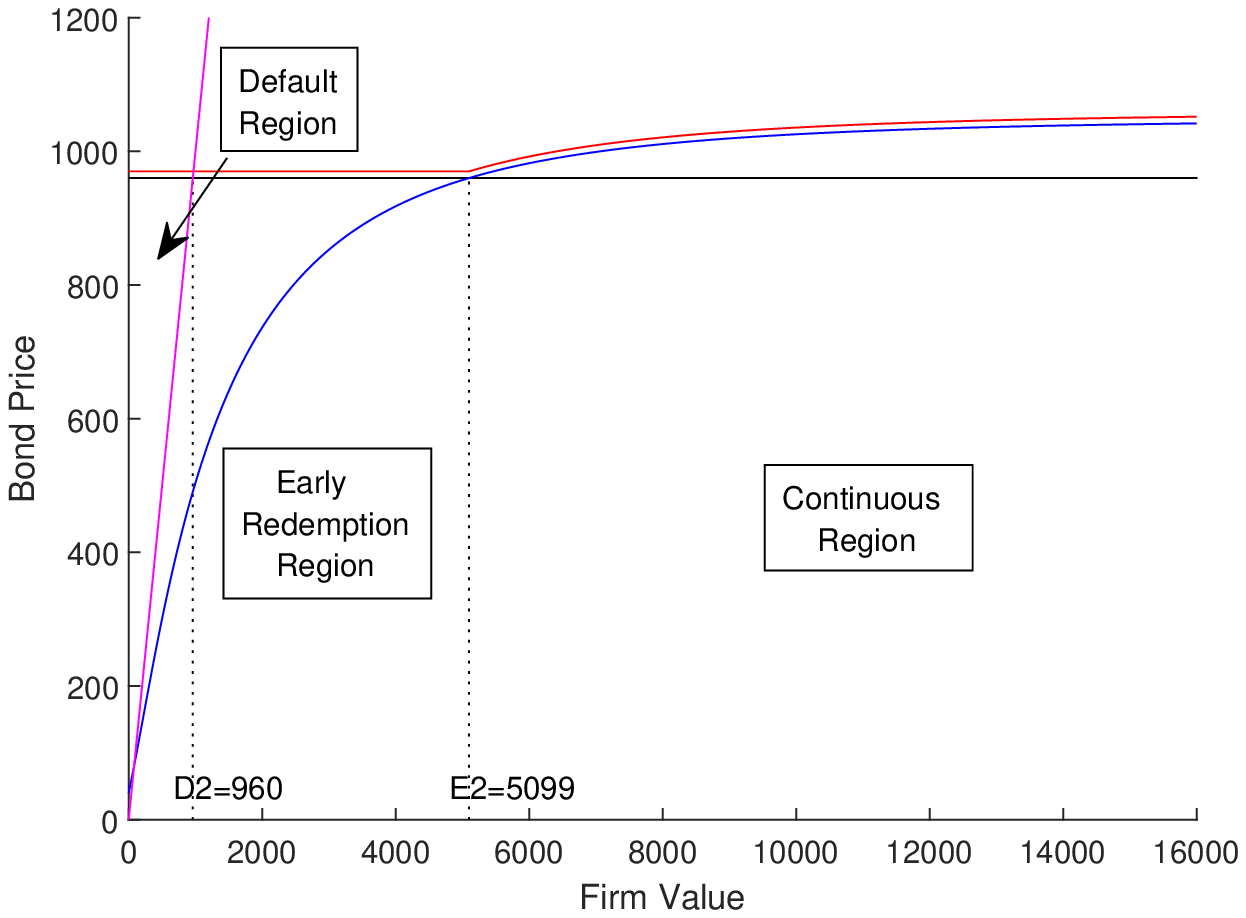}

Figure 5. The default boundary $D_{2} $ and the early redemption boundary $E_{2} $ at $T_{2} $.
\end{center}

\begin{center}
\includegraphics*[width=14cm, height=10cm, keepaspectratio=false]{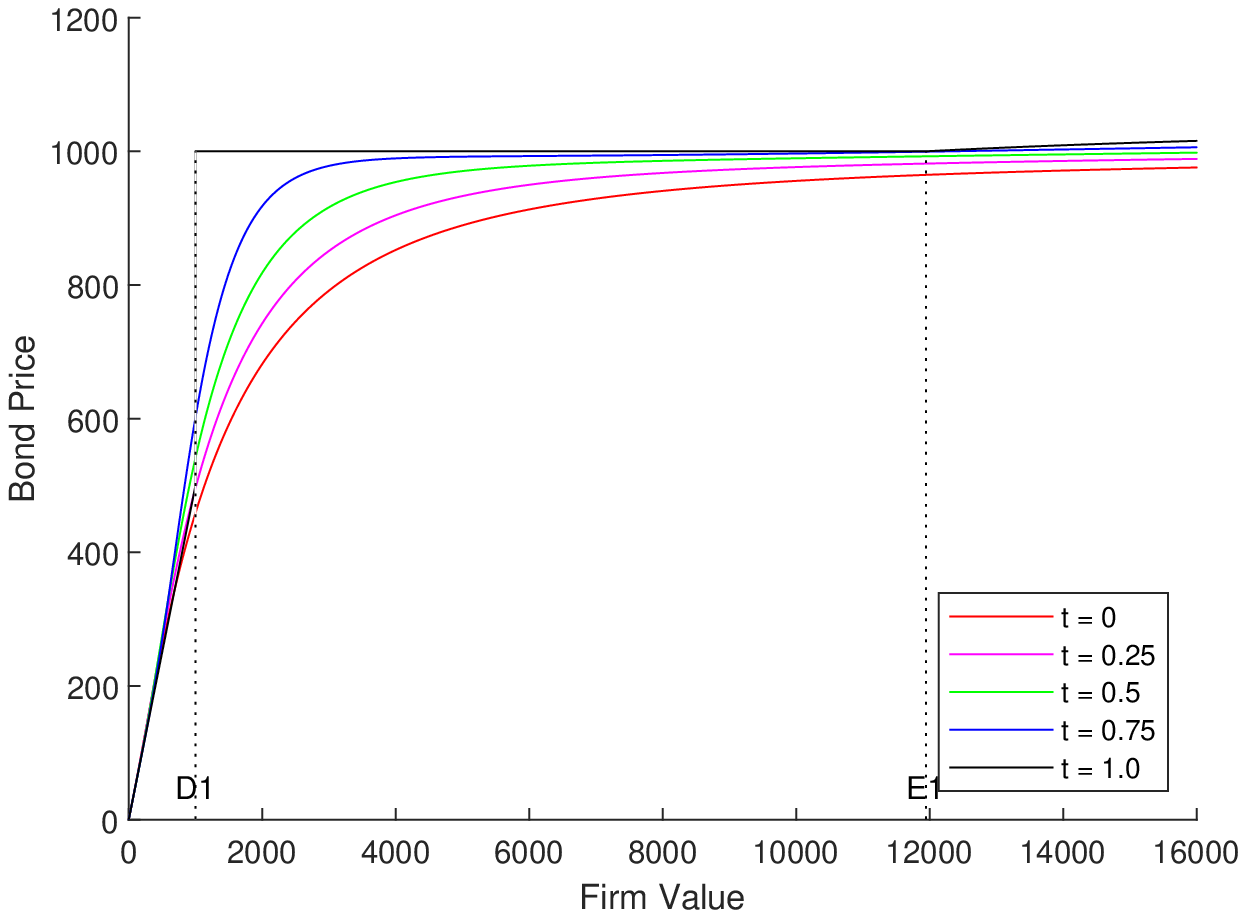}

Figure 6. $(V,\; B_{0} (V,\; t))$-graphs
\end{center}

\begin{center}
\includegraphics*[width=14cm, height=10cm, keepaspectratio=false]{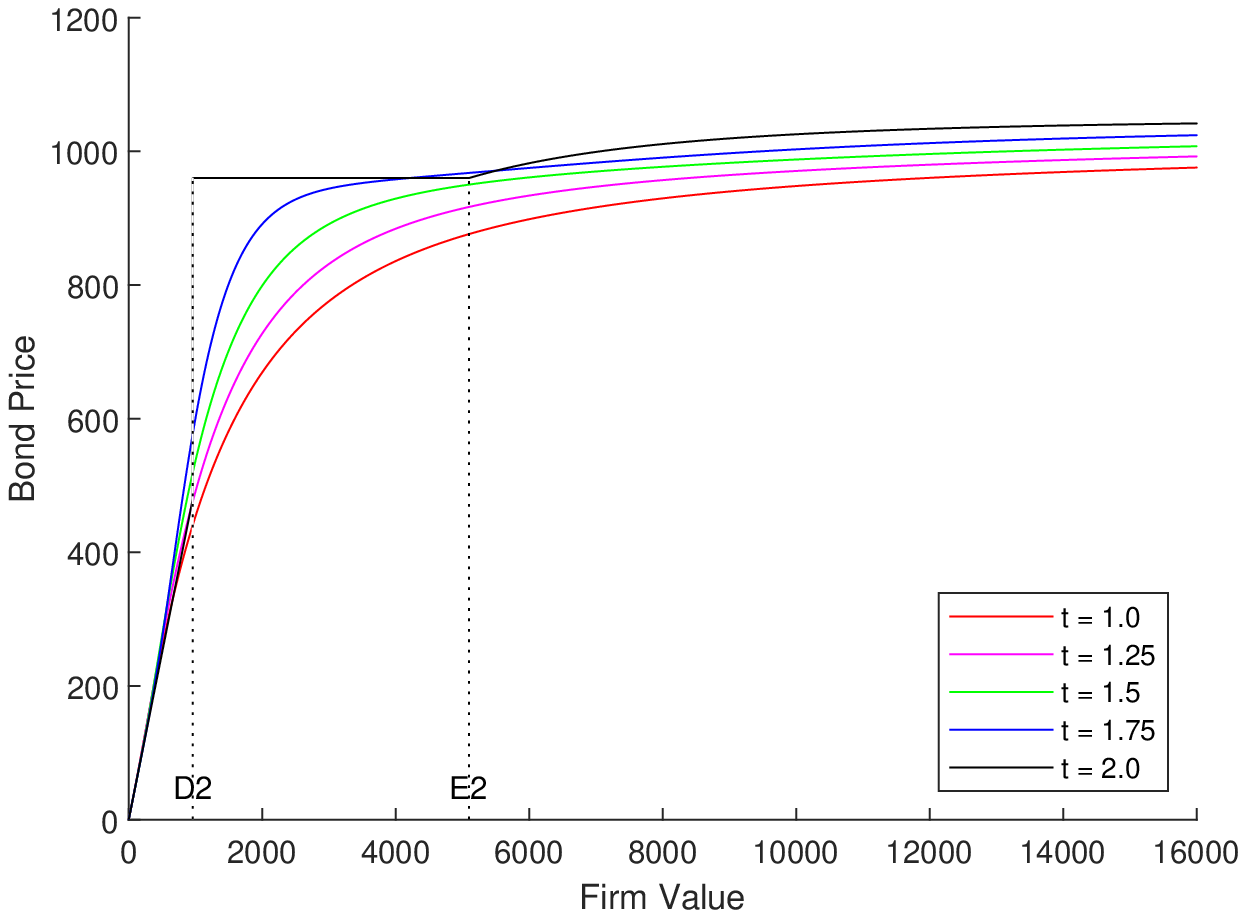}

Figure 7. $(V,\; B_{1} (V,\; t))$-graphs
\end{center}

\begin{center}
\includegraphics*[width=14cm, height=10cm, keepaspectratio=false]{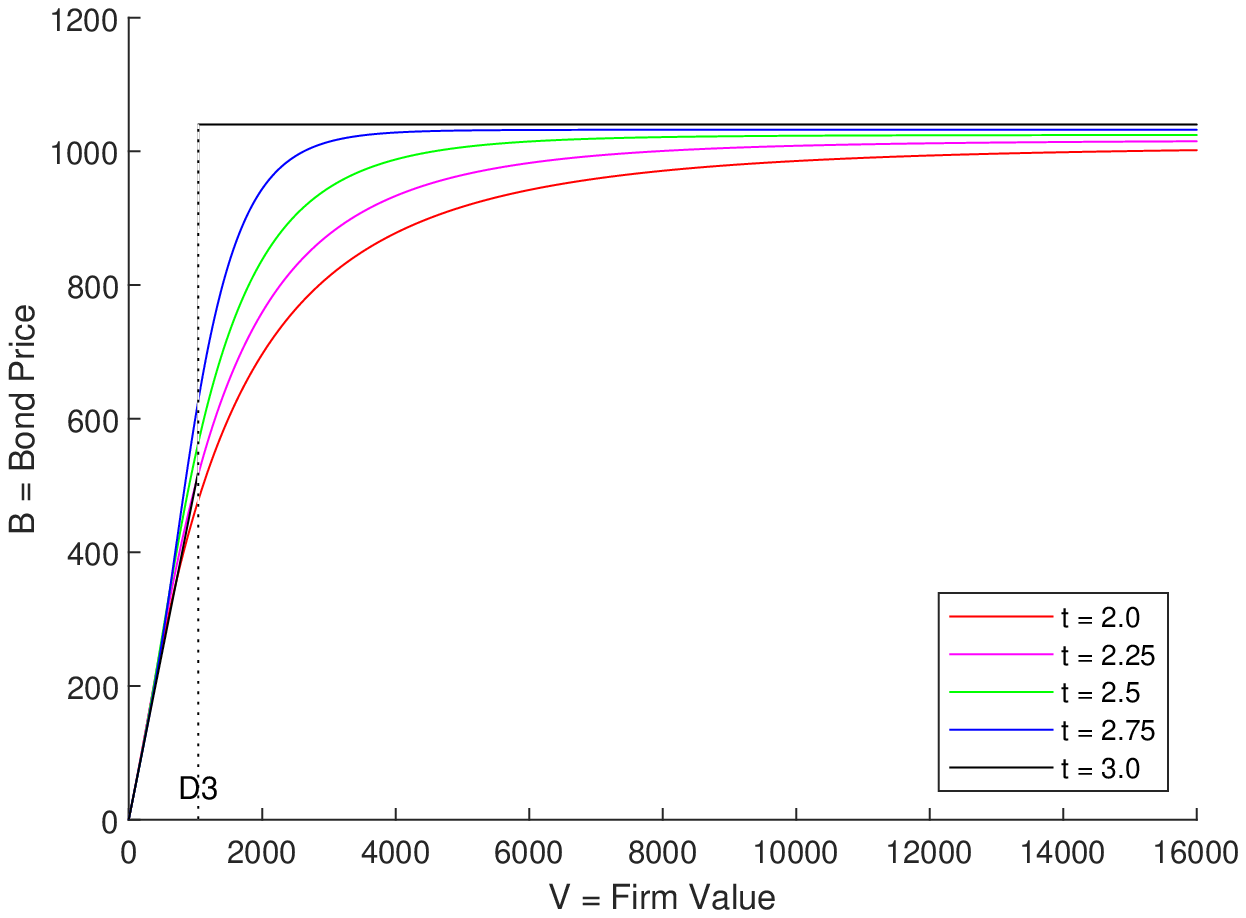}

Figure 8. $(V,\; B_{2} (V,\; t))$-graphs
\end{center}

\begin{center}
\includegraphics*[width=14cm, height=10cm, keepaspectratio=false]{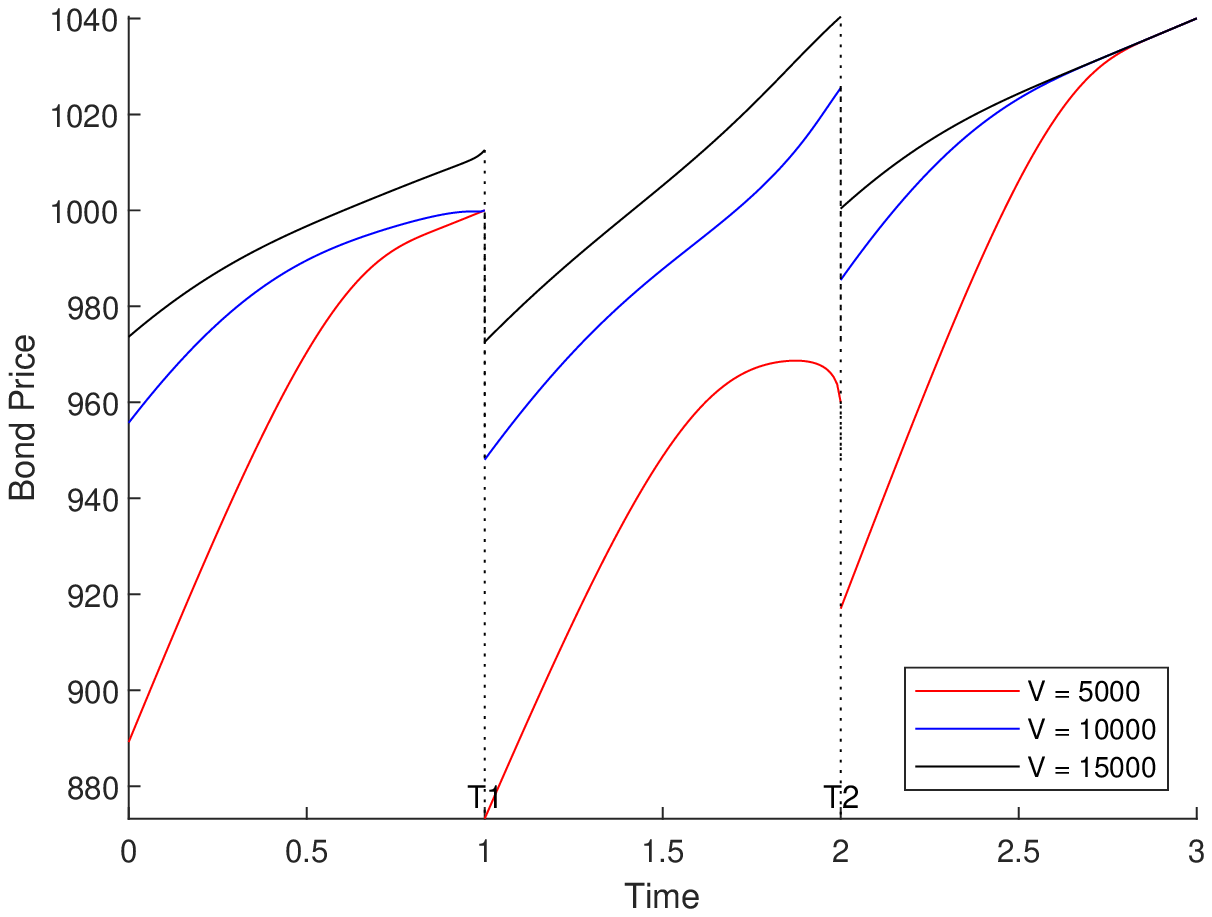}

Figure 9. $(t,\; B(V,\; t))$-graphs
\end{center}

Figure 6 shows the $(V,\; B_{0} (V,\; t))$-graphs at the times $t=0,\;0.25,\;0.5,\;0.75,\;1.0$, respectively when the firm value varies from 0 to 16000.

Figure 7 shows the $(V,\; B_{1} (V,\; t))$-graphs at the times $t=1.0,\;1.25,\;1.5,\;1.75,\;2.0$, respectively when the firm value varies from 0 to 16000.

Figure 8 shows the $(V,\; B_{2} (V,\; t))$-graphs at the times $t=2.0,\; 2.25,\; 2.5,\; 2.75,\; 3.0$, respectively when the firm value varies from 0 to 16000.

Figure 9 shows the $(t,\; B(V,\; t))$-graphs for firm values $V=5000,\; 10000,\; 15000$ on the time interval $[0,\; T]=[0,\; 3]$.

By the numerical calculation, $E_{1} =11945,\; \; E_{2} =5099,\; \; D_{1} =1000,\; \; D_{2} =960$.

1) We consider the case when $V=5000$ (red in Figure 9). Since $D_{1} <V<E_{1} $, the default event doesn't occur and the bond holder demands early redemption at $T_{1} $. Thus, as you can see in the red graph on $[0,T_{1}]$ in Figure 9, we get $B_{0} (V,\; T_{1} )=F$, and thus the bond does not exist on the interval $(T_{1} ,T_{3} ]$. So the real bond price is
\[B(V,\; t)=0,\; \; \; T_{1} <t\le T_{3} .\] 

On the other hand, the graphs of the interval $(T_{1} ,T_{2} ]$ in Figure 9 represent the bond price under the \textit{condition} that the default event didn't occur and the bond holder didn't demand early redemption at $T_{1} $. Under this assumption, since $D_{2} <V<E_{2} $, the default event doesn't occur and the bond holder demands early redemption at $T_{2} $. Thus, as you can see in the red graph on $(T_{1} ,T_{2} ]$ in Figure 9, we get $B_{1} (V,\; T_{2} )=F-C_{1} $ and thus the bond does not exist on the interval $(T_{2} ,T_{3} ]$.

And the graphs of the interval $(T_{2} ,T_{3} ]$ represent the bond price under the \textit{condition} that the default event didn't occur and the bond holder didn't demand early redemption at $T_{1} ,T_{2} $. Since $V>D_{3} =F+C_{3} $, the default event doesn't occur at $T_{3} $ and as you can see in the red graph on $(T_{2} ,T_{3} ]$ in Figure 9, $B(V,\; T_{3} )=F+C_{3} $.

2) We consider the case when $V=10000$ (blue in Figure 9). Since $D_{1} <V<E_{1} $ the default event doesn't occur and the bond holder demands early redemption at $T_{1} $. Thus, as you can see in the blue graph on $[0, T_{1}]$ in Figure 9, $B_{0} (V,\; T_{1} )=F$ and the bond does not exist on the interval $(T_{1} ,T_{3} ]$ like the first case.

On the other hand, the blue graphs of the interval $(T_{1} ,T_{2} ]$ in Figure 9 represent the bond price under the \textit{condition} that the default event didn't occur and the bond holder didn't demand early redemption $T_{1} $. Under this assumption, since $V>E_{2} >D_{2} $, the default event doesn't occur at $T_{2} $ and the bond holder keeps the contract. Thus, as you can see in the blue graph on $(T_{1} ,T_{2} ]$ in Figure 9, $B_{1} (V,\; T_{2} )=B_{2} (V,\; T_{2} )+C_{2} $. And the bond price on the interval $(T_{2} ,T_{3} ]$ is given by the blue curve on the last interval of Figure 9 and $B(V,\; T_{3} )=F+C_{3} $.

3) Next we consider the case when $V=15000$. Since $V>E_{1} >D_{1} $ the default event doesn't occur at $T_{1} $ and the bond holder keeps the contract. Thus, as you can see in the black graph on the first interval in Figure 9, $B_{0} (V,\; T_{1} )=B_{1} (V,\; T_{1} )+C_{1} $. The bond price on the interval $(T_{1} ,T_{2} ]$ is given by the black graph of the second interval of Figure 9 and $B_{1} (V,\; T_{2} )=B_{2} (V,\; T_{2} )+C_{2} $.And the bond price on $(T_{2},  T_{3} ]$ is given by the black graph of the last interval in Figure 9 and $B(V,\; T_{3} )=F+C_{3} $.

\begin{center}
\noindent \includegraphics*[width=14cm, height=10cm, keepaspectratio=false]{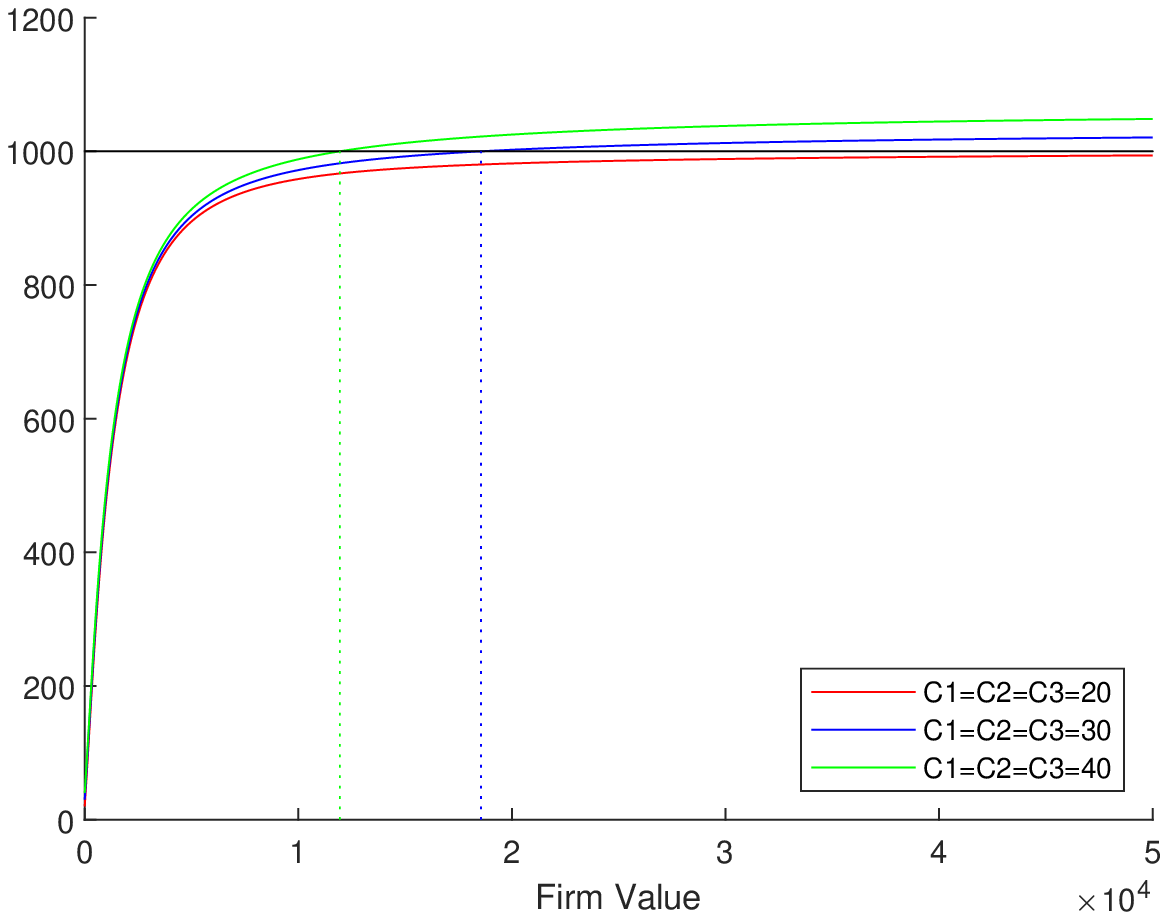}

Figure 10. The early redemption boundary at $T_{1} $.
\end{center}

\begin{center}
\includegraphics*[width=14cm, height=10cm, keepaspectratio=false]{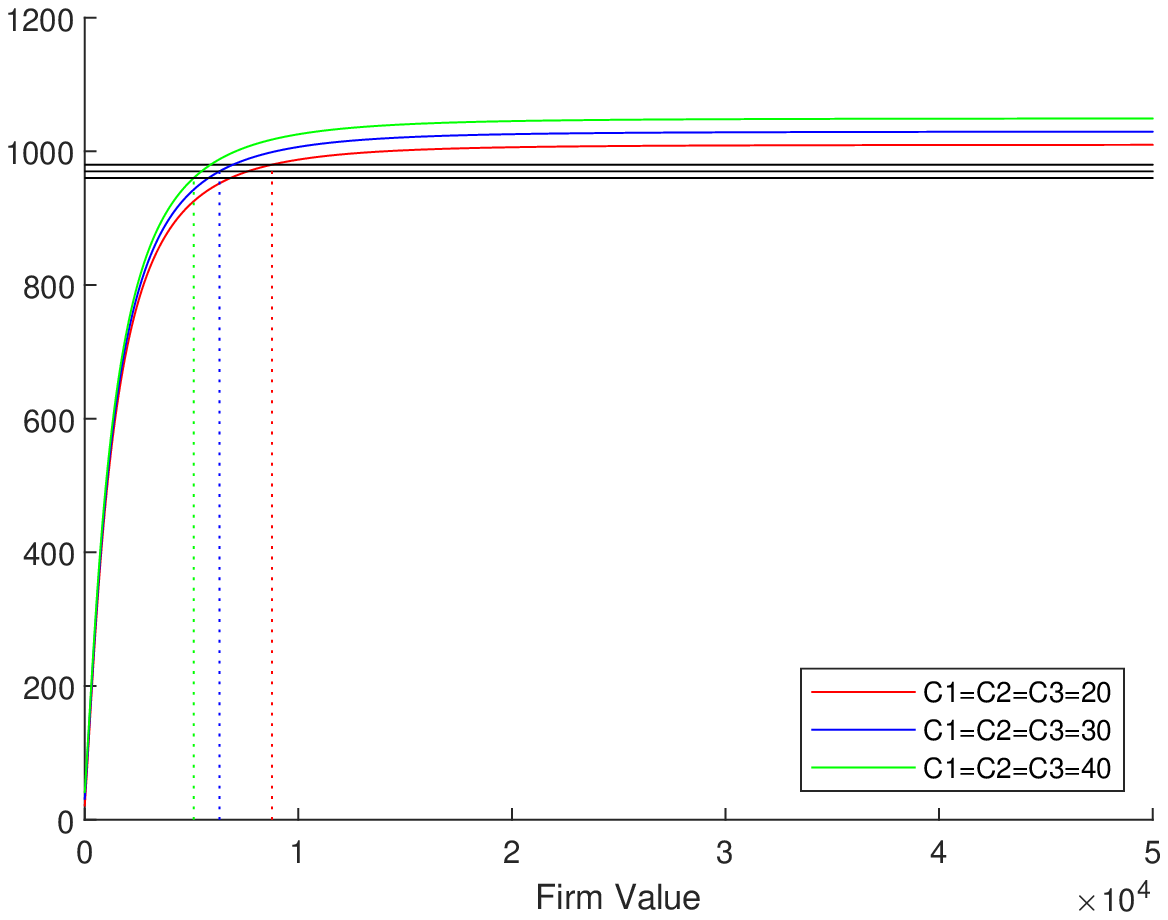}

Figure 11. The early redemption boundary at $T_{2} $.
\end{center}

Figure 10 and Figure 11 show the relation between coupons and early redemption boundaries.

As you can see in Figure 10, in the case when $C_{1} =C_{2} =C_{3} =20$, the assumption \eqref{eq15} (or \eqref{eq16}) is not satisfied and thus the bond holder should demand early redemption at $T_{1} $. In other cases, the assumption \eqref{eq15} is satisfied and thus the early redemption boundary exists. On the other hand, Figure 11 shows the early redemption boundary at $T_{2} $ under the \textit{condition} that the default event doesn't occur or the holder doesn't demand early redemption at $T_{1} $. The results show that increasing coupons makes the early redemption boundary smaller. This is compatible with their financial meaning.

\begin{center}
\includegraphics*[width=14cm, height=10cm, keepaspectratio=false]{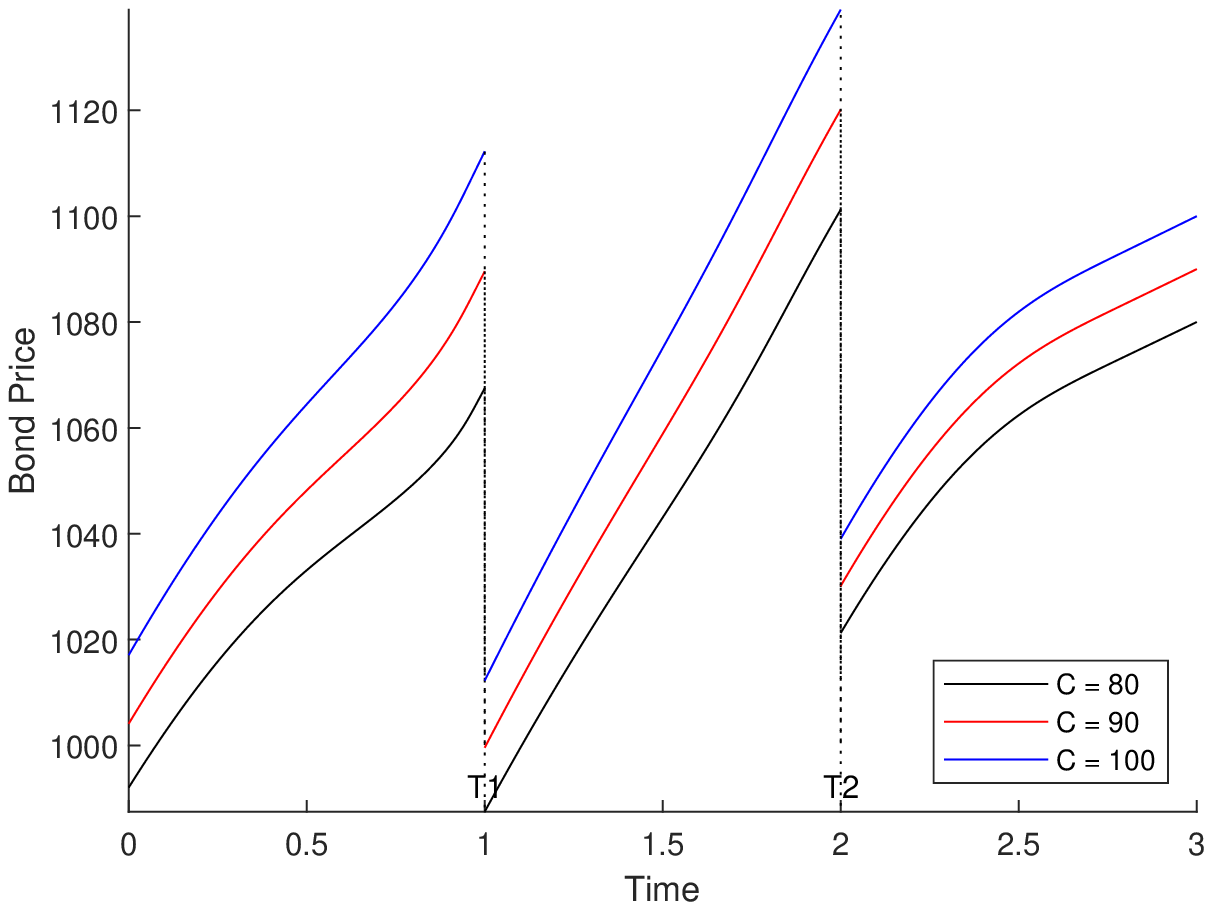}

Figure 12. The $(t,\; B(V,\; t))$-graphs for $V=10000$.
\end{center}

Figure 12 shows the effect of coupons on the bond price. As you can see in Figure 12, in the case of $C_{1} =C_{2} =C_{3} =80$, the initial bond price is less than the face value, in the case of $C_{1} =C_{2} =C_{3} =90$, the initial bond price is slightly larger than the face value, and in the case of $C_{1} =C_{2} =C_{3} =100$, the initial bond price is very larger than the face value. Thus if the coupons are too large, the firm must set up the initial bond price higher than the face value and if the firm wants to set up the initial bond price as the face value, the firm mustn't set up the coupon too large.

\section{Some applications of the pricing formula}

\subsection{Analysis on the duration of the bond}

A duration is a measure of average life of a bond\cite{Hull} and defined as follows:
\[D(V,\; t)=-\frac{1}{B(V,\; t)} \partial _{r} B(V,\; t;\; r).\] 

Now we will use the following notations:
\[f_{k} (r)=N_{k} (d_{1}^{-} ,\; \cdots ,\; d_{k}^{-} ;\; A_{k} ), g_{k} (r)=N_{k} (d_{1}^{+} ,\; \cdots ,d_{k-1}^{+} ,\; -\tilde{d}_{k}^{+} ;\; A_{k}^{-} ),\] 
\[h_{k} (r)=N_{k} (d_{1}^{-} ,\; \cdots ,\; d_{k-1}^{-} ,\; \tilde{\tilde{d}}_{k}^{-} ;\; A_{k} ).\] 
Then the initial bond price \eqref{eq31} can be written by
\begin{align}
B_{0} (V,\; 0)&=\sum _{k=0}^{N-1}\left[\bar{c}_{k+1} e^{-rT_{k+1} } f_{k+1} (r)\right] +e^{-bT_{k+1} } \delta V\sum_{k=0}^{N-1}g_{k+1} (r) \nonumber \\
&+\sum_{k=1}^{M}\left(F-\mathrm{\sum}_{j=1}^{k-1}\bar{c}_{j}\right)\cdot  1\{ D_{k} <E_{k} \} \cdot e^{-rT_{k} } [h_{k} (r)-f_{k} (r)].\nonumber
\end{align}
Thus we have
\begin{align}
-\partial _{r} B_{0}&=\sum _{k=0}^{N-1}\big[\bar{c}_{k+1} e^{-rT_{k+1} } (T_{k+1} f_{k+1} (r)-\partial _{r} f_{k+1})\big] -e^{-bT_{k+1} } \delta V\sum _{k=0}^{N-1}\partial _{r} g_{k+1} \nonumber \\
&+\sum _{k=1}^{M}(F-\Sigma_{j=1}^{k-1}\bar{c}_{j})\cdot  1\{ D_{k} <E_{k} \} \cdot e^{-rT_{k} } \big[T_{k} (h_{k}-f_{k} )-(\partial _{r} h_{k} -\partial _{r} f_{k})\big]. \label{eq32}
\end{align} 

The lemma on the derivative of multi-dimensional normal distribution function is as follows:

\begin{lemma} \label{lemma:6} \cite{OJKJ}
\[\partial _{x} N_{m} (a_{1} (x),\; \cdots ,\; a_{m} (x);\; A)=\sum _{i=1}^{m}\bar{N}_{m,i} (a_{1} (x),\; \cdots ,\; a_{m} (x);\; A)a'_{i} (x) \] 
is satisfied. Here
\[\bar{N}_{m,i} (a_{1} (x),\; \cdots ,\; a_{m} (x);\; A)=\int _{-\infty }^{a_{1} (x)}\cdots \int _{-\infty }^{a_{i-1} (x)}\int _{-\infty }^{a_{i+1} (x)}\cdots \int _{-\infty }^{a_{m} (x)}\frac{\sqrt{\det A} }{(\sqrt{2\pi } )^{m} } \exp \left(-\frac{1}{2} \stackrel{\frown}{y}_{i} (x)^{\bot } A\stackrel{\frown}{y}_{i} (x)\right)d\bar{y}_{i},\] 
\[\stackrel{\frown}{y}_{i} (x)^{\bot } =(y_{1} ,\; \cdots ,\; y_{i-1} ,\; a_{i} (x),\; y_{i+1} ,\; \cdots ,\; y_{m} ),\] 
\[d\bar{y}_{i} =dy_{1} \cdots dy_{i-1} dy_{i+1} \cdots dy_{m} ;\; \; \; i=1,\; \cdots ,\; m.\] 
\end{lemma}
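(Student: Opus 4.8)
The plan is to recognize Lemma~\ref{lemma:6} as an instance of the multivariate chain rule combined with the Leibniz rule for differentiating an integral with respect to its upper limits. Writing $\phi(y)=\frac{\sqrt{\det A}}{(\sqrt{2\pi})^{m}}\exp\left(-\frac{1}{2}y^{\bot}Ay\right)$ for the Gaussian density, I would introduce the auxiliary function of $m$ independent upper limits
\[F(b_{1},\cdots ,b_{m})=\int_{-\infty}^{b_{1}}\cdots \int_{-\infty}^{b_{m}}\phi(y)\,dy,\]
so that $N_{m}(a_{1}(x),\cdots ,a_{m}(x);\,A)=F(a_{1}(x),\cdots ,a_{m}(x))$. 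First I would establish that $F$ is continuously differentiable in each $b_{i}$ and compute its partial derivatives; the assertion then follows by the ordinary chain rule,
\[\partial_{x}F(a_{1}(x),\cdots ,a_{m}(x))=\sum_{i=1}^{m}\left.\frac{\partial F}{\partial b_{i}}\right|_{b=a(x)}a'_{i}(x).\]

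The core computation is $\partial F/\partial b_{i}$. Fixing all $b_{j}$ with $j\ne i$ and viewing $F$ as a function of $b_{i}$ alone, by Fubini's theorem I may perform the $y_{i}$-integration last; differentiating in $b_{i}$ then replaces that integral by evaluation of the integrand at $y_{i}=b_{i}$, leaving the remaining integrations over $y_{j}$, $j\ne i$, untouched. This yields exactly
\[\frac{\partial F}{\partial b_{i}}=\int_{-\infty}^{b_{1}}\cdots \int_{-\infty}^{b_{i-1}}\int_{-\infty}^{b_{i+1}}\cdots \int_{-\infty}^{b_{m}}\phi(y_{1},\cdots ,y_{i-1},b_{i},y_{i+1},\cdots ,y_{m})\,d\bar{y}_{i}=\bar{N}_{m,i}(b_{1},\cdots ,b_{m};\,A),\]
which matches the definition of $\bar{N}_{m,i}$ in the statement. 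Substituting $b=a(x)$ and inserting into the chain rule gives the claimed formula, provided the $a_{i}(x)$ are differentiable---as holds for the explicit arguments $d_{j}^{\pm},\tilde{d}_{j}^{\pm},\tilde{\tilde{d}}_{j}^{\pm}$ appearing in Corollary~\ref{corollary:4}.

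The only point requiring care---and the main, though modest, obstacle---is justifying the interchange of the $b_{i}$-differentiation with the outer integrations over the variables $y_{j}$, $j\ne i$, so that the Leibniz rule may be applied in a single slot. I would handle this by dominated convergence: since $A$ is positive definite, the quadratic form $y^{\bot}Ay$ is coercive, so $\phi$ together with its difference quotients in $b_{i}$ is dominated, uniformly for $b_{i}$ in a compact neighbourhood, by an integrable Gaussian envelope. Hence every integral converges absolutely, the $b_{i}$-differentiation passes inside the outer integrals, and $F\in C^{1}$ with the partial derivatives computed above. This completes the plan.
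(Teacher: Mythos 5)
Your proof is correct, but note that the paper itself offers no proof to compare against: Lemma~\ref{lemma:6} is imported verbatim from the cited reference \cite{OJKJ} and is used as a black box in the duration computation of Theorem~\ref{theorem:4}. Your argument is the standard (and essentially the only natural) one: view $N_{m}$ as the composition of a smooth function $F$ of $m$ upper limits with the curve $x\mapsto(a_{1}(x),\ldots,a_{m}(x))$, identify $\partial F/\partial b_{i}$ with $\bar N_{m,i}$ by Fubini plus the fundamental theorem of calculus in the $y_{i}$-slot, and then invoke the chain rule. The two points you flag are indeed the only ones needing care, and you handle both: (i) positive definiteness of $A$ (which holds here, since $A$ is the inverse of a nondegenerate correlation matrix $R_{n}$) gives the coercive quadratic form and hence an integrable Gaussian envelope dominating the integrand and its difference quotients, so differentiation passes through the outer $m-1$ integrations; and (ii) the resulting partials $\bar N_{m,i}$ are continuous in $(b_{1},\ldots,b_{m})$, so $F\in C^{1}$ and the classical chain rule (not merely existence of partials) applies, with the $a_{i}$ differentiable as required by the explicit formulas for $d_{j}^{\pm}$, $\tilde d_{j}^{\pm}$, $\tilde{\tilde d}_{j}^{\pm}$. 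So your proposal supplies a self-contained justification of a statement the paper only cites; there is no gap.
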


Using Lemma \ref{lemma:6} and
\[\frac{\partial }{\partial r} d_{i}^{\pm } (0)=\frac{\partial }{\partial r} \tilde{d}_{i}^{\pm } (0)=\frac{\partial }{\partial r} \tilde{\tilde{d}}_{i}^{\pm } (0)=\frac{T_{i} }{s_{V} \sqrt{T_{i} } } ,\] 
we can get
\begin{align}
&\partial _{r} f_{k+1} (r)=\partial _{r} N_{k+1} (d_{1}^{-} ,\; \cdots ,\; d_{k+1}^{-} ;\; A_{k+1} )=\sum _{i=1}^{k+1}\bar{N}_{k+1,i} (d_{1}^{-} (r),\; \cdots ,\; d_{k+1}^{-} (r);\; A_{k+1} )\frac{T_{i} }{s_{V} \sqrt{T_{i} } } , \nonumber \\
&\partial _{r} g_{k+1} (r)=\partial _{r} N_{k+1} (d_{1}^{+} ,\; \cdots ,d_{k}^{+} ,\; -\tilde{d}_{k+1}^{+} ;\; A_{k+1}^{-} )= \nonumber \\
&=\sum _{i=1}^{k}\bar{N}_{k+1,i} (d_{1}^{-} (r),\; \cdots ,\; d_{k}^{+} (r),\; -\tilde{d}_{k+1}^{+} (r);\; A_{k+1} )\frac{T_{i} }{s_{V} \sqrt{T_{i}} }-\bar{N}_{k+1,k+1} (d_{1}^{-} (r),\; \cdots \; ,\; d_{k}^{+} (r),\; -\tilde{d}_{k+1}^{+} (r);\; A_{k+1} )\frac{T_{k+1} }{s_{V} \sqrt{T_{k+1}}}, \nonumber \\
&\partial _{r} h_{k} (r)=\partial _{r} N_{k} (d_{1}^{-} ,\; \cdots ,\; d_{k-1}^{-} ,\; \tilde{\tilde{d}}_{k}^{-} ;\; A_{k} )=\sum _{i=1}^{k}\bar{N}_{k,i} (d_{1}^{-} (r),\; \cdots ,\; d_{k-1}^{-} ,\; \tilde{\tilde{d}}_{k}^{-} (r);\; A_{k} )\frac{T_{i} }{s_{V} \sqrt{T_{i} } } . \label{eq33}
\end{align}

Substituting these expressions into \eqref{eq32}, then we have the following theorem:
\begin{theorem}[Duration] \label{theorem:4}
\begin{align}
\tilde{D}=\frac{-\partial _{r} B_{0} }{B_{0} }&=\frac{1}{B_{0} } \sum _{k=0}^{N-1}\big[\bar{c}_{k+1} e^{-rT_{k+1} }(T_{k+1} f_{k+1} (r)-F_{k+1})-e^{-bT_{k+1} } \delta V_{0} G_{k+1} \big]+\nonumber \\
&+\frac{1}{B_{0} } \sum _{k=1}^{M}(F-\Sigma_{j=1}^{k-1}\bar{c}_{j})\cdot  1\{ D_{k} <E_{k} \} \cdot e^{-rT_{k} } [T_{k} (h_{k} -f_{k} )-(H_{k} -F_{k})]. \nonumber
\end{align}
Here
\begin{align}
&F_{k} =\sum _{i=1}^{k}\bar{N}_{k,i} (d_{1}^{-} (r),\; \cdots ,\; d_{k}^{-} (r);\; A_{k} )\frac{T_{i} }{s_{V} \sqrt{T_{i} } }, \nonumber \\
&G_{k} =\sum _{i=1}^{k-1}\bar{N}_{k,i} (d_{1}^{-} (r),\; \cdots ,\; d_{k-1}^{-} (r),\; -\tilde{d}_{k}^{+} (r);\; A_{k} )\frac{T_{i} }{s_{V} \sqrt{T_{i} } }-\bar{N}_{k,k} (d_{1}^{-} (r),\; \cdots ,\; d_{k-1}^{-} (r),\; -\tilde{d}_{k}^{+} (r);\; A)\frac{T_{k} }{s_{V} \sqrt{T_{k} } } ,\nonumber \\
&H_{k} =\sum_{i=1}^{k}\bar{N}_{k,i} (d_{1}^{-} (r),\; \cdots ,\; d_{k-1}^{-} (r),\; \tilde{\tilde{d}}_{k}^{-} (r);\; A_{k} )\frac{T_{i} }{s_{V} \sqrt{T_{i} } }. \nonumber
\end{align}
\end{theorem}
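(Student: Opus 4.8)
The plan is to obtain the duration by differentiating the initial price $B_0$ with respect to $r$, starting from its representation in terms of $f_k(r)$, $g_k(r)$, $h_k(r)$ displayed just before the theorem. First I would differentiate term by term, observing that $r$ enters each summand both explicitly, through the discount factors $e^{-rT_{k+1}}$ and $e^{-rT_k}$, and implicitly, through the arguments $d_i^\pm$, $\tilde d_i^\pm$, $\tilde{\tilde d}_i^\pm$ of the normal distribution functions. The product rule applied to a typical term $\bar c_{k+1}e^{-rT_{k+1}}f_{k+1}(r)$ gives $\bar c_{k+1}e^{-rT_{k+1}}\bigl(-T_{k+1}f_{k+1}+\partial_r f_{k+1}\bigr)$, so after negation it contributes the grouping $\bar c_{k+1}e^{-rT_{k+1}}(T_{k+1}f_{k+1}-\partial_r f_{k+1})$; the asset-binary term carries the $r$-free factor $e^{-bT_{k+1}}$ and hence contributes only $-e^{-bT_{k+1}}\delta V\,\partial_r g_{k+1}$. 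Collecting all contributions reproduces the intermediate expression \eqref{eq32}.

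The second step is to evaluate the implicit derivatives $\partial_r f_{k+1}$, $\partial_r g_{k+1}$, $\partial_r h_k$ via Lemma~\ref{lemma:6}. The chain rule needs the $r$-derivatives of the arguments, and a direct computation from the definitions at $t=0$ yields the common value $\partial_r d_i^\pm(0)=\partial_r\tilde d_i^\pm(0)=\partial_r\tilde{\tilde d}_i^\pm(0)=T_i/(s_V\sqrt{T_i})$, independent of the $\pm$ sign and of whether the strike is $U_i$, $D_i$, or $L_i$, since the only $r$-dependence of each argument sits in the drift term $(r-b\pm\tfrac12 s_V^2)(T_i-t)$. Substituting these into Lemma~\ref{lemma:6} gives exactly the formulas \eqref{eq33}. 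The one point requiring care is the last slot of $g_{k+1}$, which equals $-\tilde d_{k+1}^+$: the chain rule then attaches an additional minus sign to the $(k+1)$-th summand, producing the $-\bar N_{k+1,k+1}\,T_{k+1}/(s_V\sqrt{T_{k+1}})$ term.

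Finally I would set $F_k:=\partial_r f_k$, $G_k:=\partial_r g_k$, $H_k:=\partial_r h_k$, that is, the three expressions in \eqref{eq33} with the index shifted from $k+1$ down to $k$, so that they coincide with the sums $F_k$, $G_k$, $H_k$ written in the statement. Substituting $\partial_r f_{k+1}=F_{k+1}$, $\partial_r g_{k+1}=G_{k+1}$, $\partial_r h_k=H_k$ and $\partial_r f_k=F_k$ into \eqref{eq32} and dividing through by $B_0$ then gives the claimed expression for $\tilde D=-\partial_r B_0/B_0$. There is no substantive obstacle here: the argument is an exercise in disciplined bookkeeping, and the only genuinely error-prone points are the sign flip in the final slot of $G_k$ and the consistent re-indexing from $k+1$ to $k$ when passing from \eqref{eq33} to the definitions of $F_k$, $G_k$, $H_k$.
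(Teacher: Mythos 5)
Your proposal is correct and follows essentially the same route as the paper: differentiate the representation of $B_{0}$ in terms of $f_{k}$, $g_{k}$, $h_{k}$ to obtain \eqref{eq32}, evaluate $\partial_{r}f_{k+1}$, $\partial_{r}g_{k+1}$, $\partial_{r}h_{k}$ via Lemma~\ref{lemma:6} together with $\partial_{r}d_{i}^{\pm}(0)=\partial_{r}\tilde{d}_{i}^{\pm}(0)=\partial_{r}\tilde{\tilde{d}}_{i}^{\pm}(0)=T_{i}/(s_{V}\sqrt{T_{i}})$ as in \eqref{eq33}, and substitute. You also correctly flag the two delicate points (the sign flip from the argument $-\tilde{d}_{k+1}^{+}$ in $G_{k}$ and the re-indexing), which is exactly where the paper's computation requires care.
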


\subsection{Credit Spread}
The credit spread is defined in every subintervals as follows:
\[CS_{i} =-\frac{\ln (B_{i} (V,\; t))-\ln Z_{i} (t;\; T)}{T-t} ,\; \; i=0,\; \cdots ,\; N-1.\] 
Here
\[Z_{i} (t;\; T)=\sum _{j=i+1}^{N}\bar{c}_{j} \cdot e^{-r(T_{j} -t)}  .\] 

In what follows, we give numerical examples of credit spread calculated by using Matlab. The basic data are as follows:
\[N=3,\; \; T_{1} =1,\; \; T_{2} =2,\; \; T_{3} =3(annum),\]
\[ r=0.03,\; \; b=0,\; \; s_{V} =1.0,\; \; \delta =0.5,\; \; F=1000,\; \; C_{1} =C_{2} =C_{3} =40.\] 
\begin{center}
\includegraphics*[width=14cm, height=5cm, keepaspectratio=false]{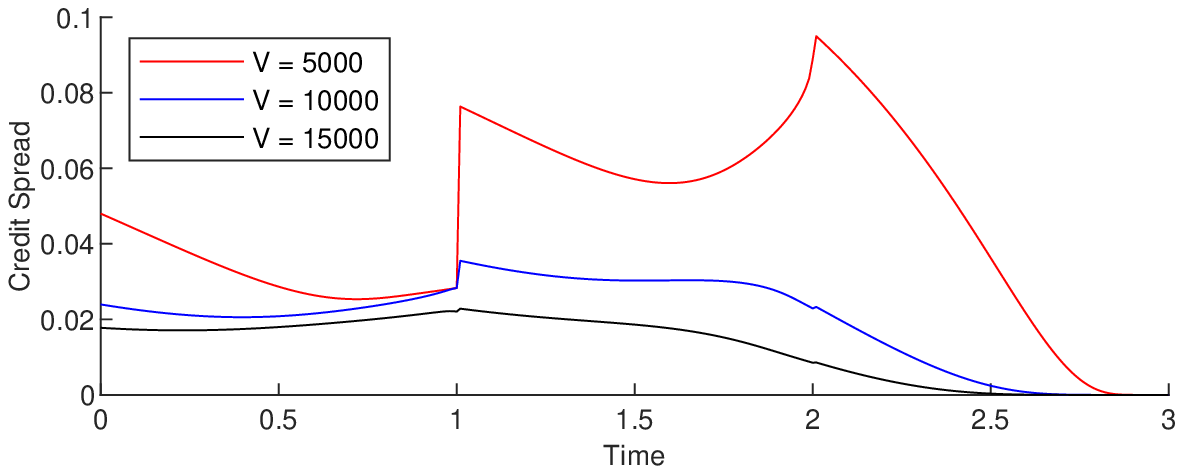}

Figure 13. $(t,\; CS)$-graphs for different firm values

\noindent \includegraphics*[width=14cm, height=5cm, keepaspectratio=false]{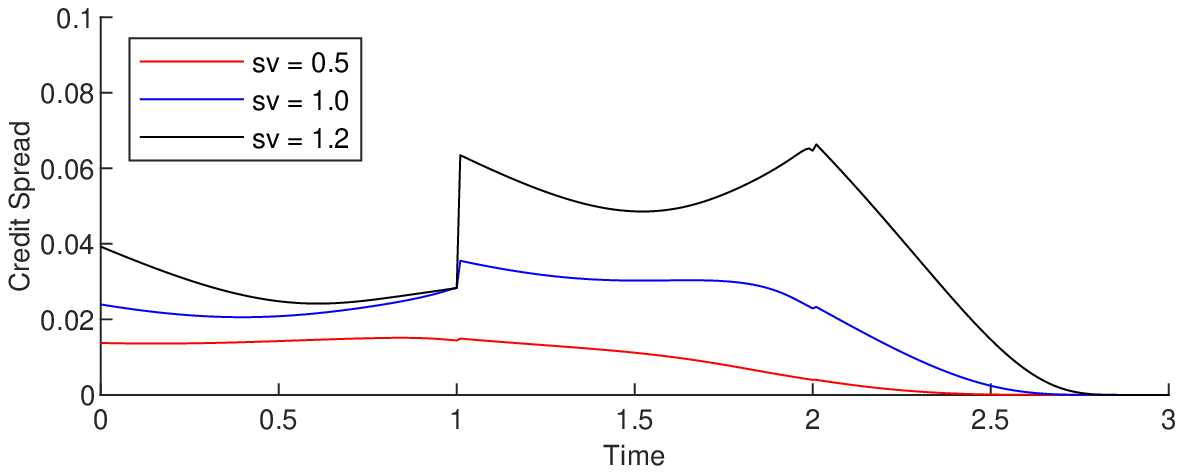}

Figure 14. $(t,\; CS)$-graphs for different volatilities

\noindent \includegraphics*[width=14cm, height=5cm, keepaspectratio=false]{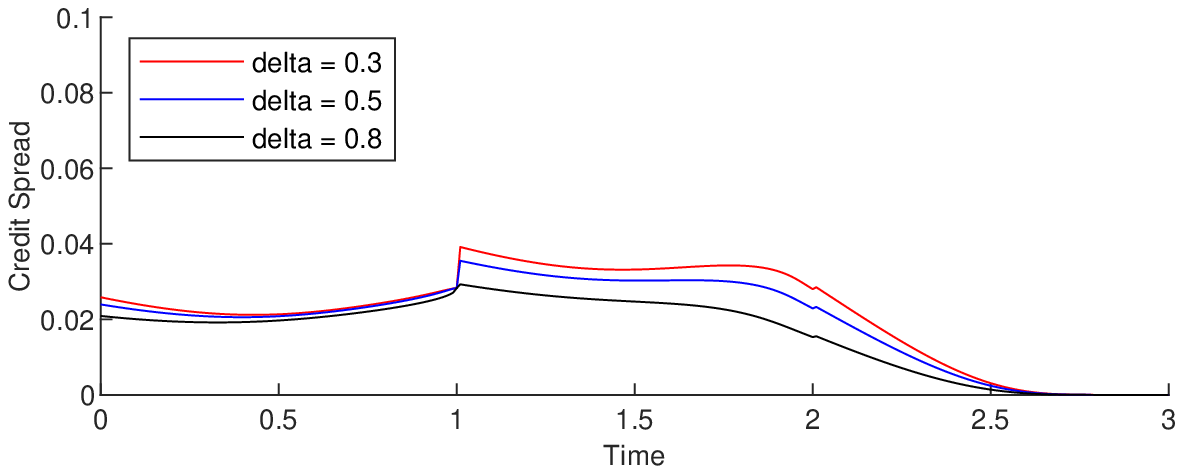}

Figure 15. $(t,\; CS)$-graphs for different recoveries

\noindent \includegraphics*[width=14cm, height=5cm, keepaspectratio=false]{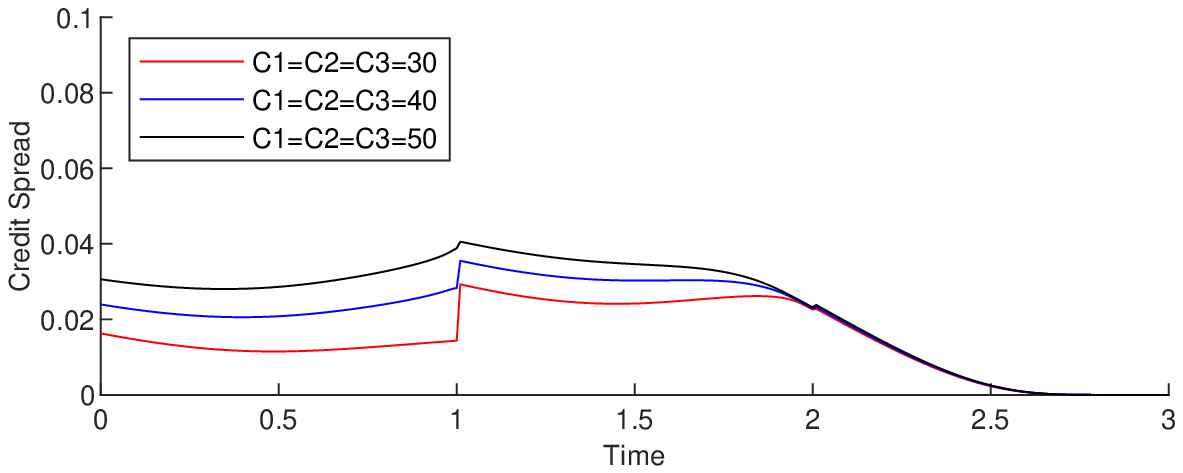}

Figure 16. $(t,\; CS)$-graphs for different coupons
\end{center}

Figure 13 shows the effect of firm value on credit spread. $(t,\; CS)$-graphs for firm values $V=5000,\; 10000,\; 15000$ are provided on time interval $[0,T]=[0,3]$. As you see in Figure 13, the credit spread is reduced when the firm value is increased. This is realated to the fact that the credit risk is reduced when the firm value becomes larger.

Figure 14 shows the effect of volatility on credit spread. $(t,\; CS)$-graphs for the volatilities $s_{V} =0.5,\; 1.0,\; 1.2$ are provided on time interval $[0,T]=[0,3]$. Here other parameters except for volatility are the same as the above and the firm value is fixed as $V=10000$. As you see in Figure 14, the credit spread is increased when the volatility becomes larger. This is realated to the fact that the credit risk becomes larger when the volatility becomes larger.

Figure 15 shows the effect of recovery rate on credit spread. $(t,\; CS)$-graphs for  recovery rates $\delta =0.3,\; 0.5,\; 0.8$ are provided on time interval $[0,T]=[0,3]$. Here other parameters except for recovery rate are the same as the above and the firm value is fixed as $V=10000$. As you see in Figure 15, the credit spread is reduced when the recovery rate is increased. This is realated to the fact that the bond price is increased when the recovery rate is increased.

Figure 16 shows the effect of coupon on credit spread.  $(t,\; CS)$-graphs for coupons $C=30,\; 40,\; 50$ are provided on time interval. Here other parameters except for coupon are the same as the above and the firm value is fixed as $V=10000$.  As you see in Figure 16, the credit spread is increased when the coupon is increased. This is realated to the fact that the credit risk becomes larger when the coupon is increased.
\section{Conclusion}
In this paper is derived the structural model of a discrete coupon bond with early redemption provision, using some analysis including min-max and gradient estimates of the bond price we studied the existence and uniqueness of default boundary and relationships between the design parameters of the bond, under some assumptions which are valid in finance we proved the existence and uniqueness of early redemption boundary and provided the analytic formula of the bond price by using higher binary options, and we gave the analysis on the duration and credit spread.

Our works provide some design guide for the discrete coupon bond with early redemption provision.

First, for the coupon bond with early redemption provision, coupons must be set up appropriately large such that \eqref{eq15} is satisfied, and if the firm does not want to pay so large coupon and thus \eqref{eq15} is not satisfied, then early redemption provision must be removed. In particular, if coupons are all the same, the face value and coupon must be set up such that \eqref{eq16} is satisfied.

\noindent Second, if the coupon is set up too large, early redemption is disadvantageous for the bond holder and the fair initial price of bond might be higher than the face value. In this case, if the firm sells the bond for the face value, then the firm may have a loss. Thus if the firm wants to sell the bond for the face value, the coupon must be set up not too large. If the firm set up the coupon too large, then the firm must set up the initial selling price of the bond higher than the face value.

\section{Appendix}
\textbf{Proof of Lemma 1.}
We use induction. First we consider the case when $i=N-1$. By the gradient estimate of Theorem \ref{theorem:1} and Corollary 2 of \cite{OJK}, we have
\[\min _{V} B_{N-1} (V,\; T_{N-1} )=B_{N-1} (0,\; T_{N-1} )=f_{N-1} (0)\cdot e^{-r(T_{N} -T_{N-1} )} .\]
On the other hand, from \eqref{eq8}, $f_{N-1} (0)=0$ and thus we have
\[\min _{V} B_{N-1} (V,\; T_{N-1} )=B_{N-1} (0,\; T_{N-1} )=0.\] 

Next in the case when $i=k$ we assume that
\[\min _{V} B_{k} (V,\; T_{k} )=B_{k} (0,\; T_{k} )=0\] 
and we will prove in the case when $i=k-1$. By the gradient estimate of Theorem \ref{theorem:1} and Corollary 2 of [20], we have
\[\min _{V} B_{k-1} (V,\; T_{k-1} )=B_{k-1} (0,\; T_{k-1} )=f_{k-1} (0)\cdot e^{-r(T_{k} -T_{k-1} )} .\]
Using induction assumption, we have
\[\max \left\{[B_{k} (V,\; T_{k} )+\bar{c}_{k} ],\; \; F-\Sigma _{j=1}^{k-1}\bar{c}_{j}  \right\}\ge [B_{k} (V,\; T_{k} )+\bar{c}_{k} ]\ge \bar{c}_{k} >0.\] 
Then from \eqref{eq9} $f_{k-1} (0)=\delta \cdot 0=0$ and thus we have
\[\min _{V} B_{k-1} (V,\; T_{k-1} )=B_{k-1} (0,\; T_{k-1} )=0.\]    $\Box$

\vspace{3mm}
\textbf{Proof of Lemma 2.} By the gradient estimate of Theorem \ref{theorem:1} and Corollary 2 of [20], we have
\[\sup _{V} \left[B_{i-1} (V,\; T_{i-1} )\right]=B_{i-1} (+\infty ,\; T_{i-1} )=f_{i-1} (+\infty )\cdot e^{-r(T_{i} -T_{i-1} )} .\] 
From \eqref{eq8} and \eqref{eq9}, we have
\[f_{i-1} (+\infty )=\; \max \left\{B_{i} (+\infty ,\; T_{i} )+\bar{c}_{i},\; F-\Sigma _{j=1}^{i-1}\bar{c}_{j}  \right\}\] 
and from \eqref{eq13}, we have
\[\sup _{V} \left[B_{i-1} (V,\; T_{i-1} )\right]\; =(F-\Sigma _{j=1}^{i-1}\bar{c}_{j}  )\cdot e^{-r(T_{i} -T_{i-1} )} .\] 
On the other hand,
\begin{align}
\sup _{V} [B_{i-1} (V,\; T_{i-1} )+\bar{c}_{i-1} ]-(F-\Sigma _{j=1}^{i-2}\bar{c}_{j}  )&=[(F-\Sigma_{j=1}^{i-1}\bar{c}_{j})\cdot e^{-r(T_{i} -T_{i-1} )} +\bar{c}_{i-1} ]-(F-\Sigma_{j=1}^{i-2}\bar{c}_{j}) \nonumber \\
&=(F-\Sigma_{j=1}^{i-1}\bar{c}_{j})[e^{-r(T_{i} -T_{i-1} )} -1].\nonumber
\end{align}
From Lemma \ref{lemma:1} and \eqref{eq13} we have $F-\sum _{j=1}^{i-1}\bar{c}_{j}  \ge 0$ and thus we have
\[\sup _{V} [B_{i-1} (V,\; T_{i-1} )+\bar{c}_{i-1} ]-(F-\Sigma_{j=1}^{i-2}\bar{c}_{j}  )\le 0.\] 
Therefore, the lemma is proved.   $\Box$

\vspace{3mm}
\textbf{Proof of Lemma 3.} We use induction. First we consider the case when $i=N-$ $1$. From the gradient estimate of Theorem \ref{theorem:1} and Corollary 2 of [20], we have
\[\sup _{V} \left[B_{N-1} (V,\; T_{N-1} )\right]=B_{N-1} (+\infty ,\; T_{k-1} )=f_{N-1} (+\infty )\cdot e^{-r(T_{N} -T_{N-1} )} .\] 
From \eqref{eq8} we have $f_{N-1} (+\infty )=\bar{c}_{N} $ and thus we have
\[B_{N-1} (+\infty ,\; T_{N-1} )=\bar{c}_{N} \cdot e^{-r(T_{N} -T_{N-1} )} .\] 
Thus the assertion of our lemma holds in this case.

Next in the case when $i=k$ we assume that
\[\sup _{V} B_{k} (V,\; T_{k} )=B_{k} (+\infty ,\; T_{k} )=\sum _{j=k+1}^{N}\left[\bar{c}_{j} e^{-r(T_{j} -T_{k} )} \right] \] 
and we will prove in the case when $i=k-1$. From the gradient estimate of Theorem \ref{theorem:1} and Corollary 2 of [20], we have
\[\sup _{V} \left[B_{k-1} (V,\; T_{k-1} )\right]=B_{k-1} (+\infty ,\; T_{k-1} )=f_{k-1} (+\infty )\cdot e^{-r(T_{k} -T_{k-1} )} .\] 
And from \eqref{eq9} we have
\[f_{k-1} (+\infty )=\max \left\{B_{k} (+\infty ,\; T_{k} )+\bar{c}_{k} {\rm ,}\; F-\Sigma_{j=1}^{k-1}\bar{c}_{j}  \right\}.\] 
Thus from Corollary \ref{corollary:1}, we have
\[\sup _{V} \left[B_{k-1} (V,\; T_{k-1} )\right]=[B_{k} (+\infty ,\; T_{k} )+\bar{c}_{k} ]\cdot e^{-r(T_{k} -T_{k-1} )} .\] 
Substituting the induction assumption of the case when $i=k$ into the above expression, we have
\[\sup _{V} [B_{k-1} (V,\; T_{k-1} )]=\left\{\bar{c}_{k} +\sum _{j=k+1}^{N}\left[\bar{c}_{j} e^{-r(T_{j} -T_{k} )} \right] \; \right\}\cdot e^{-r(T_{k} -T_{k-1} )} =\sum _{j=k}^{N}\left[\bar{c}_{j} e^{-r(T_{j} -T_{k-1} )} \right] \; .\] 
Lemma is proved.                 $\Box$

\vspace{3mm}
\textbf{Proof of Lemma 4.} From the assumption
\begin{align}
\sum _{j=m+1}^{N}\left[\bar{c}_{j} e^{r(T_{N} -T_{j} )} \right]&>(F-\Sigma_{j=1}^{m-1}\bar{c}_{j})\cdot e^{r(T_{N} -T_{m})} -\bar{c}_{m} e^{r(T_{N} -T_{m} )} \nonumber \\
&=(F-\Sigma_{j=1}^{m}\bar{c}_{j})\cdot e^{r(T_{N} -T_{m} )} >(F-\Sigma_{j=1}^{m}\bar{c}_{j})\cdot e^{r(T_{N} -T_{m-1} )}. \nonumber
\end{align}
Lemma is proved.   $\Box$

\vspace{3mm}
\textbf{Proof of Lemma 5.} The proof in the case when $i=N-1$ is the same as in Lemma \ref{lemma:3}.

Next in the case when $i=k$, we assume that
\[\sup _{V} B_{k} (V,\; T_{k} )=B_{k} (+\infty ,\; T_{k} )=\sum _{j=k+1}^{N}\left[\bar{c}_{j} e^{-r(T_{j} -T_{k} )} \right] \] 
and we will prove in the case when $i=k-1$. From the gradient estimate of Theorem \ref{theorem:1} and the Corollary 2 of [20], we have

\begin{equation}
\sup _{V} \left[B_{k-1} (V,\; T_{k-1} )\right]=B_{k-1} (+\infty ,\; T_{k-1} )=f_{k-1} (+\infty )\cdot e^{-r(T_{k} -T_{k-1} )}.  \tag{A.1} \label{A.1}
\end{equation}

\noindent On the other hand, from \eqref{eq9} we have
\[f_{k-1} (+\infty )=\max \left\{B_{k} (+\infty ,\; T_{k} )+\bar{c}_{k} {\rm ,}\; F-\Sigma_{j=1}^{k-1}\bar{c}_{j}  \right\}.\]
Substituting the above expression into \eqref{A.1} and using the induction assumption of the case of $i=k$ , we have
\begin{align}
\sup _{V} [B_{k-1} (V,\; T_{k-1} )]&=\max \left\{\sum_{j=k+1}^{N}\left[\bar{c}_{j} e^{-r(T_{j} -T_{k} )} \right] +\bar{c}_{k} ,\; F-\sum _{j=1}^{k-1}\bar{c}_{j}  \right\}\cdot e^{-r(T_{k} -T_{k-1} )} \nonumber \\
&=\max \left\{\sum _{j=k}^{N}\left[\bar{c}_{j} e^{-r(T_{j} -T_{k-1} )} \right] \; ,\; \left(F-\sum _{j=1}^{k-1}\bar{c}_{j}  \right)\cdot e^{-r(T_{k} -T_{k-1} )} \right\}\;. \nonumber
\end{align}
From Corollary \ref{corollary:2} we have
\[\sup _{V} \left[B_{k-1} (V,\; T_{k-1} )\right]=B_{k-1} (+\infty ,\; T_{k-1} )\; =\sum _{j=k}^{N}\left[\bar{c}_{j} e^{-r(T_{j} -T_{k-1} )} \right] .\]   $\Box$

\textbf{Proof of Corollary 3.} The necessity has already been mentioned in Remark \ref{remark:2}. We prove the sufficiency. From Lemma \ref{lemma:5}, \eqref{eq15} implies
\begin{equation}
\sup _{V} B_{1} (V,\; T_{1} )=\sum _{j=2}^{N}\left[\bar{c}_{j} e^{-r(T_{j} -T_{1} )} \right].                \tag{A.2} \label{A.2}
\end{equation}

\noindent Multiplying $e^{-r(T_{N} -T_{1} )} $ to the both sides of \eqref{eq15} and then rewriting it, \eqref{eq15} is equivalent to $\sum _{j=1}^{N}[\bar{c}_{j} e^{-r(T_{j} -T_{1} )}] >F$, and considering \eqref{A.2}, we have

\[\sup _{V} B_{1} (V,\; T_{1} )+\bar{c}_{1} >F.\]
$\Box$

\vspace{3mm}
\textbf{Proof of Theorem 3.} First we consider the case when $M\le i\le N-1$. In this case $B_{i} (V,\; t)$ is given by \eqref{eq22}, i.e.,
\begin{align}
B_{i} (V,\; t)&=\sum _{k=i}^{N-1}\Big[\bar{c}_{k+1} B_{D_{i+1} \cdots D_{k} D_{k+1} }^{\; +\; \; \; \cdots \; +\; \; +} (V,\; t;\; T_{i+1} ,\; \cdots ,\; T_{m+1} ;\; r,\; b,\; s_{V} )\nonumber\\
&+\delta A_{D_{i+1} \; \cdots D_{k} D_{k+1} }^{\; +\; \; \; \cdots \; \; +\; \;  -} (V,\; t;\; T_{i+1} ,\; \cdots ,\; T_{m} ,\; T_{m+1} ;\; r,\; b,\; s_{V} )\Big]. \nonumber
\end{align}
Now considering \eqref{eq27}, then the above expression can be written into
\begin{align}
B_{i} (V,\; t)&=\sum _{k=i}^{N-1}\Big[\bar{c}_{k+1} B_{U_{i+1} \cdots U_{k} U_{k+1} }^{\; +\; \; \; \cdots \; +\; +} (V,\; t;\; T_{i+1} ,\; \cdots ,\; T_{m+1} ;\; r,\; b,\; s_{V} ) \nonumber \\
&+\delta A_{U_{i+1} \; \cdots U_{k} D_{k+1} }^{\; +\; \; \; \cdots \; \; +\; \; -} (V,\; t;\; T_{i+1} ,\; \cdots ,\; T_{m} ,\; T_{m+1} ;\; r,\; b,\; s_{V} )\Big]. \nonumber
\end{align}
This can be rewritten as follows:
\begin{align}
B_{i} (V,\; t)&=\sum _{k=i}^{N-1}\big[\bar{c}_{k+1} B_{U_{i+1} \cdots U_{k} U_{k+1} }^{\; +\; \cdots \;\; +\; \; \; +} (V,\; t;\; T_{i+1} ,\; \cdots ,\; T_{k+1} ;\; r,\; b,\; s_{V})+ \nonumber \\
&+\delta A_{U_{i+1} \; \cdots U_{k} D_{k+1} }^{\; +\; \; \cdots \;\; +\; \; -} (V,\; t;\; T_{i+1} ,\; \cdots ,\; T_{k} ,\; T_{k+1} ;\; r,\; b,\; s_{V} )\big]+ \nonumber \\
&+\sum _{k=i+1}^{M}(F-\Sigma _{j=1}^{k-1}\bar{c}_{j})\cdot  1\{ D_{k} <E_{k} \} \cdot \big[B_{U_{i+1} \cdots U_{k-1} L_{k} }^{\; +\; \; \cdots \; \; +\; \; \; \; +} (V,\; t;\; T_{i+1} ,\; \cdots ,\; T_{k} ;\; r,\; b,\; s_{V} )- \nonumber\\
&-B_{U_{i+1} \cdots U_{k-1} U_{k} }^{\; +\; \; \cdots \; \; +\; \; \; \; +} (V,\; t;\; T_{i+1} ,\; \cdots ,\; T_{k} ;\; r,\; b,\; s_{V} )\big]\; ,\; \; \; T_{i} <t<T_{i+1}. \nonumber
\end{align}
Indeed, $i+1>M$and thus in the above expesssion, the term of second sum does not exist. Thus we have \eqref{eq28}.

Next we consider the case when $0\le i\le M-1$. We use induction. First we prove in the case when $i=M-1$. From \eqref{eq26}, we have
\begin{align}
B_{M-1} (V,\; t)&=\sum _{k=M-1}^{N-1}\big[\bar{c}_{k+1} B_{U_{M} \; D_{M+1} \cdots D_{k} D_{k+1} }^{\;+\; \; \; \; +\; \; \; \cdots \; \; +\; \;  +} (V,\; t;\; T_{M} ,\; \cdots ,\; T_{k+1} ;\; r,\; b,\; s_{V} ) \nonumber \\
&+\delta A_{U_{M} \; D_{M+1} \; \cdots D_{k} D_{k+1} }^{\;+\;  \; \; \; +\; \; \; \; \; \cdots \; +\; \;  -} (V,\; t;\; T_{M} ,\; T_{M+1} ,\; \cdots ,\; T_{k+1} ;\; r,\; b,\; s_{V} )\big]+ \nonumber \\
&+(F-\Sigma_{j=1}^{M-1}\bar{c}_{j})\cdot 1\{ D_{M} <E_{M} \} \cdot B_{L_{M} }^{+} (V,\; t;\; T_{M} ;\; r,\; b,\; s_{V} )- \nonumber \\
&-(F-\Sigma _{j=1}^{M-1}\bar{c}_{j})\cdot 1\{ D_{M} <E_{M} \} \cdot B_{U_{M} }^{+} \cdot (V,\; t;\; T_{M} ;\; r,\; b,\; s_{V} ). \nonumber
\end{align}
Recalling \eqref{eq27}, this can be rewritten as follows:
\begin{align}
B_{M-1} (V,\; t)&=\sum _{k=M-1}^{N-1}\big[\bar{c}_{k+1} B_{U_{M} U_{M+1} \cdots U_{k} U_{k+1} }^{\; +\; \; +\; \; \; \; \; \cdots \; +\; \; +} (V,\; t;\; T_{M} ,\; \cdots ,\; T_{k+1} ;\; r,\; b,\; s_{V} ) \nonumber \\
&+\delta A_{U_{M} U_{M+1} \cdots U_{k} D_{k+1} }^{\; +\; \; \; +\; \; \; \; \cdots \; +\; \; -} (V,\; t;\; T_{M} ,\; T_{M+1} ,\; \cdots ,\; T_{k+1} ;\; r,\; b,\; s_{V} )\big] \nonumber \\
&+(F-\Sigma _{j=1}^{M-1}\bar{c}_{j}  )\cdot 1\{ D_{M} <E_{M} \} \cdot B_{L_{M} }^{+} (V,\; t;\; T_{M} ;\; r,\; b,\; s_{V} )- \nonumber \\
&-(F-\Sigma _{j=1}^{M-1}\bar{c}_{j}  )\cdot 1\{ D_{M} <E_{M} \} \cdot B_{U_{M} }^{+} \cdot (V,\; t;\; T_{M} ;\; r,\; b,\; s_{V} ). \nonumber
\end{align}
Thus we obtained \eqref{eq28}.

Next in the case when $i=m$, we assume that 

\begin{align}
B_{m} (V,\; t)&=\sum _{k=m}^{N-1}\big[\bar{c}_{k+1} B_{U_{m+1} \cdots U_{k} U_{k+1} }^{\; +\; \; \cdots \;\; +\; \; +} (V,\; t;\; T_{m+1} ,\; \cdots ,\; T_{k+1} ;\; r,\; b,\; s_{V})+ \nonumber \\
&+\delta A_{U_{m+1} \; \cdots U_{k} D_{k+1} }^{\; +\; \; \; \cdots \;\; +\; \; -} (V,\; t;\; T_{m+1} ,\; \cdots ,\; T_{k} ,\; T_{k+1} ;\; r,\; b,\; s_{V} )\big]+ \tag{A.3} \label{A.3} \\
&+\sum _{k=m+1}^{M}(F-\Sigma _{j=1}^{k-1}\bar{c}_{j}  )\cdot  1\{ D_{k} <E_{k} \} \cdot \big[B_{U_{m+1} \cdots U_{k-1} L_{k} }^{\; +\; \; \cdots \; \; +\; \; \; \; \; +} (V,\; t;\; T_{m+1} ,\; \cdots ,\; T_{k} ;\; r,\; b,\; s_{V})- \nonumber \\
&-B_{U_{m+1} \cdots U_{k-1} U_{k} }^{\; +\; \; \cdots \; \; +\; \; \; \;\; +} (V,\; t;\; T_{m+1} ,\; \cdots ,\; T_{k} ;\; r,\; b,\; s_{V} )\big]\; ,\; \; \; T_{m} <t<T_{m+1}. \nonumber
\end{align}

\noindent and then we prove in the case when $i=m-1$. From \eqref{eq25}, 
\begin{align}
B_{m-1} (V,\; T_{m} )&=[B_{m} (V,\; T_{m} )+\bar{c}_{m} ]\cdot 1\{ V\ge U_{m} \} +(F-\Sigma _{j=1}^{m-1}\bar{c}_{j}  )\cdot 1\{ D_{m} <E_{m} \} \cdot 1\{ V\ge L_{m} \} - \nonumber\\
&-(F-\Sigma_{j=1}^{m-1}\bar{c}_{j}  )\cdot 1\{ D_{m} <E_{m} \} \cdot 1\{ V\ge U_{m} \} +\delta V\cdot 1\{ V<D_{m} \}. \nonumber
\end{align}
Substituting \eqref{A.3} into the above expression, we have
\begin{align}
B_{m-1} (V,\; T_{m})&=\bigg\{\sum _{k=m}^{N-1}\big[\bar{c}_{k+1} B_{U_{m+1} \cdots U_{k} U_{k+1} }^{\; +\; \; \cdots \;\; +\; \; +} (V,\; T_{m} ;\; T_{m+1} ,\; \cdots ,\; T_{k+1} ;\; r,\; b,\; s_{V} )+ \nonumber \\
&+\delta A_{U_{m+1} \; \cdots U_{k} D_{k+1} }^{\; +\; \; \; \cdots \;\; +\; \; -} (V,\; T_{m} ;\; T_{m+1} ,\; \cdots ,\; T_{k} ,\; T_{k+1} ;\; r,\; b,\; s_{V} )\big]+ \nonumber \\
&+\sum _{k=m+1}^{M}(F-\Sigma _{j=1}^{k-1}\bar{c}_{j}  )\cdot  1\{ D_{k} <E_{k} \} \cdot \big[B_{U_{m+1} \cdots U_{k-1} L_{k} }^{\; +\; \; \cdots \; \; +\; \; \; \;\; +} (V,\; T_{m} ;\; T_{m+1} ,\; \cdots ,\; T_{k} ;\; r,\; b,\; s_{V})- \nonumber \\
&-B_{U_{m+1} \cdots U_{k-1} U_{k} }^{\; +\; \; \cdots \; \; +\; \; \; \; \;+} (V,\; T_{m} ;\; T_{m+1} ,\; \cdots ,\; T_{k} ;\; r,\; b,\; s_{V} )\big]\bigg\}\cdot 1\{ V\ge U_{m} \} + \nonumber \\
&+(F-\Sigma _{j=1}^{m-1}\bar{c}_{j}  )\cdot 1\{ D_{m} <E_{m} \} \cdot 1\{ V\ge L_{m} \} -(F-\Sigma _{j=1}^{m-1}\bar{c}_{j}  )\cdot 1\{ D_{m} <E_{m} \} \cdot 1\{ V\ge U_{m} \}+ \nonumber \\
&+\bar{c}_{m} \cdot 1\{ V\ge U_{m} \} +\delta V\cdot 1\{ V<D_{m} \} . \nonumber
\end{align}
Using the pricing formula of higher order binary option \cite{OK}, we have
\begin{align}
B_{m-1} (V,\; t)&=\sum _{k=m}^{N-1}\big[\bar{c}_{k+1} B_{U_{m} U_{m+1} \cdots U_{k} U_{k+1} }^{\;+\; \; +\; \; \; \; \; \cdots  +\; \; +} (V,\; t;\; T_{m} ,\; T_{m+1} ,\; \cdots ,\; T_{k+1} ;\; r,\; b,\; s_{V} )+ \nonumber \\
&+\delta A_{U_{m} U_{m+1} \; \cdots U_{k} D_{k+1} }^{\;+ \; +\; \; \; \; \; \; \cdots \; +\; \; -} (V,\; t;\; T_{m} ,\; T_{m+1} ,\; \cdots ,\; T_{k} ,\; T_{k+1} ;\; r,\; b,\; s_{V} )\big]+ \nonumber \\
&+\sum _{k=m+1}^{M}(F-\Sigma _{j=1}^{k-1}\bar{c}_{j}  )\cdot  1\{ D_{k} <E_{k} \} \cdot \big[B_{U_{m} U_{m+1} \cdots U_{k-1} L_{k} }^{\;+ \; \; +\; \; \; \; \; \cdots \; +\; \; \;\; -} (V,\; t;\; T_{m} ,\; T_{m+1} ,\; \cdots ,\; T_{k} ;\; r,\; b,\; s_{V} )- \nonumber \\
&-B_{U_{m} U_{m+1} \cdots U_{k-1} U_{k} }^{\;+\; \; +\; \; \; \; \; \cdots \; +\; \; \; \;-} (V,\; t;\; T_{m} ,\; T_{m+1} ,\; \cdots ,\; T_{k} ;\; r,\; b,\; s_{V} )\big]+ \nonumber \\
&+(F-\Sigma _{j=1}^{m-1}\bar{c}_{j}  )\cdot 1\{ D_{m} <E_{m} \} \cdot B_{L_{m} }^{+} (V,\; t;\; T_{m} ;\; r,\; b,\; s_{V} )- \nonumber \\
&-(F-\Sigma _{j=1}^{m-1}\bar{c}_{j}  )\cdot 1\{ D_{m} <E_{m} \} \cdot B_{U_{m} }^{+} \cdot (V,\; t;\; T_{m} ;\; r,\; b,\; s_{V} )+ \nonumber \\
&+\bar{c}_{m} B_{U_{m} }^{+} (V,\; t;\; T_{m} ;\; r,\; b,\; s_{V} )+\delta A_{D_{m} }^{-} \cdot (V,\; t;\; T_{m} ;\; r,\; b,\; s_{V} ). \nonumber
\end{align}
Here if we make the last 4 terms outside of $\sum $ put into the corresponding sum terms respectively, we have
\begin{align}
B_{m-1} (V,\; t)&=\sum _{k=m-1}^{N-1}\big[\bar{c}_{k+1} B_{U_{m} \cdots U_{M} D_{M+1} \cdots D_{k+1} }^{\;  +\; \cdots \;  +\; \; \; +\; \; \; \;\; \cdots\;  +} (V,\; t;\; T_{m} ,\; \cdots ,\; T_{k+1} ;\; r,\; b,\; s_{V})+ \nonumber \\
&+\delta A_{U_{m} \; \cdots U_{M} D_{M+1} \cdots D_{k+1} }^{ \; +\; \; \cdots \; +\; \; \; +\; \; \;\;\; \cdots \;  -} (V,\; t;\; T_{m} ,\; \cdots ,\; T_{k} ,\; T_{k+1} ;\; r,\; b,\; s_{V} )\big]+ \nonumber \\
&+\sum _{k=m}^{M}(F-\Sigma _{j=1}^{k-1}\bar{c}_{j}  )\cdot  1\{ D_{k} <E_{k} \} \cdot \big[B_{U_{m} \cdots \; U_{k-1} L_{k} }^{\; +\; \; \cdots \; +\; \; \;\; +} (V,\; t;\; T_{m} ,\; \cdots ,\; T_{k} ;\; r,\; b,\; s_{V})- \nonumber \\
&-B_{U_{m} \cdots \; U_{k-1} U_{k} }^{\; +\; \; \cdots \; +\; \; \;\;\; +} (V,\; t;\; T_{m} ,\; \cdots ,\; T_{k} ;\; r,\; b,\; s_{V} )\big]. \nonumber
\end{align}
$\Box$

\pagebreak

\end{document}